\def\dOi{10(1:2)2014}
\subjclass{G.2.2 Graph Theory---Hypergraphs, F.4.1 Mathematical
  Logic---Model Theory.}  
\newcommand*{\MSO}{\mathrm{MSO}}
\newcommand*{\CMSO}{\mathrm{CMSO}}
\newcommand*{\edg}{\mathrm{edg}}
\newcommand*{\cwd}{\qopname\relax o{cwd}}
\newcommand*{\MTh}{\mathrm{MTh}}
\newcommand*{\qr}{\mathrm{qr}}
\newcommand*{\WD}{\mathrm{wd}}
\newcommand*{\SWD}{\mathrm{swd}}
\newcommand*{\lex}{\mathrm{lex}}
\newcommand*{\Pred}{\mathrm{Pred}}
\newcommand*{\Sep}{\mathrm{Sep}}
\newcommand*{\SEP}{\mathsf{SEP}}
\newcommand*{\Cut}{\mathrm{Cut}}
\newcommand*{\CUT}{\mathsf{CUT}}
\newcommand*{\Del}{\mathrm{Del}}
\newcommand*{\Inc}{\mathrm{Inc}}
\newcommand*{\IS}{\mathrm{IS}}
\newcommand*{\emptyseq}{\langle\rangle}
\newcommand*{\?}{\kern .08em}
\newcommand*{\nmodels}{\not\models}
\newcommand*{\nsqsubseteq}{\not\sqsubseteq}
\let\lso\rightarrow
\let\liff\leftrightarrow
\let\Land\bigwedge
\let\Lor\bigvee
\newcommand\PSet{\mathcal{P}}
\newcommand\frakA{\mathfrak{A}}
\newcommand\frakB{\mathfrak{B}}
\newcommand\frakC{\mathfrak{C}}
\newcommand\calC{\mathcal{C}}
\newcommand\calK{\mathcal{K}}
\newcommand\calP{\mathcal{P}}
\newcommand\calT{\mathcal{T}}
\newcommand\bbN{\mathbb{N}}
\newcommand*{\lset}{\{\,}
\newcommand*{\mset}{\mathrel|}
\newcommand*{\rset}{\,\}}
\newcommand*{\biglset}{\bigl\{\,}
\newcommand*{\bigmset}{\bigm|}
\newcommand*{\bigrset}{\,\bigr\}}
\newcommand*{\abs}[1]{\lvert#1\rvert}
\newcommand*{\bigabs}[1]{\bigl\lvert#1\bigr\rvert}
\newcommand*{\set}[2]{\lset#1\mset#2\rset}
\newcommand*{\bigset}[2]{\biglset#1\bigmset#2\bigrset}
\newcommand*{\qtextq}[1]{\quad\text{#1}\quad}
\newcommand*\defiff{\mathrel{{:}{\Longleftrightarrow}}}
\renewcommand*\iff{\Longleftrightarrow}
\newcommand\closingmark{\hfill\hbox{}\hspace*{0pt plus 1fill}$\Diamond$}
\newcommand\upqed{\vskip-\baselineskip\vskip-\belowdisplayskip}
\newcommand\prefixtext[1]{%
  \ifvmode\else\\\@empty\fi
  \noalign{%
    \penalty0%
    \vbox{\mathstrut}%
    \penalty10000%
    \vskip-\baselineskip
    \penalty10000%
    \vbox to 0pt{%
      \normalbaselines
      \ifdim\linewidth=\columnwidth
      \else
        \parshape\@ne
        \@totalleftmargin\linewidth
      \fi
      \vss
      \noindent#1\par}%
      \penalty10000%
      \vskip-\baselineskip}%
      \penalty10000}
\begin{document}
\title[Monadic second-order definable graph orderings]
  {Monadic second-order definable graph orderings}
\author[A.~Blumensath]{Achim Blumensath\rsuper a}
\address{{\lsuper a}TU Darmstadt}
\email{blumensath@mathematik.tu-darmstadt.de}
\thanks{{\lsuper a}Work partially supported by DFG grant BL~1127/2-1.}
\author[B.~Courcelle]{Bruno Courcelle\rsuper b}
\address{{\lsuper b}Labri, Bordeaux University, Honorary member of Institut Universitaire
  de France}
\email{courcell@labri.fr}

\keywords{Monadic second-order logic, Definability, Linear orders}

\begin{abstract}
We study the question of whether, for a given class of finite graphs,
one can define, for each graph of the class,
a linear ordering in monadic second-order logic,
possibly with the help of monadic parameters.
We consider two variants of monadic second-order logic\?: one where
we can only quantify over sets of vertices and one where we can also
quantify over sets of edges.
For several special cases, we present combinatorial characterisations of when
such a linear ordering is definable.
In some cases, for instance for graph classes that omit a fixed graph as a
minor, the presented conditions are necessary and sufficient\?;
in other cases, they are only necessary.
Other graph classes we consider include
complete bipartite graphs, split graphs, chordal graphs, and cographs.
We prove that orderability is decidable for the so called HR-equational
classes of graphs, which are described by equation systems and
generalize the context-free languages.
\end{abstract}

\maketitle

\section{Introduction}

When studying the expressive power of monadic second-order logic ($\MSO$) for
finite graphs, often the question arises of whether one can define a linear
order on the vertex set.
For instance, the property that a set has even cardinality cannot, in general,
be expressed in $\MSO$. If, however, the considered set is linearly ordered,
we can write down a corresponding $\MSO$-formula.
The same holds for every predicate $\mathrm{Card}_q(X)$ expressing that
the cardinality of the set~$X$ is a multiple of~$q$.
It follows that the extension of $\MSO$
by the predicates $\mathrm{Card}_q(X)$,
called \emph{counting monadic second-order logic} ($\CMSO$),
is no more powerful than $\MSO$ on every class of structures
on which a linear order is $\MSO$-definable.

Another example of a situation where the availability of a linear order
facilitates certain logical constructions
is the definability of graph decompositions
such as the modular decomposition of a graph.
It is shown in~\cite{Courcelle96b} that the modular decomposition
of a graph is definable in $\MSO$ if the graph is equipped with a linear order.
Finally,\footnote{Yet another example is the construction of (a
combinatorial description of) a plane embedding of a connected planar graph.
Such embeddings are definable in $\MSO$
if we can order the neighbours of each vertex (see \cite{Courcelle00}).
For $3$-connected graphs such an ordering is always definable,
but for graphs that are not $3$-connected this is not always the case.}
although we will not address complexity questions in this article,
we recall that, over linearly ordered structures, the
complexity class PTIME is captured by
least fixed-point logic~\cite{EbbinghausFlum95,Immerman99}.

A~formula $\varphi(x,y)$ with two free first-order variables $x$~and~$y$
\emph{defines a (linear) order} on a relational structure~$\frakA$
if the binary relation consisting of all pairs $(a,b)$ of elements of~$\frakA$
satisfying $\frakA \models \varphi(a,b)$ is a linear order on~$A$.
We say that $\varphi(x,y)$ \emph{defines an order on a class of structures}
if it defines a linear order on each structure of that class.
Our objective is to provide combinatorial characterisations of classes
of finite graphs whose representing structures are \emph{$\MSO$-orderable,}
i.e., on which one can define an order by an $\MSO$-formula.
(The question of whether a \emph{partial} order is definable is trivial
since equality is a partial order.
Therefore, we only consider linear orders in this article.)

As defined above the notion of an $\MSO$-orderable class is too restrictive.
To get interesting results, we allow in the above definitions formulae with
\emph{parameters.}
That is, we take a formula $\varphi(x,y;\bar Z)$ with additional free
set variables $\bar Z = \langle Z_0,\dots,Z_{n-1}\rangle$ and,
for each structure~$\frakA$ in the given class, we choose values
$P_0,\dots,P_{n-1} \subseteq A$ for these variables such that the binary
relation
\begin{align*}
  \set{ (a,b) }{ \frakA \models \varphi(a,b;\bar P) }
\end{align*}
is a linear order on~$A$.

There is no $\MSO$-formula (even with parameters) that defines a linear order
on all finite graphs.
An easy way to see this is to observe that every ordered structure
is \emph{rigid,} i.e., that it has no non-trivial automorphism.
Since we can find graphs that are not rigid,
even after labelling them with a fixed number of parameters,
it follows that no formula can order all graphs.
The same argument shows that the following classes of finite graphs
are not $\MSO$-orderable\?:
(1) graphs without edges\?;
(2) cliques\?;
(3) stars\?;
(4) trees of a fixed height\?; and
(5) bipartite graphs.
On the other hand, to take an easy example,
the class of all finite connected graphs
of degree at most~$d$ (for fixed~$d$) is $\MSO$-orderable. 

If graphs are replaced by their incidence graphs, $\MSO$-formulae become more
powerful, because they can use quantifications over sets of edges.
In this case we speak of $\MSO_2$-orderable classes.
Otherwise, we call the class $\MSO_1$-order\-able.
Due to the greater expressive power,
the family of $\MSO_2$-orderable classes properly includes that of
$\MSO_1$-orderable ones.
This means that, in the combinatorial characterisations presented below,
the conditions for $\MSO_1$-order\-ability must be stronger than those for
$\MSO_2$-orderability.
For instance, the class of all cliques is $\MSO_2$-orderable
but not $\MSO_1$-orderable.

There are simple combinatorial criteria showing that a class is not
$\MSO$-order\-able. For instance, a class of trees is not $\MSO$-orderable
if the degree of vertices is unbounded. The reason is that an
$\MSO$-formula can only distinguish between a bounded number of neighbours
of a vertex. If the number of neighbours is too large, we can swap
two of the attached subtrees without affecting the truth value of the formula.
Generalising this example, we obtain the following criterion for
$\MSO_2$-orderability\?: if a class~$\calC$ is $\MSO_2$-orderable,
there exists a function~$f$ such that, whenever we remove~$k$ vertices
from a graph in~$\calC$, the resulting graph has at most $f(k)$
connected components (Proposition~\ref{Prop: Sep boundend}).

In many cases, it turns out that this necessary condition is also sufficient.
For instance, we will prove in
Theorem~\ref{Thm: GSO-orderable with excluded minor}
below that a class of graphs omitting some graph as a minor
is $\MSO_2$-orderable if, and only if, it has the above property.

This article is organised as follows.
Sections \ref{Sect: preliminaries}~and~\ref{Sect: general orderability}
introduce notation and basic definitions.
The main part consists of Sections \ref{Sect: MSO2}~and~\ref{Sect: MSO},
which collect our results on, respectively,
$\MSO_2$-orderability and $\MSO_1$-orderability.

For $\MSO_2$-orderability, we present a necessary condition
in Section~\ref{SSect: MSO2-necessary}.
We prove that this condition is also sufficient for
trees (Theorem~\ref{Thm: orderability of trees}) and, more generally,
for classes of graphs omitting some graph as a minor
(Theorem~\ref{Thm: GSO-orderable with excluded minor}).
For some classes of bipartite graphs and of split graphs,
we obtain a similar result,
using a slightly stronger combinatorial condition
(Theorems \ref{Thm: GSO-orderable d-partite graphs}~and~%
\ref{Thm: orderability of split graphs}).
Furthermore, we prove that some classes are not $\MSO_2$-orderable
in a very strong sense\?: they contain no infinite subclass that
is $\MSO_2$-orderable. This is the case for
trees of bounded height
(Corollary~\ref{Cor: trees of bounded depth hereditarily MSO2-unorderable})
and graphs of bounded $n$-depth tree-width
(Proposition~\ref{Prop: n-depth tree-width hereditarily MSO2-unorderable}).
Finally, we also prove that, for certain effectively presented classes of
graphs, $\MSO_2$-orderability is decidable
(Corollary~\ref{Cor: orderability decidable}).

For $\MSO_1$-orderability the picture we obtain is slightly more sketchy.
We present a necessary condition for $\MSO_1$-orderability
in Section~\ref{Sect: MSO orders}.
We prove that it is also sufficient for cographs
(Theorem~\ref{Thm: cographs}) and
graphs of bounded $n$-depth $\otimes$-width
(Theorem~\ref{Thm: characterisation of orderable graphs of bounded otimes-width}).

Finally, we consider reductions between orderability properties
in Section~\ref{Sect: reductions}.
We show that, for split graphs and bipartite graphs,
the question of $\MSO_i$-orderability is as hard as for arbitrary graphs.
This indicates that we are far from having a
combinatorial characterisation of orderability for such classes.

\section{Preliminaries}   
\label{Sect: preliminaries}

Let us fix our notation and terminology.
We write $[n] := \{0,\dots,n-1\}$, for $n \in \bbN$.
We denote tuples $\bar a = \langle a_0,\dots,a_{n-1}\rangle$ with a bar.
The empty tuple is $\emptyseq$.
We write $A \mathbin\Delta B$ for the symmetric difference of
two sets $A$~and~$B$.
We denote partial orders by symbols like $\leq$~and~$\preceq$,
and the corresponding strict partial orders by $<$~and~$\prec$, respectively.

\subsection{Structures and graphs}
\label{Sect: structures and graphs}

In this article we consider only purely relational structures
$\frakA = \langle A,R^\frakA_0,\dots,R^\frakA_{n-1}\rangle$
with finite signatures $\Sigma = \{R_0,\dots,R_{n-1}\}$.
The universe~$A$ will always be finite, and we allow it to be empty
as this convention is common in graph theory.
In some places we will also allow relational structures with constants,
but when doing so we will always mention it explicitly.
For a relation~$R$ and a set~$X$, we write $R \restriction X$ for the restriction
of~$R$ to~$X$. For a tuple $\bar R$ of relations, we denote by $\bar R \restriction X$
the corresponding tuple of restrictions.

We will mainly consider graphs instead of arbitrary relational
structures. For basic notions of graph theory, we refer the reader
to the book~\cite{Diestel10}.
In this article, graphs will always be finite, simple, loop-free, and
undirected, with the exception of rooted trees and forests,
which we consider to be oriented (see below).
We will denote the edge between vertices $u$~and~$v$ by $(u,v)$.
Note that the same edge can also be written as $(v,u)$.
There are two ways to represent a graph $G = \langle V,E\rangle$ by a structure.
Both of them will be used.
We can use structures of the form $\lfloor G\rfloor := \langle V,\edg\rangle$
where the universe~$V$ consists of the set of vertices and we have a binary
edge relation $\edg \subseteq V \times V$,
or we can use structures of the form
$\lceil G\rceil := \langle V \cup E,\mathrm{inc}\rangle$
where the universe contains both, the vertices and the (undirected) edges
of the graph and we have a binary incidence relation
$\mathrm{inc} \subseteq V \times E$ telling us which vertices belong to which
edges.
If $\calC$~is a class of graphs, we denote the corresponding classes of
relational structures by $\lfloor\calC\rfloor$ and $\lceil\calC\rceil$,
respectively.

Forests will always be rooted and directed in such a way that
every edge is oriented away from the root.
The \emph{tree-order} associated with a forest~$F$
is the partial order defined by
\begin{align*}
  x \preceq_F y \quad\defiff\quad
  \text{some path from a root to $y$ contains } x\,.
\end{align*}
If $x \prec y$, we call $x$~a \emph{predecessor} of~$y$ and
$y$~a \emph{successor} of~$x$.
We speak of \emph{immediate predecessors} and \emph{immediate successors}
if there is no vertex in between.
The \emph{$n$-th level} of a forest~$F$ consists of all vertices at
distance~$n$ from some root. Hence, the roots form level~$0$.
The \emph{height} of~$F$ is the maximal level of its vertices.

\begin{defi}\label{Def: sparse}
A graph $G = \langle V,E\rangle$ is \emph{$r$-sparse}\footnote{In~\cite{CourcelleEngelfriet12}
such graphs are called \emph{uniformly $r$-sparse.}}
if, for every subset $X \subseteq V$, we have
$\bigabs{E \restriction X} \leq r\cdot\abs{X}$.
\closingmark\end{defi}

We denote by $\frakA \oplus \frakB$ the disjoint union of the structures
$\frakA$~and~$\frakB$.
For structures $\lfloor G\rfloor$~and~$\lfloor H\rfloor$ encoding graphs,
we also use a dual operation $\lfloor G\rfloor \otimes \lfloor H\rfloor$ that,
after forming the disjoint union of $\lfloor G\rfloor$~and~$\lfloor H\rfloor$,
adds all possible edges connecting a vertex of~$G$ to a vertex of~$H$.
For a set $S \subseteq A$ of elements, we write
$\frakA[S]$ for the substructure of~$\frakA$ induced by~$S$
and $\frakA - S$ for $\frakA[A - S]$.
We use the analogous notation $G[S]$ and $G - S$, for graphs~$G$.

We assume that the reader is familiar with the notion of a tree decomposition
and the tree-width of a graph
(see, e.g., \cite{Diestel10,CourcelleEngelfriet12}).
At a few places, we will refer to a variant of tree-width, called
\emph{$n$-depth tree-width,} that was introduced
in~\cite{BlumensathCourcelle10}.
It is defined in terms of tree decompositions where the height of the
index tree is at most~$n$.

Finally, we will employ tools related to the notion of \emph{clique-width,}
which is defined for graphs with \emph{ports} in a finite set~$[k]$,
that is, graphs $G = \langle V,E,\pi\rangle$ equipped with a function
$\pi : V \to [k]$.
We say that a vertex $a \in V$ \emph{has port label~$a$} if $\pi(v) = a$.
The notion of clique-width is defined in terms of the following operations
on graphs\footnote{For a detailed discussion of concrete graphs
versus graphs defined up to isomorphism, see Section~2.2
of~\cite{CourcelleEngelfriet12}}
with ports\?:
\begin{itemize}
\item for each $a \in [k]$, a constant~$a$ denoting the graph with
  a single vertex that has port label~$a$\?;
\item the disjoint union~$\oplus$ of two graphs with ports\?;
\item the edge addition operation $\mathrm{add}_{a,b}$, for $a,b \in [k]$,
  adding all edges between some vertex with port label~$a$
  and some vertex with port label~$b$ that do not already exist\?;
\item the port relabelling operation $\mathrm{relab}_h$, for $h : [k] \to [k]$,
  changing each port label~$a$ to the port label $h(a)$.
\end{itemize}
Each term using these operations defines a graph with ports in~$[k]$.
The clique-width of a graph $G = \langle V,E\rangle$ is the least number~$k$
such that, for some function $\pi : V \to [k]$,
there exists a term denoting $\langle G,\pi\rangle$
(for details cf.~\cite{CourcelleEngelfriet12,CourcelleEngelfrietRozenberg93,CourcelleOlariu00}).
We denote the clique width of~$G$ by $\cwd(G)$.

\subsection{Monadic second-order logic}

Monadic second-order logic ($\MSO$)\footnote{There is also
counting monadic second-order logic ($\CMSO$) which extends $\MSO$
by set predicates of the form $\mathrm{Card}_q(X)$ expressing that
the cardinality of~$X$ is a multiple of~$q$.
Although our results are stated and proved for $\MSO$,
they also hold for $\CMSO$\?: the technical core of our proofs
is the composition theorem which holds for $\CMSO$ as well.
We currently do not have an example of a class of structures that is
$\CMSO$-orderable but not $\MSO$-orderable, but it seems likely that such
classes do exist.}
is the extension of first-order logic
by set variables and quantifiers over such variables.
The \emph{quantifier-rank} $\qr(\varphi)$ of an $\MSO$-formula~$\varphi$ is the maximal number
of nested quantifiers in~$\varphi$, where we count both, first-order and second-order quantifiers.
The \emph{monadic second-order theory} of quantifier rank~$h$ of a structure~$\frakA$ is
the set of all $\MSO$-formulae of quantifier rank~$h$ satisfied by~$\frakA$.
We denote it by $\MTh_h(\frakA)$.
Frequently, we are interested not in the theory of the structure~$\frakA$ itself,
but in the theory of an expansion $\langle\frakA,\bar P,\bar a\rangle$ by unary
predicates~$\bar P$ and constants~$\bar a$. In this case we write
$\MTh_h(\frakA,\bar P,\bar a)$ omitting the brackets.
Note that such situations are the only ones in which
we allow constants in structures.

Let us remark that, for a fixed signature and a given maximal quantifier-rank,
there are only finitely many formulae up to logical equivalence.
Furthermore, we can effectively compute an upper bound on the number of
classes and there exists an effective normal form for formulae.
However, since equivalence of formulae is undecidable,
this normal form does not represent logical equivalence.
Hence, some equivalence classes contain several formulae in normal form.
Details can be found, e.g., in Section~5.6 of~\cite{CourcelleEngelfriet12}.
In particular, it follows that, for every $h \in \bbN$,
there are only finitely many theories of quantifier-rank~$h$
and we can represent each such theory by the finite set of formulae
in normal form it contains.
A~detailed calculation shows that the number of such theories is
roughly $\exp_h(n)$ where
\begin{align*}
  \exp_0(n) := n
  \qtextq{and}
  \exp_{k+1}(n) := 2^{\exp_k(n)}
\end{align*}
and the number~$n$ only depends on the signature,
but not on the quantifier-rank~$h$.
Recall that a function $f : \bbN \to \bbN$ is \emph{elementary} if it is
bounded from above by a function of the form $\exp_k$, for some
fixed~$k \in \bbN$.
Furthermore, it follows that we can construct, for each theory~$\Theta$ of
quantifier-rank~$h$, a single formula~$\chi_\Theta$ that is equivalent to it,
i.e., such that
\begin{align*}
  \frakA \models \chi_\Theta \quad\iff\quad
  \MTh_h(\frakA) = \Theta\,.
\end{align*}
In fact, $\chi_\Theta$~is just the conjunction of all formulae in normal
form contained in~$\Theta$. For this reason we will also denote it by
$\Land\Theta$.

Let $\varphi(\bar x,\bar Y;\bar Z)$ be an $\MSO$-formula
with free first-order variables~$\bar x$ and free second-order variables $\bar Y,\bar Z$.
Given a structure~$\frakA$ and sets $P_i \subseteq A$,
we can assign the values~$\bar P$ to the variables~$\bar Z$.
This way we obtain a formula $\varphi(\bar x,\bar Y;\bar P)$ with partially assigned variables.
The values $\bar P$ are called the \emph{parameters} of this formula.
The \emph{relation defined} by a formula $\varphi(\bar x;\bar P)$ in a structure~$\frakA$
is the set
\begin{align*}
  \varphi(\bar x;\bar P)^\frakA := \set{ \bar a }{ \frakA \models \varphi(\bar a;\bar P) }\,.
\end{align*}

One important tool to compute monadic theories is the so-called Composition
Theorem (see, e.g,
\cite{Makowsky04,BlumensathColcombetLoeding07,CourcelleEngelfriet12}),
which allows one to compute the theory of a structure composed from smaller
parts from the theories of these parts.
There are several variants of the Composition Theorem.
We will employ the following version.
\begin{defi}
Let $\frakA_0,\dots,\frakA_{m-1}$ be structures and let
$\bar a^i = \langle a^i_0,\dots,a^i_{n-1}\rangle \in A_i^n$ be $n$-tuples,
for $i < m$.
The \emph{amalgamation} of the structures~$\frakA_i$ over the parameters~$\bar a^i$
is the structure~$\langle\frakA',\bar a'\rangle$ obtained from the disjoint union
$\frakA_0 \oplus\dots\oplus \frakA_{m-1}$ by,
for every $k < n$, merging the elements $a^0_k,\dots,a^{m-1}_k$ into a single element~$a'_k$.
The tuple $\bar a' = \langle a'_0,\dots,a'_{n-1}\rangle$ consists
of the elements resulting from the merging.
\closingmark\end{defi}
\begin{thm}[Composition Theorem]\label{Thm: composition}
Let $\frakA_0,\dots,\frakA_{m-1}$ be structures
and, for $i < m$, let $\bar a_i \in A_i^n$ be $n$-tuples
and $\bar c_i \in A_i^{l_i}$ $l_i$-tuples.
Let $\langle\frakA',\bar a'\rangle$ be the amalgamation of
the structures~$\frakA_i$ over~$\bar a_i$.
Then
\begin{align*}
  \MTh_h(\frakA',\bar a'\bar c_0\dots\bar c_{m-1})
\end{align*}
is uniquely determined by the theories
\begin{align*}
  \MTh_h(\frakA_0,\bar a_0\bar c_0),\dots,
  \MTh_h(\frakA_{m-1},\bar a_{m-1}\bar c_{m-1})\,.
\end{align*}
Furthermore, the function mapping these theories to the theory of the
amalgamation is computable.
\end{thm}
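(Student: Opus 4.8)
The plan is to prove the Composition Theorem by a standard Ehrenfeucht–Fraïssé / Hintikka-formula argument, showing that the quantifier-rank-$h$ theory of the amalgamation is determined by the corresponding theories of the components. First I would recall that, as noted in the preliminaries, for a fixed signature and quantifier rank~$h$ there are only finitely many $\MSO$-theories $\MTh_h$, each represented by its characteristic formula $\Land\Theta$. The key is therefore to exhibit, for each tuple of component theories, a procedure that computes the theory of the amalgamation; uniqueness then follows from the fact that the characteristic formulae $\chi_\Theta$ completely pin down $\MTh_h$.

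The core of the argument is an Ehrenfeucht–Fraïssé game played on two amalgamations $\langle\frakA',\bar a'\bar c_0\cdots\bar c_{m-1}\rangle$ and $\langle\frakB',\bar b'\bar d_0\cdots\bar d_{m-1}\rangle$ whose components satisfy, piecewise, the same quantifier-rank-$h$ theories, i.e.\ $\MTh_h(\frakA_i,\bar a_i\bar c_i) = \MTh_h(\frakB_i,\bar b_i\bar d_i)$ for each $i<m$. The plan is to show that Duplicator wins the $h$-round $\MSO$ game on the two amalgamations. The decisive observation is that the amalgamated structure decomposes as a disjoint union of the components glued only along the finitely many merged parameter elements~$\bar a'$ (respectively~$\bar b'$): since the signature is purely relational and the only shared points are the parameters, there are no relational tuples straddling two distinct components except those lying entirely within the glued parameter set. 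This locality is what makes the theories compose.

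The strategy for Duplicator is to play each component game independently, using the assumed winning strategies for the $i$-th component (which exist because $\frakA_i$ and $\frakB_i$ agree on $\MTh_h$ with the parameters $\bar a_i\bar c_i$ and $\bar b_i\bar d_i$ marked). When Spoiler places a first-order pebble in some component, Duplicator responds within the image of that component; when Spoiler chooses a set (a monadic quantifier), Duplicator splits the set into its intersections with the individual components, answers each piece using the component strategy, and takes the union of the responses. The merged parameter elements are handled consistently because they are already marked as constants and thus pebbled identically across all components from the outset. A routine verification shows that the resulting global correspondence is a partial isomorphism respecting the edge/incidence relations, precisely because no relation links two components except through the identified parameters.

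The main obstacle, and the step requiring genuine care, is the treatment of the monadic (set) moves at the amalgamation points: one must check that splitting a chosen set along components and recombining the responses yields a well-defined set on the other side that agrees on the glued elements $\bar a'$ and does not create spurious relational links. Making the bookkeeping precise — ensuring the number of rounds and the marked parameters match up so that the component sub-games are genuinely $h$-round games on the marked structures — is where the argument must be handled rigorously, though conceptually it is the familiar ``sum of structures'' composition lemma. Finally, the computability claim follows by inspecting this argument constructively: since the map from component Hintikka formulae to the amalgamation's Hintikka formula is defined by a finite case distinction over the finitely many theories, it can be tabulated effectively.
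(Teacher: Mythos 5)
The paper does not actually prove this theorem: it is stated as a known result with pointers to the literature (Makowsky's survey, Blumensath--Colcombet--L\"oding, and Courcelle--Engelfriet), so there is no in-paper argument to compare against. Your Ehrenfeucht--Fra\"iss\'e route is the standard proof found in those references, and the mathematical core is right: the amalgamation is a disjoint union glued only along the marked tuple $\bar a'$, no relational tuple straddles two components except inside the glued set, Duplicator plays the component strategies in parallel, and the set moves are split and recombined consistently because the glued elements are constants of each component game and hence their membership in the chosen set is forced to match on both sides. The one place where your argument is thinner than it should be is the computability claim. That there are only finitely many quantifier-rank-$h$ theories guarantees that the composition \emph{function} exists as a finite table, but not that you can compute its entries; ``finite case distinction'' is an existence statement, not an algorithm. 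To get effectiveness you need either (i) the syntactic Feferman--Vaught argument, which by induction on $\varphi$ produces an explicit reduction sequence of component formulae whose Boolean combination decides $\varphi$ in the amalgamation, or (ii) in the finite-structure setting of this paper, a search: given normal-form representations of the component theories, enumerate finite structures with distinguished tuples, decide by model checking which ones realise those theories, form the amalgamation of the witnesses found, and read off its theory (the determination statement you just proved is what guarantees the answer is independent of the witnesses chosen). Either supplement closes the gap; without one, the ``furthermore'' clause of the theorem is asserted rather than proved.
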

Since disjoint unions are particular amalgamations, we obtain
the following corollary.
\begin{cor}\label{Cor: composition for disjoint union}
There exists an computable function mapping
$\MTh_h(\frakA)$ and $\MTh_h(\frakB)$ to $\MTh_h(\frakA \oplus \frakB)$.
\end{cor}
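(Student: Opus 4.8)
The plan is to obtain this as a direct specialisation of the Composition Theorem (Theorem~\ref{Thm: composition}), exactly as the remark preceding the statement suggests. The key observation is that a disjoint union is nothing but the amalgamation in which no elements are identified, so the only real task is to instantiate the parameters of the theorem correctly and then read off the conclusion.

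First I would apply Theorem~\ref{Thm: composition} with $m = 2$, taking the two structures to be $\frakA_0 := \frakA$ and $\frakA_1 := \frakB$. I would then set the length of the parameter tuples to be $n = 0$, so that each $\bar a_i$ is the empty tuple $\emptyseq$, and likewise take $l_0 = l_1 = 0$, so that no additional parameters $\bar c_i$ occur. With these choices, the amalgamation $\langle\frakA',\bar a'\rangle$ over the empty tuples performs no merging of elements, whence $\frakA'$ is literally the disjoint union $\frakA \oplus \frakB$ and $\bar a'$ is again the empty tuple.

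With this identification in place, the Composition Theorem asserts that $\MTh_h(\frakA')$, i.e.\ $\MTh_h(\frakA \oplus \frakB)$, is uniquely determined by $\MTh_h(\frakA_0,\bar a_0)$ and $\MTh_h(\frakA_1,\bar a_1)$; and since the parameter tuples are empty, these are just $\MTh_h(\frakA)$ and $\MTh_h(\frakB)$. The final clause of the theorem furnishes the computability of the map sending these two theories to the theory of the amalgamation, which is precisely the claimed computable function.

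The only point requiring any care is the verification that the amalgamation over empty tuples coincides with the disjoint union; but this is immediate from the definition of amalgamation, in which the merging step is indexed by $k < n$ and is therefore vacuous when $n = 0$. Consequently I do not expect any genuine obstacle here, and the corollary should follow at once from the theorem.
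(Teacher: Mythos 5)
Your proposal is correct and is precisely the argument the paper intends: the sentence preceding the corollary ("Since disjoint unions are particular amalgamations\dots") is exactly your instantiation of Theorem~\ref{Thm: composition} with $m=2$, $n=0$, and empty tuples $\bar c_i$, so that the amalgamation performs no merging and coincides with $\frakA \oplus \frakB$. No issues.
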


\subsection{Transductions}

The notion of a monadic second-order transduction provides
a versatile framework to define transformations of structures.
To simplify the definition we first introduce three particular types
of transductions
and we obtain $\MSO$-transductions as compositions of these.
\begin{defi}\label{Def: transduction}
(a) Let $k \geq 2$ be a natural number.
The operation $\mathrm{copy}_k$ maps a structure~$\frakA$ to the expansion
\begin{align*}
  \mathrm{copy}_k(\frakA) :=
    \langle \frakA\oplus\dots\oplus\frakA, {\sim}, P_0,\dots,P_{k-1}\rangle
\end{align*}
of the disjoint union of $k$~copies of~$\frakA$ by the following relations.
Denoting the copy of an element $a \in A$ in the $i$-th component of
$\frakA \oplus\dots\oplus \frakA$ by the pair $\langle a,i\rangle$, we define
\begin{align*}
  P_i := \set{ \langle a,i\rangle }{ a \in A }
  \qtextq{and}
  \langle a,i\rangle \sim \langle b,j\rangle \ \defiff\ a = b\,.
\end{align*}
For $k = 1$, we set $\mathrm{copy}_1(\frakA) := \frakA$.

(b) For $m \in \bbN$, we define the multi-valued operation $\mathrm{exp}_m$
that maps a structure~$\frakA$ to all of its possible expansions by
$m$~unary predicates $Q_0,\dots,Q_{m-1} \subseteq A$.
Note that $\mathrm{exp}_0$ is just the identity.

(c) A \emph{basic $\MSO$-transduction} is a partial operation~$\tau$
on relational structures described by a list
\begin{align*}
  \bigl\langle \chi, \delta(x),
               \varphi_0(\bar x),\dots,\varphi_{s-1}(\bar x)\bigr\rangle
\end{align*}
of $\MSO$-formulae called the \emph{definition scheme} of~$\tau$.
Given a structure~$\frakA$ that satisfies the sentence~$\chi$,
the operation~$\tau$ produces the structure
\begin{align*}
  \tau(\frakA) := \langle D, R_0,\dots,R_{s-1}\rangle
\end{align*}
where
\begin{align*}
  D := \set{ a \in A }{ \frakA \models \delta(a) }
  \qtextq{and}
  R_i := \set{ \bar a \in D^{\varrho_i} }
             { \frakA \models \varphi_i(\bar a) }\,.
\end{align*}
($\varrho_i$~is the arity of~$R_i$.)
If $\frakA \nmodels \chi$ then $\tau(\frakA)$ is undefined.

(d) A \emph{quantifier-free transduction} is a basic $\MSO$-transduction,
where all formulae in the definition scheme are quantifier free.

(e) A \emph{$k$-copying $\MSO$-transduction}~$\tau$ is a
(multi-valued) operation on relational structures of the form
$\tau_0 \circ \mathrm{copy}_k \circ \mathrm{exp}_m$
where $\tau_0$~is a basic $\MSO$-trans\-duc\-tion.
When the value of~$k$ does not matter, we will simply speak of
a \emph{transduction.}

Due to~$\mathrm{exp}_m$, a structure can be mapped
to several structures by~$\tau$.
Consequently, we define $\tau(\frakA)$ as the \emph{set} of possible values
$(\tau_0 \circ \mathrm{copy}_k)(\frakA, \bar P)$ where $\bar P$~ranges
over all $m$-tuples of subsets of~$A$.

(f) An $\MSO$-transduction~$\tau$ is \emph{domain-preserving} if,
it is $1$-copying and,
for every structure~$\frakA$ such that $\tau(\frakA)$ is defined,
the image~$\tau(\frakA)$ has the same universe as~$\frakA$.
\closingmark\end{defi}

\begin{rem}
(a) The expansion by $m$~unary predicates corresponds, in the terminology
of \cite{Courcelle95,Courcelle03}, to using $m$~\emph{parameters.}

(b) Note that every basic $\MSO$-transduction
is a $1$-copying $\MSO$-transduction without parameters.
\closingmark\end{rem}

The most important property of $\MSO$-transductions is the fact
that they are compatible with $\MSO$-theories in the following sense
(see, e.g., Theorem~5.10 of~\cite{CourcelleEngelfriet12}).
\begin{lem}[Backwards Translation]\label{Lem: transductions are comorphisms}
Let $\tau$~be a transduction. For every $\MSO$-sentence~$\varphi$, there
exists an $\MSO$-sentence~$\varphi^\tau$ such that, for all structures~$\frakA$,
\begin{align*}
  \frakA \models \varphi^\tau
  \quad\iff\quad
  \frakB \models \varphi \quad\text{for some } \frakB \in \tau(\frakA)\,.
\end{align*}
Furthermore, if $\tau$~is quantifier-free, then the quantifier-rank
of~$\varphi^\tau$ is no larger than that of~$\varphi$.
\end{lem}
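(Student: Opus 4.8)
The plan is to decompose $\tau = \tau_0 \circ \mathrm{copy}_k \circ \mathrm{exp}_m$ and to construct the backwards translation separately for each of the three constituents, composing them at the end. Concretely, I would build, for each of the operations $\mathrm{exp}_m$, $\mathrm{copy}_k$, and the basic transduction $\tau_0$, a translation of sentences over the output signature into sentences over the input signature that reflects satisfaction, and then set $\varphi^\tau := ((\varphi^{\tau_0})^{\mathrm{copy}_k})^{\mathrm{exp}_m}$. The existential ``for some $\frakB \in \tau(\frakA)$'' in the statement is produced entirely by the $\mathrm{exp}_m$ step, which is the only multi-valued component.

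First I would treat a basic transduction $\tau_0$ with definition scheme $\langle \chi, \delta(x), \varphi_0, \dots, \varphi_{s-1}\rangle$. Here $\varphi^{\tau_0}$ is obtained by the usual relativisation-and-substitution: replace every atom $R_i(\bar x)$ by $\varphi_i(\bar x)$, relativise each first-order quantifier $\exists x\,\psi$ to $\exists x\,(\delta(x) \wedge \psi)$, leave set quantifiers unchanged, and finally conjoin the guard $\chi$. Leaving the set quantifiers unrelativised is harmless, because every membership atom $x \in X$ already lies within the scope of a first-order variable relativised to $\delta$, so the value of a quantified set outside $\delta^\frakA$ is never inspected and ranging over subsets of $A$ agrees with ranging over subsets of $\delta^\frakA$. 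A straightforward induction on the structure of $\varphi$ then shows that $\frakA \models \varphi^{\tau_0}$ if, and only if, $\tau_0(\frakA)$ is defined and $\tau_0(\frakA) \models \varphi$; the guard $\chi$ handles the case $\frakA \nmodels \chi$, in which $\tau_0(\frakA)$ is undefined and both sides are false.

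Next I would treat $\mathrm{copy}_k$ and $\mathrm{exp}_m$. For $\mathrm{exp}_m$ the translation is immediate: since the values of $\mathrm{exp}_m(\frakA)$ are all expansions of $\frakA$ by $m$ unary predicates $Q_0,\dots,Q_{m-1}$, I set $\varphi^{\mathrm{exp}_m} := \exists Q_0 \cdots \exists Q_{m-1}\,\varphi$, turning satisfaction in \emph{some} expansion into satisfaction of the existentially quantified sentence. For $\mathrm{copy}_k$ the domain of the output is $A \times [k]$, so I would encode each output first-order variable by a base element of $A$ together with a copy index from the fixed finite set $[k]$, and each output set variable $X$ by a $k$-tuple $\langle X^0, \dots, X^{k-1}\rangle$ of subsets of $A$. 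A first-order quantifier over the copy is then simulated by a quantifier over $A$ combined with a finite disjunction over the copy index, a set quantifier by $k$ set quantifiers over $A$, the relation $\sim$ by equality of base elements, each predicate $P_i$ by the assertion that the copy index equals $i$, and every inherited relation $R$ by requiring that all arguments share the same copy index and that their base elements stand in $R^\frakA$. Another induction establishes that this translation reflects satisfaction.

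Finally, composing the three translations yields $\varphi^\tau$ with the required property. For the quantifier-rank claim I observe that a quantifier-free transduction is, by definition, a basic transduction with $k = 1$ and $m = 0$ whose scheme consists of quantifier-free formulae, so only the $\tau_0$-step is relevant. In that step the guard $\chi$, the formulae $\varphi_i$, and $\delta$ all have quantifier rank $0$, substitution for atoms introduces no quantifiers, relativising a first-order quantifier to the quantifier-free $\delta$ leaves its rank unchanged, and set quantifiers are copied verbatim; hence $\qr(\varphi^\tau) \le \qr(\varphi)$. The main obstacle is the copying step: the inductive invariant must simultaneously track, for every free variable, both its base element and its copy index, and the case analysis for the inherited relations and for membership has to be carried out carefully enough that this encoding is preserved through quantifiers. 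The remaining bookkeeping is routine.
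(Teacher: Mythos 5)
Your proof is correct and follows exactly the standard argument; the paper does not prove this lemma itself but cites Theorem~5.10 of~\cite{CourcelleEngelfriet12}, whose proof is precisely this decomposition into $\mathrm{exp}_m$, $\mathrm{copy}_k$, and the relativisation-and-substitution step for the basic transduction, with the quantifier-rank bound falling out of the quantifier-free case just as you describe. The only cosmetic omission is that you spell out relativisation only for existential first-order quantifiers (universal ones become $\forall x(\delta(x)\to\psi)$) and do not mention the translation of equality atoms under $\mathrm{copy}_k$ (same base element \emph{and} same copy index), both of which are routine.
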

\begin{cor}
Let $\tau$~be a quantifier-free transduction and $\frakA$~and~$\frakB$
structures.
\begin{align*}
  \MTh_h(\frakA) = \MTh_h(\frakB)
  \qtextq{implies}
  \MTh_h(\tau(\frakA)) = \MTh_h(\tau(\frakB))\,.
\end{align*}
\end{cor}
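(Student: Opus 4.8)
The plan is to derive the statement directly from the Backwards Translation Lemma (Lemma~\ref{Lem: transductions are comorphisms}), whose quantifier-rank clause for quantifier-free transductions is exactly what makes the argument go through. First I would record that a quantifier-free transduction is a basic $\MSO$-transduction, and hence single-valued: whenever it is defined on a structure $\frakC$, the value $\tau(\frakC)$ is a single structure rather than a set. The existential clause of Lemma~\ref{Lem: transductions are comorphisms} quantifying over the members of $\tau(\frakC)$ therefore collapses, and the lemma supplies, for each $\MSO$-sentence $\varphi$, a sentence $\varphi^\tau$ with $\frakC \models \varphi^\tau \iff \tau(\frakC) \models \varphi$ whenever $\tau(\frakC)$ is defined, and moreover with $\qr(\varphi^\tau) \leq \qr(\varphi)$ because $\tau$ is quantifier-free.

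Next I would fix an arbitrary sentence $\varphi$ with $\qr(\varphi) \leq h$ and chase equivalences. Since $\qr(\varphi^\tau) \leq \qr(\varphi) \leq h$, the hypothesis $\MTh_h(\frakA) = \MTh_h(\frakB)$ gives $\frakA \models \varphi^\tau \iff \frakB \models \varphi^\tau$. Applying the displayed biconditional to $\frakA$ and to $\frakB$ then yields $\tau(\frakA) \models \varphi \iff \tau(\frakB) \models \varphi$. As $\varphi$ ranges over all sentences of quantifier-rank at most~$h$, this is precisely $\MTh_h(\tau(\frakA)) = \MTh_h(\tau(\frakB))$.

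The only subtlety is the domain condition: the equivalences above presuppose that $\tau(\frakA)$ and $\tau(\frakB)$ are both defined. Here I would invoke the definition of a quantifier-free transduction once more, namely that its definition-scheme sentence~$\chi$ is itself quantifier-free, so that $\qr(\chi) = 0 \leq h$ and hence $\frakA \models \chi \iff \frakB \models \chi$; thus the two images are simultaneously defined or simultaneously undefined, and in the latter case there is nothing to compare. I do not expect a genuine obstacle here; the whole content of the argument is the quantifier-rank bound of the Backwards Translation Lemma, which keeps $\varphi^\tau$ (and $\chi$) within the theory $\MTh_h$ that we are permitted to compare --- a step that would break down for an arbitrary transduction, whose backwards translation may raise the quantifier-rank.
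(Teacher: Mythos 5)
Your argument is correct and is exactly the intended one: the paper states this as an immediate corollary of the Backwards Translation Lemma, relying precisely on the quantifier-rank clause for quantifier-free transductions that you isolate. Your additional remarks on single-valuedness and on the simultaneous definedness of $\tau(\frakA)$ and $\tau(\frakB)$ are sound and fill in details the paper leaves implicit.
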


\subsection{Equational classes and the Semi-Linearity Theorem}

We can use monadic second-order transductions to define
two important families of graph classes\?: the \emph{HR-equational}
and the \emph{VR-equational} classes of graphs.

The family~$\mathcal{VR}$ of \emph{VR-equational} graph classes
consists of all classes~$\calC$ such that $\lfloor\calC\rfloor$
is the image of the class~$\calT$ of all trees
under a monadic second-order transduction.
Similarly, the family~$\mathcal{HR}$ of \emph{HR-equational} graph classes
consists of all classes~$\calC$ such that $\lceil\calC\rceil$
is the image of~$\calT$ under a monadic second-order transduction.

Both families can alternatively be defined using
systems of equations in a corresponding graph algebra\?:
the VR-equational classes are the solutions of systems of equations
over the VR-algebra of graphs, i.e., the graph algebra whose operations
define clique-width, and the
HR-equational classes are the solutions of systems of equations
over the HR-algebra of graphs, i.e., the graph algebra whose operations
define tree-width.
We recall that every HR-equational class (of simple graphs) is VR-equational.

VR-equationality and HR-equationality are two possible generalisations
of the notion of a context-free language to graphs.
In light of the alternative definition in terms of systems of equations
it is not surprising that there is a close connection between
VR-equationality and clique-width and between HR-equationality and tree-width.
Every class in $\mathcal{VR}$ has bounded clique-width, while classes in
$\mathcal{HR}$ have bounded tree-width.
Conversely, every $\MSO_1$-definable class of graphs of bounded clique-width
is VR-equational and every $\MSO_2$-definable class of graphs of bounded
tree-width is HR-equational.
However, some VR-equa\-tional or HR-equational classes are not of this form.
This corresponds to the fact that some context-free languages are
not regular.

There is a third characterisation of $\mathcal{VR}$~and~$\mathcal{HR}$ in terms
of graph grammars.
VR-equational classes can be generated by \emph{vertex replacement} grammars,
while HR-equational  classes can be generated by \emph{hyperedge replacement}
grammars.
We refer the reader to the book \cite{CourcelleEngelfriet12} for details.
In the present article, we will only consider such classes specified,
as defined above, as images of trees under transductions.
Note that the definition scheme of a class~$\calC$
provides a finite representation of~$\calC$.
Consequently, we can process VR-equational and HR-equational classes
by algorithms and we can state decision problems in a meaningful way.

One important property of a VR-equational class~$\calC$ is the fact that
the spectrum of every $\MSO$-definable set predicate inside~$\calC$
is semi-linear.
Recall that a set $S \subseteq \bbN^n$ is \emph{semi-linear}
if it is a finite union of sets of the form
\begin{align*}
  P = \set{ \bar k+i_0\bar p_0 +\dots+ i_{m-1}\bar p_{m-1} }
          { i_0,\dots,i_{m-1} \in \bbN}\,,
\end{align*}
for fixed tuples $\bar k,\bar p_0,\dots,\bar p_{m-1} \in \bbN^n$.

The following result is Theorem~7.42 of \cite{CourcelleEngelfriet12}
(the fact that one can compute a representation of the semi-linear set
is not stated explicitly in \cite{CourcelleEngelfriet12},
but it follows from the proof since all of its steps are effective).
\begin{thm}[Semi-Linearity Theorem]
Let $\calC$~be a VR-equational class of graphs and let
$\varphi(X_0,\dots,X_{n-1})$ be an $\MSO$-formula.
The set
\begin{align*}
  M_\varphi(\calC) :=
  \biglset (\abs{P_0},\dots,\abs{P_{n-1}}) \bigmset {}
  & \lfloor G\rfloor \models \varphi(\bar P) \text{ for some } G = \langle V,E\rangle \in \calC \\
  & \text{and } P_0,\dots,P_{n-1} \subseteq V \bigrset
\end{align*}
is semi-linear,
and a finite representation of this set can be computed from~$\varphi$
and a representation of~$\calC$.
\end{thm}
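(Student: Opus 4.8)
The plan is to reduce the statement to the effective version of Parikh's theorem for context-free languages, using the transduction characterisation of VR-equational classes together with the Backwards Translation lemma (Lemma~\ref{Lem: transductions are comorphisms}) and the standard closure properties of semi-linear sets.

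First I would unfold the definition of VR-equationality. Since $\calC$ is VR-equational, there is an $\MSO$-transduction $\tau = \tau_0 \circ \mathrm{copy}_k \circ \mathrm{exp}_m$ with $\lfloor\calC\rfloor = \tau(\calT)$, where $\calT$ is the class of all finite trees. Thus every $\lfloor G\rfloor$ with $G \in \calC$ arises as $\tau_0(\mathrm{copy}_k(T,\bar R))$ for some tree~$T$ and some parameters~$\bar R$. Because the universe of $\lfloor G\rfloor$ is a definable subset of the $k$~copies of the universe of~$T$, each vertex of~$G$ corresponds to a node of~$T$ sitting in one of these $k$~copies.

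Next I would pull the formula back along~$\tau$. Treating the free set variables $X_0,\dots,X_{n-1}$ as parameters, so that $\varphi(\bar P)$ becomes a sentence over the expanded structure $\langle\lfloor G\rfloor,\bar P\rangle$, a version of the Backwards Translation lemma for expansions yields an $\MSO$-formula~$\psi$ over trees such that $\lfloor G\rfloor \models \varphi(\bar P)$ holds if, and only if, $T \models \psi(\bar Q;\bar R)$, where $\bar Q = (Q_i^j)_{i<n,\,j<k}$ collects the preimages of the~$P_i$ split according to the $k$~copies. The crucial bookkeeping is that $\abs{P_i} = \sum_{j<k}\abs{Q_i^j}$. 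Hence $M_\varphi(\calC)$ is the image, under the fixed linear map $(\abs{Q_i^j})_{i,j} \mapsto (\sum_{j<k}\abs{Q_i^j})_{i<n}$, of the cardinality-spectrum of the sets satisfying~$\psi$; since the components coming from~$\bar R$ are simply projected away and semi-linear sets are closed under linear images and projections, it suffices to prove the following claim: for every $\MSO$-formula $\theta(Z_0,\dots,Z_{s-1})$, the set $\bigset{(\abs{Z_0},\dots,\abs{Z_{s-1}})}{T \models \theta(\bar Z) \text{ for some tree } T}$ is effectively semi-linear.

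This last claim is where the real work lies, and I expect it to be the main obstacle. To prove it I would encode a satisfying assignment of~$\bar Z$ as a node labelling of~$T$ over the alphabet $\{0,1\}^s$, so that, by the Thatcher--Wright correspondence, $\theta$ is recognised by a finite bottom-up tree automaton~$\mathcal{A}$ over the enriched alphabet. The tuple $(\abs{Z_0},\dots,\abs{Z_{s-1}})$ is then a fixed linear image of the vector counting how often each label occurs in an accepted tree, that is, of the Parikh image of the regular tree language recognised by~$\mathcal{A}$. It thus remains to show that the Parikh image of a regular tree language is semi-linear\?: one turns~$\mathcal{A}$ into a context-free grammar whose nonterminals are the states of~$\mathcal{A}$ and whose productions mirror its transitions, so that the label-occurrence vectors of the tree language coincide with the Parikh image of the generated string language, which is semi-linear and effectively computable by Parikh's theorem. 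Tracking effectiveness through each of these reductions finally yields the computability of a representation of~$M_\varphi(\calC)$.
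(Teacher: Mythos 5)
Your argument is correct, but note that the paper does not actually prove this theorem: it is quoted from Theorem~7.42 of \cite{CourcelleEngelfriet12}, whose proof works directly with equational sets and a Parikh-type theorem for the VR-algebra (via the filtering/recognisability machinery), rather than with the transduction presentation. Your route --- pull $\varphi$ back along the transduction $\tau_0 \circ \mathrm{copy}_k \circ \mathrm{exp}_m$ to a formula over labelled trees, observe $\abs{P_i} = \sum_{j<k}\abs{Q_i^j}$, and reduce the spectrum over trees to the Parikh image of a regular tree language via Thatcher--Wright and the grammar construction --- is a legitimate self-contained alternative, and arguably closer to the way this paper presents VR-equational classes (as images of $\calT$ under transductions). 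What it buys is that every ingredient is a classical automata-theoretic fact; what the textbook route buys is uniformity over arbitrary equational sets without detouring through trees. Three small points you should tighten. First, Lemma~\ref{Lem: transductions are comorphisms} as stated in the paper handles only sentences, so you need the standard strengthening to formulae with free set variables, including the domain side-condition that each $Q_i^j$ be contained in the set defined by $\delta$ in the $j$-th copy and the conjunct $\chi$ guaranteeing that $\tau$ is defined. Second, the trees in $\calT$ are unranked, so Thatcher--Wright must be invoked for hedge automata or after a first-child/next-sibling encoding into binary trees (which preserves node labels and hence Parikh images). Third, since cardinalities are isomorphism-invariant and $\tau(\calT)$ is exactly $\lfloor\calC\rfloor$, quantifying over concrete outputs of $\tau$ is sound; it is worth saying this explicitly because $\tau$ is multi-valued. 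With these remarks the proof is complete, and effectiveness does propagate through each reduction as you claim.
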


\section{Definable orders}   
\label{Sect: general orderability}

For simplicity, we will use the term \emph{order} for linear orders.
When considering non-linear partial orders, we will explicitly speak
of \emph{partial orders.}
\begin{defi}\label{Def: orderable class}
Let $\Sigma$~be a relational signature and $\calC$~a class of $\Sigma$-structures.

(a) An $\MSO$-formula~$\varphi(x,y;\bar Z)$ \emph{defines an order}
on~$\calC$ if, for every non-empty structure $\frakA \in \calC$,
there are sets $P_0,\dots,P_{n-1} \subseteq A$ such that the formula
$\varphi(x,y;\bar P)$ defines an order on~$\frakA$.

(b) The class~$\calC$ is \emph{$\MSO$-orderable}
if there is an $\MSO$-formula~$\varphi$ defining an order on~$\calC$.

(c) A class~$\calC$ of graphs \emph{$\MSO_1$-orderable} if
the class $\lfloor\calC\rfloor$ is $\MSO$-orderable, and we call it
\emph{$\MSO_2$-orderable} if $\lceil\calC\rceil$ is $\MSO$-orderable.
\closingmark\end{defi}

\begin{rem}\label{Rem: orderability formula}
(a) For orderability by a formula $\varphi(x,y;\bar Z)$, we only require
that there are \emph{some} parameters~$\bar P$ such that $\varphi(x,y;\bar P)$
defines an order. We do not care about the behaviour of~$\varphi$ for other
values of the parameters.
We could require the formula $\varphi(x,y;\bar P')$ to be always false
for such parameters~$\bar P'$. This is no loss of generality, as we can
replace $\varphi(x,y;\bar Z)$ by the formula
\begin{align*}
  \varphi(x,y;\bar Z) \land \mathrm{ord}_\varphi(\bar Z)\,,
\end{align*}
where the formula
\begin{align*}
  \mathrm{ord}_\varphi(\bar Z)
    := {}&
      \forall x\forall y[\varphi(x,y;\bar Z) \land \varphi(y,x;\bar Z) \liff
                          x = y] \\
    {} \land {}&
       \forall x\forall y\forall z[\varphi(x,y;\bar Z) \land \varphi(y,z;\bar Z)
                                   \lso \varphi(x,z;\bar Z)]
\end{align*}
states that the relation defined by~$\varphi$ with parameters~$\bar Z$ is an
order.

(b) For every $\MSO$-formula~$\varphi(x,y;\bar Z)$ there exists
a largest class~$\calC_\varphi$ of $\Sigma$-struc\-tures that is ordered
by~$\varphi$.
This class can be defined by $\exists\bar Z\,\mathrm{ord}_\varphi(\bar Z)$.
Fixing an enumeration $\varphi_0(x,y;\bar Z),\dots,\varphi_{n-1}(x,y;\bar Z)$
of all $\MSO$-formulae of quantifier-rank~$m$ with $k$~parameters
$Z_0,\dots,Z_{k-1}$,
we obtain the class~$\calC_{m,k}$ of all $\Sigma$-structures ordered
by some of these formulae.
It~is defined by
$\exists\bar Z\Lor_{i < n} \mathrm{ord}_{\varphi_i}(\bar Z)$.
This class can be ordered by the formula
\begin{align*}
  \psi_{m,k}(x,y;\bar Z) :=
    \Lor_{i < n} \Bigl[\Land_{j < i} \neg\mathrm{ord}_{\varphi_j}(\bar Z)
                       \land \mathrm{ord}_{\varphi_i}(\bar Z)
                       \land \varphi_i(x,y;\bar Z)\Bigr]\,.
\end{align*}
It follows that any $\MSO$-orderable class~$\calC$ can be ordered by
$\psi_{m,k}$ for sufficiently large $m$~and~$k$.
\closingmark\end{rem}

\begin{rem}
By definition, a class is $\MSO_2$-orderable if,
in each graph $G = \langle V,E\rangle$,
we can define a order on the set $V \cup E$.
This is in fact equivalent to requiring just an order on the set~$V$
of vertices since, for simple graphs, any such order induces one on $V \cup E$.
For instance, we can require that every vertex is smaller than all edges, and
that an edge $(u,v)$ is smaller than an edge $(u',v')$
(orienting these pairs such that $u < v$ and $u' < v'$)
if either $u < u'$, or $u = u'$ and $v < v'$.
\closingmark\end{rem}

\begin{prop}\label{Prop: orderable classes closed under union}
Let $\calC$~and~$\calK$ be non-empty classes of $\Sigma$-structures.
\begin{enumerate}[label=\upshape(\alph*)]
\item $\calC \cup \calK$ is $\MSO$-orderable if, and only if, $\calC$ and $\calK$ are $\MSO$-orderable.
\item $\calC \oplus \calK := \set{ \frakA \oplus \frakB }{ \frakA \in \calC,\ \frakB \in \calK }$ is $\MSO$-orderable
  if, and only if, $\calC$ and $\calK$ are $\MSO$-orderable.
\end{enumerate}
\end{prop}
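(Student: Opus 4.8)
The plan is to prove both equivalences by establishing the two directions separately. Both parts split into a "forward" direction (orderability of the composite class yields orderability of the components) and a "backward" direction (orderability of the components yields orderability of the composite). I expect the backward directions to carry the real content, while the forward directions should follow from general closure principles, in particular the fact from Remark~\ref{Rem: orderability formula} that every orderable class is ordered by one of the canonical formulae~$\psi_{m,k}$.

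For the \textbf{forward direction of part~(a)}, suppose $\calC \cup \calK$ is orderable. Since $\calC \subseteq \calC \cup \calK$ and $\calK \subseteq \calC \cup \calK$, the very same formula and parameters that order each structure of the union also order each structure of $\calC$ and of $\calK$; thus both subclasses are orderable. This is immediate from the definition, since orderability only asks for \emph{some} parameters per structure. For the \textbf{backward direction of part~(a)}, assume $\calC$ is ordered by $\varphi_\calC$ and $\calK$ by $\varphi_\calK$. By Remark~\ref{Rem: orderability formula}(b), I may assume both are of the form $\psi_{m,k}$ for a common $m$ and $k$ (taking the maximum of the respective parameters); in fact $\psi_{m,k}$ orders every orderable class for large enough $m,k$, so a single formula $\varphi := \psi_{m,k}$ simultaneously orders every structure in $\calC$ and every structure in $\calK$, hence every structure in $\calC \cup \calK$. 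So $\calC \cup \calK$ is orderable by $\varphi$.

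For part~(b), the \textbf{forward direction} is the subtler one: from orderability of $\calC \oplus \calK$ I must recover orderability of each factor. Given $\frakA \in \calC$, I would like to feed a structure of the form $\frakA \oplus \frakB$ (for some fixed or chosen $\frakB \in \calK$) to the ordering formula and then \emph{relativise} the resulting order to the part living on~$A$. Since $\calK$ is non-empty I can fix some $\frakB_0 \in \calK$, obtain an order on $\frakA \oplus \frakB_0$, and restrict it to~$A$; restricting a linear order to a subset yields a linear order, so this orders~$\frakA$. The point requiring care is uniformity in the parameters and in the formula: the parameters chosen for $\frakA \oplus \frakB_0$ restrict to parameters on~$A$, and one checks that the restricted relation is exactly the relation defined by a fixed relativised formula $\varphi^{\restriction A}$ on~$\frakA$ with those restricted parameters. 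To make "relativise to~$A$" expressible I would use a fresh parameter set marking~$A$ inside the union, or rely on the disjoint-union structure being definable. The \textbf{backward direction of part~(b)} is where the \textbf{Composition Theorem} (Theorem~\ref{Thm: composition}, or its corollary for disjoint unions) does the work: given orders on each factor, I define an order on $\frakA \oplus \frakB$ by declaring every element of~$A$ to precede every element of~$B$, and ordering within each block by the respective formula. The obstacle is that a single $\MSO$-formula must \emph{uniformly} distinguish the two blocks of an arbitrary disjoint union, which is not $\MSO$-definable in general; I would resolve this by using an additional parameter $Z$ set to~$A$ to mark one block, so that $x < y$ holds iff ($x \in Z$ and $y \notin Z$), or ($x,y \in Z$ and $\varphi_\calC$ holds), or ($x,y \notin Z$ and $\varphi_\calK$ holds).

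The \textbf{main obstacle} I anticipate is the marking problem in part~(b): both directions hinge on being able to recognise, inside an arbitrary disjoint union $\frakA \oplus \frakB$, which elements came from which factor. This is not $\MSO$-definable without help, since the two blocks may be elementarily indistinguishable, but it is cleanly handled by allocating one extra monadic parameter to mark the $\calC$-side; the Backwards Translation lemma and the composition machinery then guarantee that the combined formula, together with the parameters inherited from each factor's ordering formula plus this marker, defines a linear order on the whole union. I would be careful to note that in the forward direction the marker is not needed because we fix a concrete $\frakB_0$ and simply relativise, whereas the backward direction genuinely requires it.
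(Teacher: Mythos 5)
Your part (a) and the backward direction of part (b) are fine: (a)~forward is immediate, (a)~backward via $\psi_{m,k}$ with $m,k$ taken as the maxima of the two classes' ranks is a legitimate (and slightly slicker) alternative to the paper's explicit case-split formula $[\mathrm{ord}_\varphi(\bar Z)\land\varphi]\lor[\neg\mathrm{ord}_\varphi(\bar Z)\land\psi]$, and (b)~backward with one marker parameter for a block is exactly what the paper does. The genuine gap is in the forward direction of~(b), which is in fact the only place in this proposition where the Composition Theorem is needed --- you have located the hard step in the wrong half. Fixing $\frakB_0\in\calK$ and restricting the order on $\frakA\oplus\frakB_0$ to~$A$ does give a linear order on~$A$, but that order is \emph{not} the relation defined by a syntactically relativised formula $\varphi^{\restriction A}$ evaluated in~$\frakA$: the quantifiers of~$\varphi$ range over $A\cup B_0$, so the truth of $\varphi(a,b;\bar P)$ for $a,b\in A$ genuinely depends on~$\frakB_0$ and on $\bar P\restriction B_0$, data that a formula interpreted in the structure~$\frakA$ alone cannot access. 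Marking~$A$ with a fresh parameter does not help either, since the resulting formula is still interpreted in the union, whereas orderability of~$\calC$ requires a formula interpreted in~$\frakA$ itself.

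The paper closes this gap with Theorem~\ref{Thm: composition}: the value of $\varphi(a,b;\bar P)$ in $\frakA\oplus\frakB_0$ is determined by the pair of theories $\MTh_h(\frakA,\bar P\restriction A,a,b)$ and $\MTh_h(\frakB_0,\bar P\restriction B_0)$, and since there are only finitely many theories of quantifier rank~$h$, one can fix concrete parameters $\bar Q_i$ on~$\frakB_0$ realising each relevant theory and hard-code the contribution of $(\frakB_0,\bar Q_i)$ into a finite disjunction of characteristic formulae $\Land p_j$ evaluated in~$\frakA$. A further case analysis (via the theories associated with $\mathrm{ord}_\varphi$) is needed to guarantee that the parameters~$\bar P$ chosen on~$\frakA$ are compatible with some~$\bar Q_i$ for which $\varphi$ orders the whole union. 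Without this elimination of the $\frakB_0$-dependence your argument does not produce a formula ordering~$\calC$.
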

\begin{proof}
(a) Clearly, if $\varphi$~defines an order on $\calC \cup \calK$, it also defines orders on $\calC$
and on $\calK$.
Conversely, let $\varphi(x,y;\bar Z)$ and $\psi(x,y;\bar Z')$ be $\MSO$-formulae defining an order on,
respectively, $\calC$~and~$\calK$.
Let $\mathrm{ord}_\varphi(\bar Z)$ be the formula
(of quantifier-rank $\qr(\varphi)+3$)
from Remark~\ref{Rem: orderability formula} stating that the relation
defined by~$\varphi$ with parameters~$\bar Z$ is an order.
Then we can order $\calC \cup \calK$ by the formula
\begin{align*}
  \vartheta(x,y;\bar Z,\bar Z') :=
    [\mathrm{ord}_\varphi(\bar Z) \land \varphi(x,y;\bar Z)]
    \lor [\neg\mathrm{ord}_\varphi(\bar Z) \land \psi(x,y;\bar Z')]\,.
\end{align*}

(b) First, suppose that $\calC$~and~$\calK$ are ordered by the formulae
$\varphi(x,y;\bar Z)$ and $\psi(x,y;\bar Z')$, respectively.
We order $\calC \oplus \calK$ as follows.
Consider $\frakA \oplus \frakB \in \calC \oplus \calK$ and let
$\bar P$~and~$\bar Q$ be the parameters used by $\varphi$~and~$\psi$
to order $\frakA$~and~$\frakB$, respectively.
Using the set~$B$ as one additional parameter, we can define the order
\begin{align*}
  x \leq y \quad\defiff\quad
  &x,y \in A \text{ and } \frakA \models \varphi(x,y;\bar P) \\
  \text{or } & x,y \in B \text{ and } \frakB \models \psi(x,y;\bar Q) \\
  \text{or } & x \in A \text{ and } y \in B\,.
\end{align*}

Conversely, suppose that there is a formula $\varphi(x,y;\bar Z)$
ordering $\calC \oplus \calK$.
We construct a formula~$\psi(x,y;\bar Z)$ ordering~$\calC$.
(The orderability of~$\calK$ follows by symmetry.)
By the Composition Theorem, there exist finite lists $p_0,\dots,p_{m-1}$,
$q_0,\dots,q_{m-1}$, and $s_0,\dots,s_{n-1}$, $t_0,\dots,t_{n-1}$ of
$\MSO$-theories of quantifier-rank $h := \qr(\varphi)$ and
$h+3 = \qr(\mathrm{ord}_\varphi)$, respectively, such that, for all
$\frakA \in \calC$, $\frakB \in \calK$, $\bar P$~in $\frakA \oplus \frakB$,
and $a,b \in A$,
\begin{align*}
  \frakA \oplus \frakB \models \varphi(a,b;\bar P)
  \quad\iff\quad
  &\MTh_h(\frakA,\bar P \restriction A,a,b) = p_i \text{ and} \\
  &\MTh_h(\frakB,\bar P \restriction B) = q_i\,, \text{ for some } i < m\,, \\
\prefixtext{and}
  \frakA \oplus \frakB \models \mathrm{ord}_\varphi(\bar P)
  \quad\iff\quad
  &\MTh_{h+3}(\frakA,\bar P \restriction A) = s_i \text{ and} \\
  &\MTh_{h+3}(\frakB,\bar P \restriction B) = t_i\,, \text{ for some } i < n\,.
\end{align*}
We fix a structure $\frakB_0 \in \calK$ and set
\begin{align*}
  I := \set{ i < n }
           { \textstyle\frakB_0 \models \exists\bar Z\Land t_i(\bar Z) }\,.
\end{align*}
For each $i \in I$, we choose parameters~$\bar Q_i$ in~$\frakB_0$ such that
$\MTh_{h+3}(\frakB_0,\bar Q_i) = t_i$, and we set
\begin{align*}
  J_i := \set{ j < m }{ \MTh_h(\frakB_0,\bar Q_i) = q_j }\,.
\end{align*}
We claim that the formula
\begin{align*}
  \psi(x,y;\bar Z) :=
    \Lor_{i \in I} \Bigl[
      \Land_{\substack{k \in I\\k<i}} \neg\vartheta_k(\bar Z) \land
      \vartheta_i(\bar Z) \land \Lor_{j \in J_i} \chi_j(x,y;\bar Z))\Bigr]
\end{align*}
orders~$\calC$ where
$\vartheta_i(\bar Z) := \Land s_i$ and $\chi_i(x,y;\bar Z) := \Land p_i$.
Let $\frakA \in \calC$
and let $l \in I$ be the minimal index such that
$\frakA \models \exists\bar Z\vartheta_l(\bar Z)$.
We choose sets~$\bar P$ in~$\frakA$ such that
$\MTh_{h+3}(\frakA,\bar P) = s_l$.
By choice of $s_l$~and~$t_l$ it follows that $\varphi(x,y;\bar P \cup \bar Q_l)$
orders $\frakA \oplus \frakB_0$.
($\bar P \cup \bar Q_l$ denotes the tuple where each component is the union
of the corresponding components of $\bar P$~and~$\bar Q_l$.)
For $a,b \in A$, it further follows that
\begin{align*}
         \frakA \models \psi(a,b;\bar P)
\quad\iff\quad&\text{there is some } i \in I \text{ such that} \\
         &\MTh_{h+3}(\frakA,\bar P) = s_i\,,\\
         &\MTh_{h+3}(\frakA,\bar P) \neq s_k\,, \text{ for all } k < i\,,
          \text{ and} \\
         &\MTh_h(\frakA,\bar P,a,b) = p_j\,, \quad\text{for some } j \in J_i\,, \\
\iff\quad&\MTh_h(\frakA,\bar P,a,b) = p_j\,, \quad\text{for some } j \in J_l\,, \\
\iff\quad&\text{there is some } j < m \text{ such that} \\
         &\MTh_h(\frakA,\bar P,a,b) = p_j \qtextq{and}
          \MTh_h(\frakB_0,\bar Q_l) = q_j \\
\iff\quad&\frakA \oplus \frakB_0 \models \varphi(a,b;\bar P \cup \bar Q_l)\,.
\end{align*}
Hence, $\psi(x,y;\bar P)$ orders~$\frakA$.
\end{proof}

\begin{rem}
Every class consisting of a single (finite) structure is obviously
$\MSO$-orderable. By Proposition~\ref{Prop: orderable classes closed under union},
it follows that all finite classes are $\MSO$-orderable.
\closingmark\end{rem}

\begin{rem}\label{Rem: ordering graphs with additional edges}
Let $\calC$~be a class of graphs and let $\varphi(x,y;\bar Z)$
be an $\MSO$-formula defining an order on~$\lceil\calC\rceil$.
Let $\calC_+$~be the class of all graphs obtained from graphs in~$\calC$
by adding edges arbitrarily.
Then $\lceil\calC_+\rceil$ can be ordered by the formula
$\varphi_+(x,y;\bar Z,Z')$ obtained from~$\varphi(x,y;\bar Z)$
by replacing every atomic formula of the form $\mathrm{inc}(u,v)$
by the formula $\mathrm{inc}(u,v) \land v \in Z'$, and
by relativising every quantifier to the set~$Z'$.
(If $\bar P$~are parameters such that $\varphi(x,y;\bar P)$
orders the graph $G = \langle V,E\rangle$, then
$\varphi_+(x,y;\bar P,V \cup E)$ orders
every supergraph $G_+ = \langle V,E_+\rangle$
such that $E_+ \supseteq E$.)
\closingmark\end{rem}

\begin{rem}\label{Rem: orderable implies interpretation of paths}
Definition~\ref{Def: orderable class} can be formulated in terms of
monadic second-order transductions.
A class~$\calC$ of $\Sigma$-structures is $\MSO$-orderable if, and only if,
there exists a noncopying, domain-preserving transduction~$\sigma$ mapping
each structure $\frakA \in \calC$ to an expansion $\langle\frakA,{\leq}\rangle$
by a linear order~$\leq$.
Moreover it is easy to write down a transduction~$\tau$ mapping
any ordered structure $\langle\frakA,{\leq}\rangle$ to a path
that connects all elements of~$\frakA$.
Consequently, if $\calC$~is infinite (up to isomorphism) and $\MSO$-orderable,
we obtain an $\MSO$-transduction $\tau \circ \sigma$ mapping~$\calC$
to the class of all finite paths.
This implies that,
in the transduction hierarchy (cf.~\cite{BlumensathCourcelle10}),
the class~$\calC$ lies above the class of all paths.
\closingmark\end{rem}

The opposite of an orderable class is a class
of which no infinite subclass can be ordered.
We call such classes \emph{hereditarily unorderable.}
\begin{defi}
A class~$\calC$ of structures is \emph{hereditarily $\MSO$-unorderable,}
if it is infinite and no infinite subclass of~$\calC$ is $\MSO$-orderable.
For classes of graphs, we define the terms
\emph{hereditarily $\MSO_1$-unorderable} and
\emph{hereditarily $\MSO_2$-unorder\-able} analogously.
\closingmark\end{defi}
\begin{exa}
(a) The class $\calC = \set{ K_n }{ n \in \bbN,\ n > 0 }$ of cliques is
$\MSO_2$-orderable and hereditarily $\MSO_1$-unorderable.
To order~$K_n$, we can choose a set of edges~$P$ forming a Hamiltonian
path in~$K_n$.
Let $Q$~be a singleton set consisting of one end-point of this path.
Then we can use $P$~and~$Q$ to define a linear order on~$K_n$.

Without using $\MSO_2$-parameters, such a definition is not possible.
For each fixed number~$k$ of parameters and all sufficiently large~$n$,
every expansion of~$K_n$ by $k$~parameters $P_0,\dots,P_{k-1}$ admits
a nontrivial automorphism. Consequently, no formula can define a linear
order on $\langle K_n,\bar P\rangle$.

(b) The class~$\calT_n$ of trees of height at most~$n$ is both,
hereditarily $\MSO_1$-unorder\-able and hereditarily $\MSO_2$-unorderable.
This follows from Theorem~\ref{Thm: orderability of trees} below.
\closingmark\end{exa}

\section{$\MSO_2$-definable orderings}   
\label{Sect: MSO2}

In this section we derive characterisations for $\MSO_2$-orderable classes.
$\MSO_1$-order\-ability will be considered in Section~\ref{Sect: MSO}.

\subsection{Necessary conditions}   
\label{SSect: MSO2-necessary}

We start by providing a necessary condition for $\MSO_2$-order\-ab\-il\-ity.
Below we will then show that, for certain classes of graphs,
this condition is also sufficient.
\begin{defi}
Let $\frakA = \langle A,\bar R\rangle$ be a relational structure.

(a) We call~$\frakA$ \emph{connected} if it cannot be written as a disjoint union
$\frakA = \frakB \oplus \frakC$ of two nonempty substructures.
A \emph{connected component} of~$\frakA$ is a
maximal substructure that is connected and nonempty.

(b) For a number $k \in \bbN$, we denote by $\Sep(\frakA,k)$
the maximal number of connected components of $\frakA - S$,
where $S \subseteq A$ ranges over all sets of size at most~$k$.
For a graph~$G$, we set $\Sep(G,k) := \Sep(\lfloor G\rfloor,k)$.

(c) For a function $f : \bbN \to \bbN$, we say that a class~$\calC$ of structures has property $\SEP(f)$ if
\begin{align*}
  \Sep(\frakA,k) \leq f(k)\,,
  \quad\text{for all } \frakA \in \calC \text{ and all } k \in \bbN\,.
\end{align*}
We say that $\calC$~has property $\SEP$, if it has
property $\SEP(f)$, for some function $f : \bbN \to \bbN$.
\closingmark\end{defi}

\begin{exa}\label{Exam: Sep for bipartite graph}
For complete bipartite graphs~$K_{n,m}$ with $n \leq m$ we have
\begin{align*}
  \Sep(K_{n,m},k)
  = \begin{cases}
      1 &\text{if } k < n\,, \\
      m &\text{if } k \geq n\,.
    \end{cases}
\end{align*}
For complete $d$-partite graphs $K_{m_0,\dots,m_{d-1}}$ with $m_0 \geq\dots\geq m_{d-1}$
and $d \geq 2$, we have
\begin{align*}
  \Sep(K_{m_0,\dots,m_{d-1}},k)
  = \begin{cases}
      1   &\text{if } k < m_1+\dots+m_{d-1}\,, \\
      m_0 &\text{if } k \geq m_1+\dots+m_{d-1}\,.
    \end{cases}
\end{align*}
We leave the straightforward verification to the reader.
\closingmark\end{exa}

\begin{exa}
Let $f : \bbN \to \bbN \setminus \{0\}$ be a function and let $n \in \bbN$.
We construct a graph $G_n(f)$ such that
\begin{align*}
  \Sep(G_n(f),k) \geq f(k)\,,
  \quad\text{for all } k \leq n\,.
\end{align*}
Let $T$~be the tree of height~$n$, where every vertex~$v$
on level~$k$ has $f(k)$~immediate successors. That is,
\begin{align*}
  T := \set{ w \in \bbN^{\leq n} }{ w(k) < f(k) \text{ for all } k }\,.
\end{align*}
The desired graph $G_n(f)$ is obtained from this tree by adding all edges
$(x,y)$ such that $x \prec y$.
For a given $k \leq n$, choose a path $v_0,\dots,v_{k-1}$ of length~$k-1$
from the root~$v_0$ to some vertex $v_{k-1}$ on level $k-1$.
Removing the set $S := \{v_0,\dots,v_{k-1}\}$
we obtain a graph $G_n(f) - S$ with more than $f(k)$ connected components,
since each of the $f(k)$ immediate successors of $v_{k-1}$
belongs to a different connected component.
\closingmark\end{exa}

Let us show that having property $\SEP$ is a necessary condition for a class
to be $\MSO_2$-orderable.
\begin{prop}\label{Prop: Sep boundend}
There exists a function $f : \bbN^3 \to \bbN$ such that
$\Sep(G,k) \leq f(n,m,k)$ for every graph~$G$ such that
$\lceil G\rceil$ can be ordered by an $\MSO$-formula of the form
$\varphi(x,y;\bar P)$ where
$\qr(\varphi) \leq m$ and $\bar P = \langle P_0,\dots, P_{n-1}\rangle$
are parameters.
Furthermore, the function $f(n,m,k)$~is effectively elementary in the argument~$k$, that is,
there exists a computable function~$g$ such that $f(n,m,k) \leq \exp_{g(n,m)}(k)$.
\end{prop}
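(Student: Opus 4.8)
The plan is to fix a graph $G=\langle V,E\rangle$ that is ordered on $\lceil G\rceil$ by a formula $\varphi(x,y;\bar P)$ with $\qr(\varphi)\le m$ and $\bar P=\langle P_0,\dots,P_{n-1}\rangle$, to fix $k\in\bbN$ and a set $S\subseteq V$ with $\abs{S}\le k$ witnessing $\Sep(G,k)$, and to bound the number $t$ of connected components $C_1,\dots,C_t$ of $G-S$ by the number of $\MSO$-theories of quantifier-rank $m+1$ over the signature that has the incidence relation $\mathrm{inc}$, one unary predicate for each parameter $P_i$, and one constant for each vertex of $S$. Throughout I regard $\bar P$ as unary predicates on $\lceil G\rceil$ and the vertices of $S$ as a tuple $\bar s$ of constants.

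First I would decompose $\langle\lceil G\rceil,\bar P,\bar s\rangle$ so that the Composition Theorem applies. For each component $C_l$ let $\frakB_l$ be the substructure on $C_l\cup S$ carrying all edges with at least one endpoint in $C_l$, and let $\frakB_*$ carry $S$ together with all edges having both endpoints in $S$. Every vertex and every edge of $G$ then lies in exactly one of $\frakB_*,\frakB_1,\dots,\frakB_t$, the only shared elements being those of $S$, so $\langle\lceil G\rceil,\bar P,\bar s\rangle$ is the amalgamation of these structures over $\bar s$. By Theorem~\ref{Thm: composition} the theory of this amalgamation, possibly expanded by a constant number of further distinguished elements, is computed from the theories of the individual $\frakB_l$ over $\bar s$; and since amalgamation is commutative, this theory depends only on the \emph{multiset} of the component theories.

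The key claim will be that no two distinct components share the theory $\MTh_{m+1}(\frakB_l,\bar P\restriction B_l,\bar s)$, where $B_l$ is the universe of $\frakB_l$. Suppose $C_l\neq C_{l'}$ had equal such theories. The assertion that a non-$S$ element of a given rank-$m$ $1$-type exists has quantifier-rank $m+1$, so the two components realise the same rank-$m$ $1$-types among their non-$S$ elements; I pick a common type $\eta$ and witnesses $a\in C_l$ and $a'\in C_{l'}$, which are distinct. Both truth values $\lceil G\rceil\models\varphi(a,a';\bar P)$ and $\lceil G\rceil\models\varphi(a',a;\bar P)$ are then determined, via Theorem~\ref{Thm: composition} (here $\qr(\varphi)\le m$ suffices), by the multiset of component theories in which exactly two components carry a single marked point of type $\eta$ --- one playing the role of the first argument of $\varphi$, one the role of the second --- while $\frakB_*$ and all remaining components contribute their unmarked theory. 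Exchanging the two evaluations only swaps which physical component supplies the first-marked and the second-marked copy of $\eta$, so this multiset is unchanged; by the commutativity noted above the two amalgamation theories coincide, whence $\varphi(a,a';\bar P)$ holds iff $\varphi(a',a;\bar P)$ holds. As $a\neq a'$, this contradicts $\varphi(x,y;\bar P)$ defining a linear, hence antisymmetric and total, order.

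Consequently $t$ is at most the number of theories of quantifier-rank $m+1$ over the signature above, so $\Sep(G,k)\le f(n,m,k)$ with $f(n,m,k)$ that number, uniformly over all admissible $S$. By the counting recalled in Section~\ref{Sect: preliminaries}, this number is at most $\exp_{m+1}(p(n,k))$ for a fixed polynomial $p$, and it is computable since one can compute upper bounds on the number of theories of a given quantifier-rank. Finally, since any polynomial in $k$ is eventually dominated by $\exp_1(k)=2^k$, we obtain $f(n,m,k)\le\exp_{g(n,m)}(k)$ for a computable $g$ (one may take $g(n,m)=m+2$), which establishes the elementary bound. The step I expect to be the main obstacle is exactly this symmetry argument: it must be phrased so that the truth of $\varphi(a,a';\bar P)$ really depends only on the unordered multiset of marked component theories, and the two quantifier-rank thresholds --- rank $m$ for the element types and rank $m+1$ for classifying components, so that equality of theories forces the same realised types --- must be kept aligned.
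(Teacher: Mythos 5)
Your proof is correct and follows essentially the same route as the paper's: decompose $\lceil G\rceil$ as an amalgamation over $S$ of the pieces induced by the components of $G-S$, apply the Composition Theorem, and show that two components with equal theories would force $\varphi(a,a';\bar P) \liff \varphi(a',a;\bar P)$ for distinct $a,a'$, so the number of components is bounded by the (elementary-in-$k$) number of theories of bounded quantifier rank. The only cosmetic differences are that the paper pigeonholes directly on the rank-$m$ theories of the components expanded by one chosen vertex rather than on rank-$(m+1)$ theories of unmarked components followed by extraction of a common realised $1$-type, and that your handling of the edges with both endpoints in $S$ (via the separate piece $\frakB_*$) is if anything slightly more careful than the paper's.
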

\begin{proof}
Fixing $k,m,n \in \bbN$, we define $f(n,m,k) := d$ where
$d$~is an upper bound on the number of $\MSO$-theories of the form
$\MTh_m(\lceil H\rceil,P_0,\dots,P_{n-1},v_0,\dots,v_k)$
where $H$~is a graph, $P_0,\dots,P_{n-1}$ are parameters,
and $v_0,\dots,v_k$ are vertices of~$H$.
For fixed $n$~and~$m$, we can choose $d$~to be elementary in~$k$.

Let $\varphi(x,y;\bar Z)$ be an $\MSO$-formula of quantifier-rank at most~$m$,
let $G$~be a graph with $\Sep(G,k) > f(n,m,k)$,
and let $P_0,\dots,P_{n-1}$ parameters from~$G$.
We have to show that $\varphi(x,y;\bar P)$ does not order~$\lceil G\rceil$.
Fix a set $S = \{s_0,\dots,s_{k-1}\}$ of vertices
such that $G-S$ has more than~$d$ connected components.
Fix distinct connected components $C_0,\dots,C_d$ of $G - S$
and vertices $a_i \in C_i$.
By choice of~$d$, there are indices $i < j$ such that
\begin{align*}
     & \MTh_m\bigl(\lceil G[C_i \cup S]\rceil,\bar P \restriction {(C_i \cup S)},s_0,\dots,s_{k-1},a_i\bigr) \\
{}={}& \MTh_m\bigl(\lceil G[C_j \cup S]\rceil,\bar P \restriction {(C_j \cup S)},s_0,\dots,s_{k-1},a_j\bigr)\,.
\end{align*}
As the structure
$\bigl\langle \lceil G\rceil,\bar P,s_0,\dots,s_{k-1},a_i,a_j\bigr\rangle$
is the amalgamation of the structures
\begin{align*}
  &\bigl\langle\lceil G[C_i \cup S]\rceil,\bar P \restriction {(C_i \cup S)},s_0,\dots,s_{k-1},a_i\bigr\rangle\,, \\
  &\bigl\langle\lceil G[C_j \cup S]\rceil,\bar P \restriction {(C_j \cup S)},s_0,\dots,s_{k-1},a_j\bigr\rangle\,, \\
\prefixtext{and}
  &\bigl\langle\lceil G[C_l \cup S]\rceil,\bar P \restriction {(C_l \cup S)},s_0,\dots,s_{k-1}\bigr\rangle\,, \qquad\text{for } l \neq i,j\,,
\end{align*}
over the tuple $\langle s_0,\dots,s_{k-1}\rangle$,
it therefore follows by Theorem~\ref{Thm: composition} that
\begin{align*}
  \MTh_m\bigl( \lceil G\rceil,\bar P,s_0,\dots,s_{k-1},a_i,a_j\bigr)
  = \MTh_m\bigl( \lceil G\rceil,\bar P,s_0,\dots,s_{k-1},a_j,a_i\bigr)\,.
\end{align*}
In particular,
\begin{align*}
  G \models \varphi(a_i,a_j;\bar P)
  \quad\iff\quad
  G \models \varphi(a_j,a_i;\bar P)\,.
\end{align*}
Hence, $\varphi(x,y;\bar P)$ does not define an order.
\end{proof}

\begin{cor}\label{Cor: GSO-orderable implies SEP}
An $\MSO_2$-orderable class of graphs~$\calC$
has property $\SEP(f)$, for an elementary function~$f$.
\end{cor}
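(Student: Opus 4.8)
The plan is to derive this corollary directly from Proposition~\ref{Prop: Sep boundend} together with Remark~\ref{Rem: orderability formula}. The key observation is that orderability of a class by \emph{some} formula can, by Remark~\ref{Rem: orderability formula}(b), always be witnessed by one of the canonical formulae $\psi_{m,k}$, and hence by a formula of \emph{fixed} quantifier-rank using a \emph{fixed} number of parameters. This is exactly what is needed to feed into Proposition~\ref{Prop: Sep boundend}, whose bound $f(n,m,k)$ is uniform across all graphs sharing the same parameter count~$n$ and quantifier-rank~$m$.

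First I would argue as follows. Suppose $\calC$ is $\MSO_2$-orderable, so $\lceil\calC\rceil$ is $\MSO$-orderable. By Remark~\ref{Rem: orderability formula}(b), there exist fixed numbers $m$ and $n$ such that the single formula $\psi_{m,n}(x,y;\bar Z)$, which has quantifier-rank depending only on~$m$ and uses exactly $n$~parameters, orders $\lceil\calC\rceil$; write $m'$ for this quantifier-rank. Thus for every non-empty $G \in \calC$ there are parameters $\bar P = \langle P_0,\dots,P_{n-1}\rangle$ in $\lceil G\rceil$ such that $\psi_{m,n}(x,y;\bar P)$ defines an order on $\lceil G\rceil$.

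Next I would apply Proposition~\ref{Prop: Sep boundend} with this fixed pair $(n,m')$. The proposition supplies a function $f:\bbN^3\to\bbN$ such that $\Sep(G,k)\leq f(n,m',k)$ whenever $\lceil G\rceil$ is ordered by an $\MSO$-formula of quantifier-rank at most~$m'$ with at most~$n$ parameters. Since this hypothesis holds for every $G\in\calC$ via $\psi_{m,n}$, we obtain $\Sep(G,k)\leq f(n,m',k)$ for all $G\in\calC$ and all $k\in\bbN$. Defining $g(k):=f(n,m',k)$, where $n$ and $m'$ are now constants, we conclude that $\calC$ has property $\SEP(g)$.

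Finally, elementarity of~$g$ is inherited from the elementarity clause of Proposition~\ref{Prop: Sep boundend}: that proposition guarantees a computable~$h$ with $f(n,m',k)\leq\exp_{h(n,m')}(k)$, so for our fixed $n$ and~$m'$ the function $g(k)=f(n,m',k)$ is bounded by $\exp_{c}(k)$ with $c:=h(n,m')$ a fixed constant, hence elementary. I do not expect a genuine obstacle here; the only point requiring care is to invoke Remark~\ref{Rem: orderability formula}(b) at the outset so as to pin down a single formula of bounded quantifier-rank and bounded parameter count, rather than quantifying over the infinitely many formulae that might a priori order different members of~$\calC$. Once that uniformisation is in place, the corollary is an immediate specialisation of the proposition.
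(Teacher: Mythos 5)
Your proof is correct and is essentially the argument the paper intends, since the corollary is stated as an immediate consequence of Proposition~\ref{Prop: Sep boundend}: a single ordering formula fixes the parameter count~$n$ and quantifier-rank~$m$, and the proposition's bound $f(n,m,k)\leq\exp_{g(n,m)}(k)$ then yields an elementary function of~$k$. The detour through Remark~\ref{Rem: orderability formula}(b) is harmless but unnecessary, because Definition~\ref{Def: orderable class} already requires one fixed formula $\varphi(x,y;\bar Z)$ to order every structure in the class, so the uniformisation you were worried about is built into the definition of orderability.
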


The converse does not hold.
For instance, according to Theorem~\ref{Thm: GSO-orderable d-partite graphs}
below, the class of bipartite graphs of the form $K_{n,2^{2^n}}$ is not
$\MSO_2$-orderable,
while we have seen in Example~\ref{Exam: Sep for bipartite graph} that
it has property $\SEP(f)$ for the elementary function~$f$ such that
$f(n) = 2^{2^n}$.
Our objective therefore is to get converse results for particular classes
of graphs satisfying certain combinatorial conditions.

\begin{rem}
We have noted in Remark~\ref{Rem: ordering graphs with additional edges} that,
if a graph~$G$ can be ordered by an $\MSO_2$-formula~$\varphi$,
we can construct from~$\varphi$ a $\MSO_2$-formula~$\psi$
ordering every graph~$H$ obtained from~$G$ by adding edges.
In this case, we further have $\Sep(H,k) \leq \Sep(G,k)$, for all~$k$.
\closingmark\end{rem}

\begin{rem}
All results of Section~\ref{Sect: MSO2} also hold for directed graphs since
there is an $\MSO_2$-formula with two parameters that defines an orientation
of every undirected graph (see Proposition~9.46 of~\cite{CourcelleEngelfriet12}).
It follows that a class of directed graphs is $\MSO_2$-orderable if, and only if,
the corresponding class of undirected graphs is.
This is different for $\MSO_1$-orderability.
\closingmark\end{rem}

As a simple introductory example, let us consider classes of trees.
\begin{thm}\label{Thm: orderability of trees}
Let $\calT$~be a class of (undirected) trees.
The following statements are equivalent\?:
\begin{enumerate}
\item $\calT$~is $\MSO_1$-orderable.
\item $\calT$~is $\MSO_2$-orderable.
\item $\calT$~has property $\SEP$.
\item There exists a number $d \in \bbN$ such that every tree in~$\calT$ has maximal degree at most~$d$.
\end{enumerate}
\end{thm}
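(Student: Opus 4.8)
The plan is to prove the chain of implications $(1)\Rightarrow(2)\Rightarrow(3)\Rightarrow(4)\Rightarrow(1)$. The implications $(1)\Rightarrow(2)$ and $(2)\Rightarrow(3)$ are essentially free: the former because $\MSO_1$ is a special case of $\MSO_2$ (any order defined on $\lfloor G\rfloor$ transfers to $\lceil G\rceil$ by the standard trick of ordering edges lexicographically from an order on vertices, as noted in the remark following Definition~\ref{Def: orderable class}), and the latter is exactly Corollary~\ref{Cor: GSO-orderable implies SEP}. So the real content lies in $(3)\Rightarrow(4)$ and $(4)\Rightarrow(1)$.

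For $(3)\Rightarrow(4)$, I would argue by contraposition. If the degrees in $\calT$ are unbounded, then for every $d$ there is a tree $T\in\calT$ with a vertex $v$ of degree at least $d$. Removing the single vertex $v$ (so $k=1$) disconnects the tree into at least $d$ components, one for each neighbour of $v$. Hence $\Sep(T,1)\geq d$, and since $d$ was arbitrary, $\Sep(\cdot,1)$ is unbounded on $\calT$, so $\calT$ fails $\SEP$. This is the short direction.

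For $(4)\Rightarrow(1)$, assume every tree in $\calT$ has maximal degree at most $d$. The plan is to define an $\MSO_1$-order using a bounded number of parameters. The natural strategy is: first use one parameter to select a root (a singleton set marking one vertex), which orients the tree; then use the bounded degree to locally order the children of each vertex. Since each vertex has at most $d$ neighbours, I would use roughly $\lceil\log d\rceil$ (or simply $d$) colour-parameters $P_0,\dots$ to assign to the children of every vertex distinct colours, so that the sibling order is $\MSO$-definable from the colours. Combined with the tree-order $\preceq$ (which is $\MSO$-definable once a root is fixed), one obtains the lexicographic depth-first order on the vertex set: $x<y$ iff at the deepest common ancestor, the child-branch leading to $x$ has a smaller colour than the one leading to $y$ (with ancestors preceding descendants handled by the tree-order). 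Verifying that this relation is a linear order and is genuinely $\MSO$-definable is routine given the composition machinery.

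The main obstacle will be carefully setting up the colouring so that siblings receive distinct colours using only finitely many parameters, and confirming that the lexicographic comparison at the branching ancestor is expressible in $\MSO_1$. The key point is that $\MSO_1$ on $\lfloor T\rfloor$ can access the tree structure only through the edge relation, so the root-selection parameter is essential to recover $\preceq$, and the sibling-colouring must be arranged so that "the child of $z$ on the path to $x$ has colour $i$" is $\MSO$-definable. Because the degree is bounded by $d$, a proper colouring of the "sibling graph" with $d$ colours exists and can be supplied as parameters; I expect the verification that such a colouring always exists and yields a well-defined total order to be the point requiring the most care, though it remains elementary.
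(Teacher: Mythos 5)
Your proposal is correct and follows essentially the same route as the paper: the two easy implications are discharged identically, $(3)\Rightarrow(4)$ is the contrapositive of the paper's one-line argument via $\Sep(T,1)\leq f(1)$, and for $(4)\Rightarrow(1)$ the paper likewise uses $d$~parameters that colour each non-root vertex by its index among its siblings (the root being recoverable as the unique uncoloured vertex, so your extra root parameter is harmless but unnecessary), recovers the tree-order, and takes the lexicographic order.
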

\begin{proof}
(1)~$\Rightarrow$~(2) is trivial.

(2)~$\Rightarrow$~(3) has been shown in Corollary~\ref{Cor: GSO-orderable implies SEP}.

(3)~$\Rightarrow$~(4) Suppose that $\calT$~has property $\SEP(f)$ and let $T \in \calT$.
Every vertex $v \in T$ has at most $f(1)$ neighbours since $T - \{v\}$ has at most $f(1)$ connected components.
Consequently, the maximal degree of~$T$ is bounded by~$f(1)$.

(4)~$\Rightarrow$~(1)
Let $T$~be a tree with maximal degree at most~$d$.
We use $d$~parameters $P_0,\dots,P_{d-1}$ to order~$T$.
Fixing a vertex $r \in T$ as root, we obtain an injective
embedding $g : T \to d^{<m}$, for some number $m \in \bbN$.
We set
\begin{align*}
  P_i := \set{ v \in T }{ g(v) = wi \text{ for some } w }\,.
\end{align*}
Note that $r$~is the only vertex of~$T$ that is not contained in any of these sets.
Hence, using~$\bar P$, we can define the tree-order~$\preceq$ on~$T$.
We can also define the lexicographic ordering\?:
\begin{align*}
  u \leq v \quad\defiff\quad
  u \preceq v\,, \text{ or } &u_0 \in P_i,\ v_0 \in P_k, \text{ for } i < k,
  \text{ where } u_0, v_0 \text{ are the}\\
  &\text{immediate successors of the longest common} \\
  &\text{prefix of } u \text{ and } v
   \text{ with } u_0 \preceq u \text{ and } v_0 \preceq v\,.
\end{align*}
\upqed
\end{proof}
\begin{cor}\label{Cor: trees of bounded depth hereditarily MSO2-unorderable}
Let $k \in \bbN$. The class of trees of height at most~$k$ is hereditarily
$\MSO_2$-unorderable.
\end{cor}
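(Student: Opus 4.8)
The plan is to read this off the equivalence (2)~$\Leftrightarrow$~(4) of Theorem~\ref{Thm: orderability of trees}. First I would dispose of the infiniteness requirement in the definition of hereditary unorderability: for $k \geq 1$ the class of trees of height at most~$k$ already contains all stars $K_{1,n}$ (rooted at the centre, these have height~$1$), so it is infinite. (For the degenerate case $k = 0$ the class consists of a single vertex only and the statement is vacuous; one reads the corollary for $k \geq 1$.) Thus the substance of the proof is to show that no infinite subclass can be $\MSO_2$-orderable.

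Suppose, for contradiction, that $\calC \subseteq \calT_k$ is an infinite subclass that is $\MSO_2$-orderable, where $\calT_k$ denotes the class of trees of height at most~$k$. Each member of $\calC$ is again an undirected tree, so Theorem~\ref{Thm: orderability of trees} applies verbatim to $\calC$. The implication (2)~$\Rightarrow$~(4) then furnishes a bound $d \in \bbN$ such that every tree $T \in \calC$ has maximal degree at most~$d$.

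The remaining step is purely combinatorial. A rooted tree of height at most~$k$ whose vertices all have degree at most~$d$ carries at most~$1$ vertex on level~$0$ and at most~$d^i$ vertices on each level $i \leq k$, since every vertex on level $i$ has at most~$d$ immediate successors. Hence such a tree has at most $1 + d + d^2 + \dots + d^k$ vertices, a bound depending only on $d$~and~$k$. Consequently there are, up to isomorphism, only finitely many trees in~$\calT_k$ of maximal degree at most~$d$, which contradicts the assumption that $\calC$ is infinite.

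I do not expect any real obstacle here: the corollary is essentially immediate once one observes that bounded height together with bounded degree forces a bounded number of vertices, so that ``infinite subclass of $\calT_k$'' and ``bounded degree'' are incompatible. The only points needing a little care are the verification that $\calT_k$ is itself infinite (handled by the stars) and the trivial exclusion of the case $k = 0$; neither is difficult.
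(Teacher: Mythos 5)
Your proof is correct and takes essentially the same route as the paper: invoke the equivalence (2)~$\Leftrightarrow$~(4) of Theorem~\ref{Thm: orderability of trees} and observe that bounded height together with bounded degree admits only finitely many trees up to isomorphism. The paper states this in one line, leaving the vertex-count bound and the $k=0$ degeneracy implicit; your added details are accurate but not a different argument.
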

\begin{proof}
For any given height~$k$, there are only finitely many trees
(up to isomorphism) satisfying condition~(4) of the theorem.
\end{proof}

\subsection{Omitting a minor}   
\label{Sect: ommiting a minor}

We start by presenting a characterisation for classes of graphs omitting a
fixed graph as minor (for an introduction to graph minors see,
e.g.,~\cite{Diestel10}). For short, we will say that such a class
\emph{omits a minor.}
Recall that a spanning forest~$F$ of a graph~$G$ is defined to be directed.
A spanning forest~$F$ is \emph{normal} if the ends of every edge of~$G$
are comparable with respect to the tree-order~$\preceq_F$ on~$F$
(see, e.g., Section~1.5 of~\cite{Diestel10}).
\begin{defi}
Let $G$~be a graph and $F \subseteq G$ a normal spanning forest of~$G$.

(a) We denote the set of predecessors of a vertex~$x$ by
\begin{align*}
  \Pred_F(x) := \set{ y }{ y \prec_F x }\,.
\end{align*}

(b) For $x \in G$, we define
\begin{align*}
  B_F(x) := \set{ v \prec_F x }
                { \text{there is an edge } (u,v) \text{ of } G
                  \text{ such that } x \preceq_F u }\,.
\end{align*}
\upqed
\closingmark\end{defi}

\begin{lem}\label{Lem: simple conditions on separators}
Let $G$~be a graph, $F$~a normal spanning forest of~$G$, $x \in G$,
and $B \subseteq \Pred_F(x)$.
\begin{enumerate}[label=\upshape(\alph*)]
\item If $\abs{B} \geq p$ and there are $p$~immediate successors~$y$ of~$x$
  such that $B_F(y) = B \cup \{x\}$, then $K_{p,p}$ is a minor of~$G$.
\item If $\abs{B} < p$ and $\Sep(G,p) \leq d$,
  then there are at most~$d$ immediate successors~$y$ of~$x$
  such that $B_F(y) = B \cup \{x\}$\,.
\end{enumerate}
\end{lem}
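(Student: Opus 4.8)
For part~(a), the plan is to exhibit an explicit model of $K_{p,p}$ as a minor. Since $B \subseteq \Pred_F(x)$, its elements all lie on the branch of~$F$ from a root to~$x$ and hence form a chain; as $\abs{B} \geq p$, I fix $p$ distinct vertices $b_0,\dots,b_{p-1} \in B$ to serve as singleton branch sets on one side. On the other side I take the $p$ subtrees $T_{y_0},\dots,T_{y_{p-1}}$, where $T_y := \set{ w }{ y \preceq_F w }$ and $y_0,\dots,y_{p-1}$ are the given immediate successors of~$x$ with $B_F(y_i) = B \cup \{x\}$. These branch sets are pairwise disjoint (the $b_j$ are distinct and lie strictly below~$x$, while each $T_{y_i}$ lies strictly above~$x$, and distinct $y_i$ spawn disjoint subtrees) and each is connected. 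Finally, for every pair $(b_j,y_i)$ the membership $b_j \in B \subseteq B_F(y_i)$ yields, by the definition of~$B_F$, an edge $(u,b_j)$ of~$G$ with $y_i \preceq_F u$, i.e.\ with $u \in T_{y_i}$; this is the required edge joining the two branch sets. Hence $K_{p,p}$ is a minor of~$G$.

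For part~(b), I set $S := B \cup \{x\}$, so that $\abs{S} = \abs{B} + 1 \leq p$, and note that $\Sep(G,\abs{S}) \leq \Sep(G,p) \leq d$, whence $G - S$ has at most~$d$ connected components. The plan is to show that the subtrees $T_y$ attached to the relevant immediate successors~$y$ occupy pairwise disjoint collections of these components, so that there can be at most~$d$ of them. Each such~$T_y$ is disjoint from~$S$ (its vertices are $\succ_F x$, while $S$ lies $\preceq_F x$) and is connected in $G - S$ via the forest edges.

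The crux is the following claim, which is where normality of~$F$ does the work: every edge $(u,w)$ of~$G$ with $u \in T_y$ and $w \notin T_y$ satisfies $w \in S$. Indeed, by normality $u$ and~$w$ are $\preceq_F$-comparable; they cannot satisfy $u \prec_F w$, for then $y \preceq_F u \prec_F w$ would force $w \in T_y$. Hence $w \prec_F u$, and since the path from the root to~$u$ passes through~$y$ while $w \notin T_y$, we get $w \prec_F y$. Together with the edge $(u,w)$ and $y \preceq_F u$ this shows $w \in B_F(y) = B \cup \{x\} = S$. Consequently, after deleting~$S$ no edge of $G - S$ leaves~$T_y$, so the connected component of every vertex of~$T_y$ is contained in~$T_y$; thus each relevant~$T_y$ is a non-empty union of components of $G - S$. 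As distinct immediate successors~$y$ give disjoint subtrees~$T_y$, these unions are disjoint, and therefore the number of such~$y$ is bounded by the number of components of $G - S$, which is at most~$d$.

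I expect the claim in part~(b) to be the \emph{main obstacle}: establishing that normality confines every edge leaving a subtree~$T_y$ either back into~$T_y$ or into the separator~$S$, so that deleting~$S$ genuinely isolates the subtrees into pairwise disjoint collections of connected components.
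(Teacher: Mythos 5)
Your proof is correct and follows essentially the same route as the paper's: for (a) the paper contracts the subtrees rooted at the $y_i$ and keeps $p$ distinct vertices of~$B$, which is exactly your branch-set construction, and for (b) the paper likewise deletes $S = B \cup \{x\}$ and observes that the relevant successors fall into distinct components of $G-S$. The only difference is that you spell out the normality argument showing every edge leaving a subtree $T_y$ lands in~$S$, a step the paper leaves implicit.
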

\begin{proof}
(a)
Suppose that there are $p$~distinct immediate successors $y_0,\dots,y_{p-1}$ of~$x$ with $B(y_i) = B \cup \{x\}$
and fix distinct vertices $b_0,\dots,b_{p-1} \in B$.
Let $H$~be the minor of~$G$ obtained by contracting the subtrees rooted at
$y_0,\dots,y_{p-1}$ to single vertices $\widetilde y_0,\dots,\widetilde y_{p-1}$
and by removing all remaining vertices except for
$\widetilde y_0,\dots,\widetilde y_{p-1}$ and $b_0,\dots,b_{p-1}$.
Then $H \cong K_{p,p}$.

(b) Set $S := B \cup \{x\}$ and let $y_0,\dots,y_{n-1}$ be an enumeration of
all immediate successors of~$x$ such that $B(y_i) = S$.
Then $y_0,\dots,y_{n-1}$ lie in different connected components of $G - S$.
Hence, $n \leq \Sep(G,p) \leq d$.
\end{proof}

\begin{thm}\label{Thm: ordering graphs without Kpp minor}
For every $p,d \in \bbN$, the class $\calC_{p,d}$ of all graphs~$G$
that satisfy $\Sep(G,p) \leq d$ and that do not contain $K_{p,p}$ as a minor
is $\MSO_2$-orderable.
\end{thm}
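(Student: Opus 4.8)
The plan is to order the vertices of $G$ lexicographically with respect to a normal spanning forest, using the excluded minor to control the siblings that the forest alone cannot tell apart. First I would fix a normal spanning forest~$F$ of~$G$ (such forests exist; see, e.g., Section~1.5 of~\cite{Diestel10}) together with its set~$R$ of roots, and supply $F$ (as a set of edges) and~$R$ as parameters. Since~$G$ is processed as the incidence structure~$\lceil G\rceil$, we may quantify over edges, and the tree-order~$\preceq_F$ is then $\MSO$-definable from $F$~and~$R$, since reachability within a set of edges is $\MSO$-expressible. The order produced will be the lexicographic order induced by~$\preceq_F$ from a linear order on the immediate successors of each vertex and on the roots; given these ingredient orders, the lexicographic order is $\MSO$-definable, because it combines the ancestor relation, the sibling orders, and the order on roots (for vertices lying in different trees), and the least common ancestor of two vertices together with the two children of it through which they are reached are all definable from~$\preceq_F$.

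It remains to linearly order, in an $\MSO$-definable way, the immediate successors of a fixed vertex~$x$ (and, analogously, the roots). The predecessors of~$x$ form a $\preceq_F$-chain, so for immediate successors~$y$ of~$x$ the sets $B_F(y) \subseteq \Pred_F(x) \cup \{x\}$ are subsets of a linearly ordered set and can be compared by the last-difference rule ``$B_F(y) < B_F(y')$ iff the $\preceq_F$-largest element of $B_F(y) \mathbin\Delta B_F(y')$ lies in~$B_F(y')$'', which is a linear order on distinct subsets of a chain. This comparison is $\MSO$-definable (note that $B_F(y)$ itself is $\MSO$-definable), so it linearly orders siblings with distinct $B_F$-values. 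To break ties among siblings sharing the same value $B_F(y) = B \cup \{x\}$, I use a bounded palette of colours: by Lemma~\ref{Lem: simple conditions on separators}(a), the absence of a $K_{p,p}$-minor forces at most $p-1$ such siblings when $\abs{B} \geq p$, while Lemma~\ref{Lem: simple conditions on separators}(b) together with $\Sep(G,p) \leq d$ bounds their number by~$d$ when $\abs{B} < p$. Hence every such ``clump'' of siblings has size at most $N := \max(p-1,d)$.

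Finally I would exhibit the colouring as a parameter. Because $\Sep(G,0) \leq \Sep(G,p) \leq d$, the graph~$G$ has at most~$d$ connected components, hence at most~$d$ roots. Each non-root vertex lies in exactly one clump, determined by its immediate predecessor and its value~$B_F$; the clumps and the set of roots are pairwise disjoint, and each has at most~$N$ elements. Thus the ``conflict graph'' in which two vertices are adjacent iff they share a clump (or are both roots) is a disjoint union of cliques of size at most~$N$ and therefore admits a proper colouring $c : V \to [N]$. I take the colour classes $C_0,\dots,C_{N-1}$ as additional parameters and break remaining ties (same-$B_F$ siblings, and roots) by colour index. This yields a total order on the immediate successors of every vertex and on the roots, and hence, via the lexicographic construction, an $\MSO$-definable linear order on~$\lceil G\rceil$.

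The heart of the argument is the bounded-clump property, which is exactly where the excluded minor enters, through Lemma~\ref{Lem: simple conditions on separators}; once the clumps are bounded the existence of the colouring is immediate and the $\MSO$-definability of~$\preceq_F$, of~$B_F$, and of the resulting lexicographic order is routine. The one point that still needs care is verifying that the sibling comparison really is a total order and that the per-tree orders and the root order assemble into a single linear order on all of~$\lceil G\rceil$.
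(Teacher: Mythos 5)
Your proposal is correct and follows essentially the same route as the paper's proof: a normal spanning forest supplied as parameters, a lexicographic order along the tree-order, siblings compared by a linear order on their sets $B_F(y)$ (subsets of a chain), and ties among siblings with equal $B_F$-value broken by a bounded colouring whose size is controlled exactly by the two parts of Lemma~\ref{Lem: simple conditions on separators}. The only cosmetic differences are that you compare the sets $B_F(y)$ by last difference rather than the paper's $\leq_\lex$, and you organise the tie-breaking parameters as colour classes of a conflict graph rather than directly as $d+k$ distinguishing sets.
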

\begin{proof}
Consider a graph $G \in \calC_{p,d}$.
Let $F$~be a normal spanning forest of~$G$.
Since $G$~has $\Sep(G,0) \leq d$ connected components,
the forest~$F$ has at most~$d$ roots.
Recall that a forest is oriented with edges pointing away from the roots.
We can encode~$F$ by two parameters\?:
its set of edges and its set of roots.
(Since the first set consists of edges and the second one of vertices,
we could even take their union as a single parameter.)
We will use a lexicographic order on~$F$ to order~$G$,
based on orderings (i)~of the roots of~$F$ and
(ii)~of the immediate successors of every vertex of~$F$.

Consider a vertex $x \in F$ with immediate successors $y_0,\dots,y_{m-1}$.
Since each set $B_F(y_i)$ is linearly ordered by~$\preceq_F$,
we can define a preorder on the immediate successors
by using the lexicographic ordering of the sets $B_F(y_i)$\?:
\begin{align*}
  y_i \sqsubseteq y_k \quad\defiff\quad B_F(y_i) \leq_\lex B_F(y_k)\,.
\end{align*}
To prove that there is a definable order extending this preorder,
it is sufficient to show that the equivalence classes of this preorder
have bounded cardinality.
Let $k := \max {\{p,d\}}$. For every set $B \subseteq \Pred_F(x)$,
there are at most~$k$ immediate successors~$y_i$ of~$x$ with
$B_F(y_i) = B \cup \{x\}$\?:
for $\abs{B} \geq p$, this follows from
Lemma~\ref{Lem: simple conditions on separators}\,(a)\?;
for $\abs{B} < p$, it follows from
Lemma~\ref{Lem: simple conditions on separators}\,(b).

The parameters needed to define the desired linear order consist of
the set of edges of the spanning forest~$F$
and $d+k$ parameters to distinguish and order the roots of~$F$ and to order
the immediate successors~$y$ of a vertex~$x$ that have the same set~$B_F(y)$.
\end{proof}

\begin{thm}\label{Thm: GSO-orderable with excluded minor}
Let $\calC$~be a class of graphs omitting a minor~$H$.
The following statements are equivalent\?:
\begin{enumerate}
\item $\calC$~is $\MSO_2$-orderable.
\item $\calC$~has property $\SEP$.
\item $\calC$~has property $\SEP(f)$ for some elementary function~$f$.
\end{enumerate}
Furthermore, given~$H$ we can compute a number~$k$ such that we can
replace $\SEP(f)$ by $\SEP(\exp_k)$ in~\textup{(3)}.
\end{thm}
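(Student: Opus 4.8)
The plan is to establish the cycle (1)~$\Rightarrow$~(3)~$\Rightarrow$~(2)~$\Rightarrow$~(1) and then to read off the explicit tower height from the last step. The implication (1)~$\Rightarrow$~(3) is immediate from Corollary~\ref{Cor: GSO-orderable implies SEP}, and (3)~$\Rightarrow$~(2) holds because an elementary function is in particular a function, so the real work is (2)~$\Rightarrow$~(1). There I would first convert the excluded minor~$H$ into an excluded complete bipartite minor: setting $p := \abs{V(H)}$, the graph~$H$ is a subgraph of~$K_p$, and $K_p$ is a minor of~$K_{p,p}$ (contract a perfect matching), so by transitivity of the minor relation any graph omitting~$H$ also omits~$K_{p,p}$. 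Assuming now that $\calC$~has property $\SEP(f)$, I would put $d := f(p)$; then every $G \in \calC$ satisfies $\Sep(G,p) \leq d$ and omits~$K_{p,p}$, so $\calC \subseteq \calC_{p,d}$. By Theorem~\ref{Thm: ordering graphs without Kpp minor} the class~$\calC_{p,d}$ is $\MSO_2$-orderable, and since a formula ordering a class orders every subclass, $\calC$~is $\MSO_2$-orderable as well.

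For the supplementary claim I would return to the ordering formula constructed in the proof of Theorem~\ref{Thm: ordering graphs without Kpp minor} and make one key observation about it: its quantifier rank is an absolute constant~$m_0$, independent of both $p$~and~$d$. Indeed, the normal spanning forest, its tree order, the sets~$B_F(x)$, and the lexicographic comparison of these sets are all expressed by reachability and fixed Boolean combinations; the only quantity that grows with $\max\{p,d\}$ is the \emph{number} of colour parameters used to break ties among immediate successors sharing the same set~$B_F$, and testing membership in a parameter class and comparing indices costs no additional quantifiers. Feeding this formula, with its $n = O(d)$ parameters and quantifier rank~$m_0$, into Proposition~\ref{Prop: Sep boundend}, I obtain $\Sep(G,j) \leq \exp_{g(n,m_0)}(j)$ for all $G \in \calC$, where $g$~is the computable function supplied by that proposition. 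Since the tower height $g(n,m_0)$ is governed by the quantifier rank~$m_0$ alone --- the parameter count~$n$ raises only the innermost argument of the iterated exponential, not the height of the tower --- it is bounded by a number~$k$ computable from~$m_0$, hence from~$H$, and this is the promised bound $\SEP(\exp_k)$.

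The hard part will be the bookkeeping in the second paragraph. I must verify scrupulously that the quantifier rank~$m_0$ does not secretly depend on~$d$ (the tie-breaking colours enlarge the signature but not the logical depth of the formula), and then trace, through the theory-counting estimate underlying Proposition~\ref{Prop: Sep boundend}, that adding unary parameters inflates only the base of the outermost exponential rather than the height of the tower. Once these two points are pinned down, the uniform tower height---and hence the computability of~$k$ from~$H$---falls out, and the remaining manipulations are routine.
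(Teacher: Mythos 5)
Your proof of the equivalence of (1)--(3) is correct and follows the paper's own argument exactly: (1)$\,\Rightarrow\,$(3) by Corollary~\ref{Cor: GSO-orderable implies SEP}, (3)$\,\Rightarrow\,$(2) trivially, and (2)$\,\Rightarrow\,$(1) by converting the excluded minor~$H$ into an excluded $K_{p,p}$ and invoking Theorem~\ref{Thm: ordering graphs without Kpp minor} with $d := f(p)$, so that $\calC \subseteq \calC_{p,d}$. (The paper justifies ``$H$~is a minor of~$K_{p,p}$'' by noting that every graph with $n$~vertices and $m$~edges is a minor of~$K_{n,m}$; your route through $K_p$ and contracting a perfect matching is equally valid.)

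The supplementary claim is the only place where you go beyond what the paper writes down (its proof is silent on that clause), and it is also where your argument has a gap. You are right that the ordering formula produced by Theorem~\ref{Thm: ordering graphs without Kpp minor} has quantifier rank bounded by an absolute constant~$m_0$, only its number~$n$ of parameters growing with $\max\{p,d\}$. The problem is the next step: the tower height $g(n,m)$ in Proposition~\ref{Prop: Sep boundend} cannot be made independent of~$n$. The number of quantifier-rank-$m$ theories of structures with $n$~unary predicates and $k+1$ constants is of the order $\exp_m(\mathrm{poly}(n+k))$; for $k$~large relative to~$n$ this is absorbed into $\exp_{m+2}(k)$, but at $k = 0$ it is $\exp_m(\mathrm{poly}(n))$, unbounded in~$n$, so no bound of the form $\exp_{g(m)}(k)$ valid for \emph{all}~$k$ can hold --- and in your chain $n$~is of order $d = f(p)$, where $f$~comes from the first, arbitrary ordering formula and is not controlled by~$H$. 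This is not mere bookkeeping: a class consisting of a single edgeless graph on $N$~vertices omits every~$H$ containing an edge, is $\MSO_2$-orderable (being finite), and has $\Sep(G,0) = N$, so no~$k$ computable from~$H$ alone can give $\Sep(G,j) \leq \exp_k(j)$ for all~$j$. The supplementary clause must therefore be read asymptotically (for sufficiently large~$j$, or up to a constant depending on the class), and under that reading your argument does go through with a tower height of $m_0 + O(1)$; compare the singly exponential bound $\Sep(G,k) \leq d + k^2 + k2^k\max\{p,d\}$ that the paper derives after the theorem, which likewise carries the uncontrolled additive constant~$d$.
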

\begin{proof}
$(1) \Rightarrow (3)$ follows by Corollary~\ref{Cor: GSO-orderable implies SEP}
and $(3) \Rightarrow (2)$ is trivial.

For $(2) \Rightarrow (1)$, suppose that $\calC$~has property $\SEP(f)$.
By Theorem~\ref{Thm: ordering graphs without Kpp minor},
all classes $\calC_{p,d}$ are $\MSO_2$-orderable.
Since every graph with $n$~vertices and $m$~edges is a minor of~$K_{n,m}$,
we can choose $p$ sufficiently large such that $H$~is a minor of~$K_{p,p}$.
Set $d := f(p)$.
Then $\calC \subseteq \calC_{p,d}$ and it follows that
$\calC$~is also $\MSO_2$-orderable.
\end{proof}
\begin{rem}\label{Rem: bounded tree-width is MSO2-orderable}
(a)
For each $k \in \bbN$, the class of graphs of tree-width at most~$k$
excludes some (planar) graph as a minor and, hence,
it satisfies the conditions of Theorem~\ref{Thm: GSO-orderable with excluded minor}.

(b)
Although this fact is not directly related to our work,
we mention that Grohe has proved
that every class of graphs excluding a minor
is orderable in least fixed-point logic.
It follows that least fixed-point logic captures PTIME on these
classes \cite{Grohe10,GroheXX}.
\closingmark\end{rem}
In contrast to Remark~\ref{Rem: bounded tree-width is MSO2-orderable}\,(a),
we have the following result for classes of graphs of bounded
\emph{$n$-depth tree-width} (which is defined as tree-width, but where we
only consider tree decompositions with index trees of height at most~$n$).
This graph complexity measure was introduced in~\cite{BlumensathCourcelle10}.
\begin{prop}\label{Prop: n-depth tree-width hereditarily MSO2-unorderable}
Let $n,k \in \bbN$.
A class of graphs of $n$-depth tree-width at most~$k$
is $\MSO_2$-orderable if, and only if, it is finite.
Hence, the class of all graphs of $n$-depth tree-width at most~$k$ is
hereditarily $\MSO_2$-unorderable.
\end{prop}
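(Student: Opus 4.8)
The plan is to prove the key direction --- that $\MSO_2$-orderability forces finiteness --- by exhibiting an infinite family of graphs of $n$-depth tree-width at most~$k$ that is hereditarily $\MSO_2$-unorderable, and then showing any infinite subclass of graphs of bounded $n$-depth tree-width contains such a family. The most natural approach is to reduce to the trees of bounded height handled in Corollary~\ref{Cor: trees of bounded depth hereditarily MSO2-unorderable}. Recall that a graph of $n$-depth tree-width at most~$k$ admits a tree decomposition whose index tree~$T$ has height at most~$n$ and whose bags have size at most~$k+1$. The idea is that such a bounded-height, bounded-width decomposition makes the graph ``look like'' a tree of bounded height for the purpose of the separator criterion, so that if the class were infinite and orderable we could contradict the fact established for trees of bounded height.

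First I would make precise the connection between $n$-depth tree-width and the separation behaviour measured by $\Sep$. Given a decomposition with index tree of height $\le n$ and bags of size $\le k+1$, removing the (at most $(k+1)(n+1)$) vertices appearing in all bags along a root-to-node path disconnects the part of the graph hanging below that node from the rest. So the relevant move is to show, via Corollary~\ref{Cor: GSO-orderable implies SEP}, that an $\MSO_2$-orderable subclass has property $\SEP(f)$ for an elementary~$f$, and then use the bounded height of the decomposition to convert the resulting separator bound into a bound on the ``branching'' of the index trees. The goal is to argue that, up to isomorphism, only finitely many graphs of $n$-depth tree-width $\le k$ can satisfy $\SEP(f)$: bounded bag-size, bounded decomposition-height, and bounded branching together bound the total number of vertices, leaving only finitely many graphs.

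The cleanest way to organise the last step is a pumping/indistinguishability argument directly mirroring the proof of Proposition~\ref{Prop: Sep boundend}. Suppose toward a contradiction that an infinite subclass~$\calD$ of graphs of $n$-depth tree-width $\le k$ is orderable by a formula of quantifier-rank $\le m$ with $\le n$ parameters. Since $\calD$ is infinite but all its graphs decompose over trees of height $\le n$ with bags of size $\le k+1$, some graph $G \in \calD$ must have a node~$t$ in its index tree with a very large number of children whose subtrees are $\MSO$-theory-equivalent relative to the bag at~$t$ (the bag, of size $\le k+1$, serves as the separator~$S$). Picking a representative vertex in two such equivalent subtrees and applying the Composition Theorem (Theorem~\ref{Thm: composition}) over the shared bag, I would show the two vertices are interchangeable by any formula of quantifier-rank $\le m$, so no such formula can order~$G$ --- contradiction.

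The main obstacle I expect is the bookkeeping in this last step: the separating set~$S$ is not a single bag but the union of bags along a path from the root, and one must verify that the subtrees hanging off the deep node are genuinely glued to the rest of~$G$ only through~$S$, so that $G$ really is an amalgamation over the tuple enumerating~$S$ (as required to invoke Theorem~\ref{Thm: composition}). Managing the interaction between bounded height (which caps $\abs{S}$ by $(k+1)(n+1)$, hence caps the number of relevant theories elementarily, by Proposition~\ref{Prop: Sep boundend}) and the necessarily unbounded branching in an infinite class is the crux; once the branching is shown to force two theory-equivalent, separately-attached subtrees, the interchangeability argument is routine and the finiteness conclusion follows immediately. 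The second sentence of the statement is then just the special case $\calD$ equal to the whole class of graphs of $n$-depth tree-width $\le k$, which is infinite, so it is hereditarily $\MSO_2$-unorderable.
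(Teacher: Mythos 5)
Your argument is sound, but it takes a genuinely different route from the paper. The paper's proof is two lines: by Remark~\ref{Rem: orderable implies interpretation of paths}, an infinite $\MSO_2$-orderable class admits an $\MSO$-transduction onto the class of all finite paths, and Theorem~6.4 of~\cite{BlumensathCourcelle10} says that no class of bounded $n$-depth tree-width admits such a transduction; this is short but imports a nontrivial external result about the transduction hierarchy. Your route stays inside the present paper: Corollary~\ref{Cor: GSO-orderable implies SEP} gives property $\SEP(f)$ for an elementary~$f$ on any orderable subclass, and one then argues combinatorially that a graph of $n$-depth tree-width at most~$k$ with property $\SEP(f)$ has boundedly many vertices. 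Two remarks on making that combinatorial step clean. First, you do not need the union of bags along a root-to-node path as separator: for a node~$t$ with children $u_0,\dots,u_{d-1}$, the single bag~$B_t$ (of size at most $k+1$) already separates the vertex sets below distinct children from each other and from the rest of the graph; after pruning children whose subtree is entirely contained in~$B_t$ (a harmless normalisation of the decomposition), each remaining child contributes at least one connected component of $G - B_t$, whence $d \leq \Sep(G,k+1) \leq f(k+1)$. Bounded branching, height at most~$n$, and bags of size at most $k+1$ then bound the number of vertices by $(k+1)\sum_{i \leq n} f(k+1)^i$, and finiteness follows since there are only finitely many graphs of bounded size. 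Second, your proposed pumping/indistinguishability re-run of the proof of Proposition~\ref{Prop: Sep boundend} is then redundant: Corollary~\ref{Cor: GSO-orderable implies SEP} already packages exactly that amalgamation argument, so you should invoke it rather than repeat it. What your approach buys is a self-contained, elementary proof with an explicit size bound; what the paper's approach buys is brevity, at the cost of relying on a deeper result about the $\MSO$-transduction hierarchy.
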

\begin{proof}
Let $\calC$~be an infinite class of graphs of $n$-depth tree-width at most~$k$.
As we have argued in
Remark~\ref{Rem: orderable implies interpretation of paths},
if $\calC$~were $\MSO_2$-orderable, we could define an
$\MSO_2$-transduction mapping it to the class of all finite paths.
This is not possible by Theorem~6.4 of~\cite{BlumensathCourcelle10}.
\end{proof}

In the following we try to compute a better bound on the function~$f$ in
Theorem~\ref{Thm: GSO-orderable with excluded minor}\,(3).
We can improve the bound from elementary to singly exponential.
\begin{lem}\label{Lem: number of components containing a successor}
Let $G$~be a graph such that $\Sep(G,p) \leq d$ and $K_{p,p}$
is not a minor of~$G$.
Let $F$~be a normal spanning forest of~$G$ and $S$~a set of at most~$k$ vertices of~$G$.
For every vertex $x \in S$, at most $k + 2^k \cdot \max{\{p,d\}}$
connected components of $G - S$ contain an immediate successor of~$x$ (in~$F$).
\end{lem}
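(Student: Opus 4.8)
The plan is to fix the vertex $x \in S$, to root the relevant part of~$F$ at~$x$, and to classify every component of $G - S$ containing an immediate successor of~$x$ according to whether it stays below a single successor or reaches above~$x$. For an immediate successor~$y$ of~$x$, write $T_y$ for the subtree of~$F$ rooted at~$y$, i.e.\ the set of vertices~$z$ with $y \preceq_F z$; distinct such subtrees are disjoint. Since $F$~is normal, every edge of~$G$ joins $\preceq_F$-comparable vertices, so the only edges leaving~$T_y$ run to $B_F(y) \subseteq \Pred_F(x) \cup \{x\}$. For an immediate successor $y \notin S$ let $C_y$ be its component in $G - S$; I call $C_y$ \emph{internal} if $C_y \subseteq T_y$ and \emph{external} otherwise. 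Since the subtrees are disjoint, an internal component meets exactly one immediate successor, so the internal components are in bijection with the successors~$y$ for which $C_y$~is internal.

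First I would bound the external components. If $C_y \not\subseteq T_y$, a path in $G - S$ from~$y$ to a vertex outside~$T_y$ must leave~$T_y$ through an edge whose outer endpoint lies in $B_F(y) \setminus S$; as $x \in S$ this endpoint lies in $\Pred_F(x) \setminus S$, so $C_y$ meets $\Pred_F(x) \setminus S$. Now $\Pred_F(x)$ is a $\preceq_F$-chain, hence connected in~$G$ through its tree edges, and deleting the at most $\abs{S \cap \Pred_F(x)} \leq k-1$ vertices of~$S$ on it leaves at most~$k$ sub-chains, each connected in $G - S$ and so contained in a single component. Thus $\Pred_F(x) \setminus S$ meets at most~$k$ components, which bounds the external ones by~$k$.

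For the internal components I would invoke Lemma~\ref{Lem: simple conditions on separators}, and here lies the main obstacle: $C_y \subseteq T_y$ does \emph{not} imply $B_F(y) \subseteq S$, because $T_y$ may have a neighbour above~$x$ that is merely separated from~$y$ by a vertex of~$S$ lying inside~$T_y$. I would therefore split the internal successors. If $S \cap T_y = \emptyset$, then $T_y$ is connected in $G - S$, so any $b \in B_F(y) \setminus S$ would join~$C_y$ and contradict internality; hence $B_F(y) \subseteq S$, i.e.\ $B_F(y) = B \cup \{x\}$ with $B \subseteq S \cap \Pred_F(x)$. There are at most $2^{k-1}$ such sets~$B$, and by Lemma~\ref{Lem: simple conditions on separators} each is realised by at most $\max\{p,d\}$ immediate successors, giving at most $2^{k-1}\cdot\max\{p,d\}$ components. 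The remaining internal successors have $S \cap T_y \neq \emptyset$; as the subtrees are disjoint and $x \notin T_y$, there are at most $\abs{S}-1 \leq k-1$ of these. Using $k-1 \leq 2^{k-1} \leq 2^{k-1}\cdot\max\{p,d\}$ to absorb the linear term, the internal components number at most $2^k\cdot\max\{p,d\}$, and adding the external bound yields the claimed $k + 2^k\cdot\max\{p,d\}$.
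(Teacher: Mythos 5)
Your proof is correct, and in outline it matches the paper's: the components of $G-S$ containing an immediate successor of~$x$ are split into those that reach up into the chain $\Pred_F(x)$ (at most~$k$ of them, since deleting the at most $k-1$ vertices of~$S$ on that chain leaves at most~$k$ connected pieces, each contained in a single component) and those confined below~$x$, for which $B_F(y)\subseteq S$ leaves at most $2^{k}$ possible values of $B_F(y)$, each realised by at most $\max\{p,d\}$ successors by Lemma~\ref{Lem: simple conditions on separators}. Where you genuinely differ is in how this dichotomy is justified, and your version is the more careful one. The paper assigns to each successor~$y$ the set of ``gaps'' of~$S$ on the chain that contain an element of $B_F(y)$ and asserts that successors lying in different components of $G-S$ receive disjoint sets of gaps; but a witness $z\in B_F(y)\setminus S$ arises from an edge $(u,z)$ whose endpoint~$u$ may be cut off from~$y$ by vertices of~$S$ inside the subtree~$T_y$, so $z$ need not lie in the component of~$y$, and the disjointness claim does not obviously cover the situation you single out. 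You avoid this by arguing with an actual $G-S$ path that every external component meets $\Pred_F(x)\setminus S$, and you explicitly isolate the residual case --- components with $C_y\subseteq T_y$ but $S\cap T_y\neq\emptyset$, where $B_F(y)\subseteq S$ can fail --- bounding their number by $k-1$ via the disjointness of the nonempty sets $S\cap T_y$ inside $S\setminus\{x\}$. The extra additive $k-1$ is correctly absorbed into $2^{k}\cdot\max\{p,d\}$ (this uses $\max\{p,d\}\geq 1$, which holds in every non-degenerate instance of the lemma), so the stated bound $k+2^{k}\cdot\max\{p,d\}$ follows.
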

\begin{proof}
Let $s_0 \prec_F \dots \prec_F s_{m-1} = x$ be an enumeration of
$\Pred_F(x) \cup \{x\}$.
For an immediate successor~$y$ of~$x$, we define
\begin{align*}
  I(y) := \set{ i < m }
              { \text{there is some } z \in B_F(y) \text{ such that }
                z \prec_F s_i \text{ and }
                (i = 0 \text{ or } s_{i-1} \prec_F z) }\,.
\end{align*}
If $y$~and~$y'$ are immediate successors of~$x$ in different connected components of $G - S$, then
$I(y) \cap I(y') = \emptyset$. Consequently, there are at most $m \leq k$
connected components of $G - S$ containing an immediate successor~$y$ of~$x$
such that $I(y) \neq \emptyset$.

It remains to show that there are at most $2^k\cdot\max {\{p,d\}}$ components
of $G - S$ containing an immediate successor~$y$ with $I(y) = \emptyset$.
Every such immediate successor~$y$ satisfies $B(y) \subseteq S$.
Hence, $B(y)$~can take at most $2^m \leq 2^k$ values and,
according to Lemma~\ref{Lem: simple conditions on separators},
for each such value $B \subseteq S$ there are
at most $\max {\{p,d\}}$ immediate successors~$y$ with $B(y) = B$.
\end{proof}

\begin{prop}
Let $G$~be a graph such that $\Sep(G,p) \leq d$ and $K_{p,p}$ is not a minor
of~$G$. Then
\begin{align*}
  \Sep(G,k) \leq d + k^2 + k2^k \cdot \max {\{p,d\}}\,,
  \quad\text{for }  k \geq p\,.
\end{align*}
\end{prop}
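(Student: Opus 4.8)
The plan is to bound $\Sep(G,k)$ by counting the connected components of $G-S$ for an arbitrary set $S$ with $\abs{S}\le k$, using the normal spanning forest~$F$ of~$G$ to organise the components. First I would fix a set $S=\{s_0,\dots,s_{k-1}\}$ realising (or nearly realising) $\Sep(G,k)$, and classify each connected component $C$ of $G-S$ according to how it attaches to the forest. Every such $C$ is a union of subtrees of~$F$, so I would look at the forest-minimal vertices of~$C$, i.e.\ those vertices of~$C$ whose immediate $F$-predecessor lies outside~$C$. Each such minimal vertex is either a root of~$F$ or an immediate successor of some vertex; and since~$C$ is connected in $G-S$ while distinct components are separated, the key observation is that the predecessor of a minimal vertex must in fact lie in~$S$ (otherwise the component would extend further up the tree, contradicting minimality, or would merge with a neighbouring component).

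The main counting step then reduces to: how many components can contain an immediate successor of a given vertex~$x\in S$, plus how many components arise as subtrees hanging off a root. This is exactly what Lemma~\ref{Lem: number of components containing a successor} controls: for each $x\in S$, at most $k+2^k\cdot\max\{p,d\}$ components of $G-S$ contain an immediate successor of~$x$. Summing over the $k$ vertices of~$S$ gives a contribution of at most $k\bigl(k+2^k\cdot\max\{p,d\}\bigr)=k^2+k2^k\cdot\max\{p,d\}$. I would then account separately for the components that attach only at a root of~$F$ (those not reached through any $s_i$); since $F$ has at most $\Sep(G,0)\le\Sep(G,p)\le d$ roots, these contribute at most~$d$ further components. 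Adding the two bounds yields
\begin{align*}
  \Sep(G,k)\le d+k^2+k2^k\cdot\max\{p,d\}\,,
\end{align*}
which is the claimed inequality for $k\ge p$.

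The delicate point — and the step I expect to be the main obstacle — is verifying that \emph{every} connected component of $G-S$ is accounted for by one of these two cases, i.e.\ that each component either contains an immediate successor of some $s_i\in S$ or is a subtree rooted at a root of~$F$ with no $F$-edge crossing into~$S$. This requires the normality of~$F$: because $F$ is a normal spanning forest, every edge of~$G$ connects $\preceq_F$-comparable vertices, so a component of $G-S$ cannot be glued to the rest of the graph by an edge that jumps between incomparable branches. I would use this to argue that the unique $\preceq_F$-minimal vertex of each component (or each of its forest-subtrees) has its immediate predecessor inside~$S$ or is a root, ruling out any component that evades both cases. Once this exhaustiveness is established, the bound follows immediately by summation, the hypothesis $k\ge p$ being needed only so that the estimate $\max\{p,d\}$ from Lemma~\ref{Lem: number of components containing a successor} applies uniformly.
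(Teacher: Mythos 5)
Your proof is correct and follows essentially the same route as the paper: fix a normal spanning forest, observe that every component of $G-S$ contains either a root of~$F$ or an immediate $F$-successor of some vertex of~$S$, bound the first kind by the number $\leq d$ of roots and the second kind via Lemma~\ref{Lem: number of components containing a successor} summed over the at most $k$ vertices of~$S$. The only minor remark is that the exhaustiveness claim you flag as delicate needs only that forest edges are edges of~$G$ (the $\preceq_F$-minimal vertex of a component is a root or has its immediate predecessor in~$S$), not normality, which is used only inside the cited lemma.
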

\begin{proof}
Let $F$~be a normal spanning forest of~$G$ and $S$~a set of at most~$k$ vertices of~$G$.
We have seen in Lemma~\ref{Lem: number of components containing a successor}
that, for every vertex $x \in S$, at most $k + 2^k \cdot \max{\{p,d\}}$
connected components of $G - S$ contain an immediate successor of~$x$.
Since every connected component of $G - S$ contains
a root of~$F$ or the immediate successor of some $x \in S$,
there are at most $d + k(k + 2^k \cdot \max {\{p,d\}})$ such components.
\end{proof}

Every class omitting some minor~$H$ also omits $K_{p,p}$ as a minor,
for all sufficiently large~$p$.
The following corollary states that, in order to determine whether
such a class is $\MSO_2$-orderable, it is sufficient to bound the numbers
$\Sep(G,p)$ as opposed to the function $k \mapsto \Sep(G,k)$.
\begin{cor}\label{Cor: orderable if Sep is bounded}
Let $p \in \bbN$.
A class~$\calC$ of graphs omitting $K_{p,p}$ as a minor
is $\MSO_2$-orderable if, and only if,
\begin{align*}
  \sup {\set{ \Sep(G,p) }{ G \in \calC }} < \infty\,.
\end{align*}
\end{cor}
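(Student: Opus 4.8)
The plan is to derive the corollary directly from two results already in hand, treating the two implications separately. For the necessity direction I would assume $\calC$ is $\MSO_2$-orderable and invoke Corollary~\ref{Cor: GSO-orderable implies SEP}, which tells us that $\calC$ has property $\SEP(f)$ for some elementary function~$f$. Since property $\SEP(f)$ bounds $\Sep(G,k)$ for \emph{every} argument~$k$, in particular for $k = p$, we get $\Sep(G,p) \leq f(p)$ for all $G \in \calC$, and hence $\sup\set{\Sep(G,p)}{G \in \calC} \leq f(p) < \infty$. No further work is needed here; the point is only that the uniform bound furnished by $\SEP(f)$ specialises to the single argument~$p$.

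For the sufficiency direction I would set $d := \sup\set{\Sep(G,p)}{G \in \calC}$, which is finite by hypothesis. Then every $G \in \calC$ satisfies $\Sep(G,p) \leq d$, and since $\calC$ omits $K_{p,p}$ as a minor, no graph in~$\calC$ contains $K_{p,p}$ as a minor. Consequently $\calC \subseteq \calC_{p,d}$ in the notation of Theorem~\ref{Thm: ordering graphs without Kpp minor}. That theorem supplies a single $\MSO$-formula~$\varphi$ ordering every structure $\lceil G\rceil$ with $G \in \calC_{p,d}$; the same~$\varphi$ then orders every structure in the subclass~$\calC$, so $\calC$ is $\MSO_2$-orderable.

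The crux worth highlighting — and the reason this deserves to be stated as a corollary rather than folded into Theorem~\ref{Thm: GSO-orderable with excluded minor} — is that Theorem~\ref{Thm: ordering graphs without Kpp minor} constrains only the value $\Sep(G,p)$ at the \emph{single} argument~$p$, not the whole function $k \mapsto \Sep(G,k)$. Thus controlling separation at~$p$ already forces orderability, and by the preceding Proposition it in fact forces a bound on $\Sep(G,k)$ for all $k \geq p$ as well. There is no genuine obstacle in the argument: it is purely a matter of assembling the necessary condition from Corollary~\ref{Cor: GSO-orderable implies SEP} with the sufficient condition from Theorem~\ref{Thm: ordering graphs without Kpp minor} and observing that orderability is inherited by subclasses.
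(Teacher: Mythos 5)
Your proof is correct and takes essentially the route the paper intends (the corollary is left without an explicit proof, but it is assembled from exactly the results you cite): necessity by specialising Corollary~\ref{Cor: GSO-orderable implies SEP} to the single argument $k = p$, and sufficiency via the inclusion $\calC \subseteq \calC_{p,d}$ and Theorem~\ref{Thm: ordering graphs without Kpp minor}. Your closing remark that the preceding Proposition then yields a bound on $\Sep(G,k)$ for all $k \geq p$ is precisely the point the paper makes in the sentence introducing the corollary.
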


\begin{rem}
Graphs omitting a minor~$H$ are $r$-sparse (cf.~Definition~\ref{Def: sparse}),
for some number~$r$ depending on~$H$.
Since, for $r$-sparse graphs, the expressive powers of $\MSO_1$ and $\MSO_2$
coincide, it follows that the criterion in
Corollary~\ref{Cor: orderable if Sep is bounded} also characterises
$\MSO_1$-orderability.
\closingmark\end{rem}

\begin{rem}
The proof technique of Theorem~\ref{Thm: ordering graphs without Kpp minor}
can be extended to order certain classes of graphs that do not omit any graph
as a minor. We give two examples.

(a)
First, let us consider the class of graphs~$H_p$, for $p \geq 1$,
defined as follows.
The set of vertices of~$H_p$ is
\begin{align*}
  V := \{*\} \cup [p] \cup [p] \times S_p\,,
\end{align*}
where $S_p$~is the set of permutations of~$[p]$.
The graph~$H_p$ has the following edges\?:
\begin{alignat*}{-1}
  &(*,0) && \\
  &(*,(0,\sigma)) &&\quad\text{for } \sigma \in S_p\,, \\
  &(i,i+1)  &&\quad\text{for } i \in [p],\ i < p-1\,, \\
  &((i,\sigma),(i+1,\sigma)) &&\quad\text{for } i \in [p],\ \sigma \in S_p,\ i < p-1\,, \\
  &(i,(\sigma(i),\sigma)) &&\quad\text{for } i \in [p],\ \sigma \in S_p,\ i < p\,.
\end{alignat*}
The graph~$H_2$ is shown in Figure~\ref{Fig:H2}.
($e$~is the identity and $\tau$~is the transposition of $0$~and~$1$.)
\begin{figure}\centering
%
%
%
%
%
%
%
\includegraphics{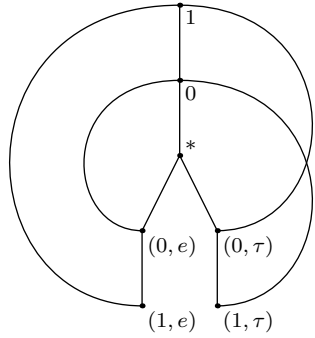}
\caption{The graph $H_2$.\label{Fig:H2}}
\end{figure}
Note that the vertex~$*$ has degree $1 + p!$.
Clearly, $H_p$~contains $K_{p,p!}$ as a minor.
Nevertheless, the class of graphs~$H_p$ is $\MSO_2$-orderable.
We can use a spanning tree whose root is the vertex $p-1$
and whose edges consist of the first four of the above types.
To compare two immediate successors $(0,\sigma)$ and $(0,\tau)$
of the vertex~$*$, we can use a lexicographic order on~$S_p$
(where we identify a permutation~$\sigma$ with the sequence
$\sigma(0)\dots\sigma(p-1)$).
Since each $H_p$~is $2$-sparse (as~it has an orientation of indegree~$2$,
cf.~Proposition~9.40 of~\cite{CourcelleEngelfriet12}),
it follows that the class is even $\MSO_1$-orderable
(cf.~Theorem~9.37 of~\cite{CourcelleEngelfriet12}).

(b) Another example is the class of cliques.
It is $\MSO_2$-orderable and does not omit a minor.
If we replace each edge by a path of length~$2$, we obtain a class of
$2$-sparse graphs that is $\MSO_2$-orderable and that still does not omit a
minor.
\closingmark\end{rem}

\begin{rem}
It is not possible to extend
Theorem~\ref{Thm: GSO-orderable with excluded minor} to $r$-sparse graphs.
A counterexample is given by the class~$\calC$ of all graphs
obtained from a bipartite graph of the form $K_{n,f(n)}$ by replacing every
edge by a path of length~$2$, where $f : \bbN \to \bbN$ is a fixed
non-elementary function.
This is a class of $2$-sparse graphs with property $\SEP$ that,
according to Corollary~\ref{Cor: GSO-orderable implies SEP},
is not $\MSO_2$-orderable.
\closingmark\end{rem}

\subsection{Deciding $\MSO_2$-orderability}   
\label{Sect: deciding orderability}

In Theorem~\ref{Thm: GSO-orderable with excluded minor} above,
we have presented a combinatorial property characterising
$\MSO_2$-order\-ability for classes of graphs omitting a minor.
A natural question is whether this property is decidable.
Of course, this question does only make sense for classes of graphs
that can be described in a finitary way.
Therefore, we will concentrate on
HR-equational and VR-equational classes.

\begin{prop}
It is decidable whether a VR-equational class~$\calC$ has property $\SEP$.
\end{prop}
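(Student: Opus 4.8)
We need to decide whether a VR-equational class $\calC$ has property $\SEP$, i.e., whether there exists a function $f : \bbN \to \bbN$ with $\Sep(\frakA, k) \leq f(k)$ for all $\frakA \in \calC$ and all $k$.

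**What tools do we have?**

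The key result available is the Semi-Linearity Theorem. This says that for a VR-equational class $\calC$ and an $\MSO$-formula $\varphi(X_0, \ldots, X_{n-1})$, the set of tuples of cardinalities $(|P_0|, \ldots, |P_{n-1}|)$ such that $\lfloor G \rfloor \models \varphi(\bar P)$ for some $G \in \calC$ is semi-linear and computably representable.

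**Reformulating $\SEP$.**

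Property $\SEP$ fails iff for some $k$, the numbers $\Sep(\frakA, k)$ are unbounded over $\frakA \in \calC$.

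Key idea: $\Sep(G, k) \geq m$ means there's a set $S$ of size $\leq k$ and $m$ connected components in $G - S$.

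I want to express "$\Sep(G, k) \geq m$" using cardinalities of parameter sets.

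**The plan.**

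For a fixed $k$, I want to detect whether $\sup_{G \in \calC} \Sep(G, k) = \infty$.

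Let me think about how to express the existence of $S$ (size $\leq k$) and $m$ connected components in $G - S$ via an MSO formula whose "witness cardinality" we can track.

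Approach: Use parameters to mark $S$ (a set of $\leq k$ vertices) and to mark a set $Y$ consisting of one representative per connected component of $G - S$. Then $|Y|$ equals the number of connected components of $G - S$ that are hit by $Y$. So I want an MSO formula $\varphi_k(S, Y)$ asserting:
- $|S| \leq k$ (this is first-order, say by a bounded formula, but cardinality bounds aren't directly MSO-expressible for a variable bound... but $k$ is fixed, so "$|S| \leq k$" IS MSO-expressible as a first-order statement with $\leq k$ quantifiers),
- $Y \cap S = \emptyset$,
- no two elements of $Y$ lie in the same connected component of $G - S$ (i.e., for all $y, y' \in Y$, $y \neq y'$, there is no path from $y$ to $y'$ avoiding $S$ — connectivity/reachability avoiding a set is MSO-expressible).

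Then $\Sep(G, k)$ equals $\max\{|Y| : \lfloor G \rfloor \models \varphi_k(S, Y) \text{ for some } S\}$. Actually existential over $S$: define $\psi_k(Y) := \exists S\, \varphi_k(S, Y)$.

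Now apply the Semi-Linearity Theorem to $\psi_k(Y)$ (with the single free set variable $Y$). The set $M_{\psi_k}(\calC) = \{ |Y| : \lfloor G \rfloor \models \psi_k(\bar P) \text{ for some } G \in \calC\} \subseteq \bbN$ is semi-linear and computable. A semi-linear subset of $\bbN$ is either finite or contains an infinite arithmetic progression — so it is bounded iff it is finite, and finiteness is decidable from the representation.

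$\sup_{G} \Sep(G, k) = \sup M_{\psi_k}(\calC)$, so boundedness for fixed $k$ is decidable.

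**The remaining obstacle.**

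Property $\SEP$ quantifies over ALL $k$. So even if each $k$ is individually decidable, I need to decide whether there is SOME $k$ with unbounded $\Sep(\cdot, k)$. I can't just test each $k$ in a loop.

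The fix should use bounded clique-width: every VR-equational class has bounded clique-width, say $\leq c$. I expect a bound: if $\Sep(G, k)$ is bounded for all $k \leq $ some threshold (depending only on $c$), then it's bounded for all $k$. Concretely, I would argue that for graphs of clique-width $\leq c$, if $\Sep(G, c) \leq d$ (or some function of $c$), then $\Sep(G, k)$ is bounded for all $k$ in terms of $c$ and $d$ — because once you remove more than $c$ vertices, the "type" structure saturates. Something like: bounded clique-width already gives $\Sep(G, k) \leq g(k, c)$ for a fixed function, so the class automatically has property $\SEP$! But that can't be right, since bipartite graphs $K_{n,m}$ have bounded... no, $K_{n,m}$ has clique-width $\leq 2$ but $\Sep$ unbounded.

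So the real obstacle is reducing the quantification over all $k$ to a finite check. I'd look for a lemma: for clique-width $\leq c$, $\SEP$ holds iff $\Sep(G, k)$ is bounded for $k = 1, \ldots, N(c)$ for some computable $N(c)$, OR iff a single semi-linear condition (with $k$ itself a tracked parameter) is bounded.

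Better idea: make $k$ a parameter too. Consider $\theta(S, Y)$: $Y$ picks one vertex per component of $G - S$, with $S$ arbitrary. Track $(|S|, |Y|)$. Semi-linearity gives a semi-linear set $M \subseteq \bbN^2$. Property $\SEP$ says: for each fixed first coordinate $k$, the set of achievable second coordinates is bounded. For semi-linear $M \subseteq \bbN^2$, this "bounded-in-fibers" property is decidable by analyzing the periods $\bar p_j$: $\SEP$ fails iff some generating period has zero first coordinate and positive second coordinate.

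This last reformulation is the cleanest and I expect it to be the crux.

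\begin{proof}
By the definition of property $\SEP$, the class $\calC$ has this property if,
and only if, for every $k \in \bbN$ the number $\Sep(\lfloor G\rfloor,k)$ is
bounded as $G$~ranges over~$\calC$. We reduce this to a question about a
semi-linear set, to which the Semi-Linearity Theorem applies.

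First we express separation by an $\MSO$-formula. For a graph
$G = \langle V,E\rangle$ and sets $S,Y \subseteq V$, consider the formula
\begin{align*}
  \theta(S,Y) := {}& \forall y\forall y'\bigl[
     y \in Y \land y' \in Y \land y \neq y' \lso {} \\
     &\qquad\neg\mathrm{conn}(y,y';S)\bigr] \land \forall y[y \in Y \lso y \notin S]\,,
\end{align*}
where $\mathrm{conn}(y,y';S)$ is the standard $\MSO$-formula stating that
$y$~and~$y'$ are joined by a path in $G-S$. Thus $\theta(S,Y)$ asserts that
$S \cap Y = \emptyset$ and that distinct elements of~$Y$ lie in distinct
connected components of $G-S$. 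Consequently, for every $G$~and every $S$,
the maximal size of a set~$Y$ with $\lfloor G\rfloor \models \theta(S,Y)$
equals the number of connected components of $G-S$, and therefore
$\Sep(\lfloor G\rfloor,k)$ equals the maximum of $\abs{Y}$ over all pairs
$(S,Y)$ with $\abs{S} \leq k$ and $\lfloor G\rfloor \models \theta(S,Y)$.

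Now apply the Semi-Linearity Theorem to $\theta(S,Y)$ and the class~$\calC$.
We obtain a semi-linear set $M := M_\theta(\calC) \subseteq \bbN^2$,
a finite representation of which is computable from~$\theta$ and a
representation of~$\calC$. By the remarks above,
\begin{align*}
  \Sep(\lfloor G\rfloor,k) = \max\set{ n }{ (s,n) \in M \text{ for some } s \leq k,\ G \in \calC }\,,
\end{align*}
so that $\calC$~has property $\SEP$ if, and only if, for every $k$ the set
$\set{ n }{ (s,n) \in M,\ s \leq k }$ is finite. Write~$M$ as a finite union
of linear sets, each of the form
$\set{ \bar b + i_0\bar p_0+\dots+i_{r-1}\bar p_{r-1} }{ i_0,\dots,i_{r-1} \in \bbN }$
with $\bar b,\bar p_j \in \bbN^2$. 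For a fixed first coordinate~$k$ the second
coordinate is unbounded precisely when some period~$\bar p_j$ in one of these
linear sets has first coordinate~$0$ and positive second coordinate, since
only such a period can increase~$n$ while keeping the first coordinate
bounded; conversely, if no such period exists, every fibre is finite. Hence
$\calC$~fails to have property $\SEP$ if, and only if, some generating
period~$\bar p_j$ satisfies $\bar p_j = (0,c)$ with $c > 0$. This last
condition is readily checked on the computed representation of~$M$, which
yields the desired decision procedure.
\end{proof}
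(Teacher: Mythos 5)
Your proof is correct and follows essentially the same route as the paper: express the separation structure by an $\MSO$-formula with set parameters $S$ and $Y$, apply the Semi-Linearity Theorem to obtain a computable semi-linear set $M \subseteq \bbN^2$, and decide whether every fibre over a bounded first coordinate is finite. The only difference is cosmetic (the paper's formula requires $Y$ to contain \emph{exactly} one vertex per component rather than at most one), and you additionally spell out the period-based criterion for fibre-boundedness that the paper leaves implicit.
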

\begin{proof}
Let $\calC$~be a VR-equational class and
let $\varphi(X,Y)$ be an $\MSO$-formula expressing, for a graph~$G$,
that the set~$Y$ contains exactly one vertex of each connected component of $G - X$.
The class~$\calC$ has property $\SEP$ if, and only if, there exists a function~$f$
such that, for all $G = \langle V,E\rangle \in \calC$ and $P,Q \subseteq V$,
\begin{align*}
  G \models \varphi(P,Q) \qtextq{implies} \abs{Q} \leq f(\abs{P})\,.
\end{align*}

According to the Semi-Linearity Theorem, the set
\begin{align*}
  M(\calC) :=
    \bigset{ (\abs{P}, \abs{Q}) }
           { G \models \varphi(P,Q) \text{ for some}
             G = \langle V,E\rangle \in \calC \text{ and }
             P,Q \subseteq V }
\end{align*}
is semi-linear and an effective description of $M(\calC)$ can be computed from a system of equations for~$\calC$.
Using this description, we can check whether or not, for every $n \in \bbN$,
the set $\set{ p }{ (n,p) \in M(\calC) }$ is bounded.
This is the case if, and only if, $\calC$ has property~$\SEP$.
\end{proof}

\begin{cor}\label{Cor: orderability decidable}
For an $\mathrm{HR}$-equational class~$\calC$,
it is decidable whether $\calC$~is $\MSO_2$-orderable.
\end{cor}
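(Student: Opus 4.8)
The plan is to reduce $\MSO_2$-orderability to property $\SEP$ by way of the excluded-minor characterisation, and then to decide $\SEP$ itself by appealing to VR-equationality.

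First I would note that every HR-equational class has bounded tree-width. Hence, by Remark~\ref{Rem: bounded tree-width is MSO2-orderable}\,(a), the class~$\calC$ omits some (planar) graph as a minor. This places us exactly in the setting of Theorem~\ref{Thm: GSO-orderable with excluded minor}, which yields that $\calC$~is $\MSO_2$-orderable if, and only if, $\calC$~has property $\SEP$. Crucially, this equivalence holds for \emph{any} excluded minor, so we do not need to exhibit a concrete minor or compute it from a presentation of~$\calC$\?: it suffices to know that one exists, which bounded tree-width already guarantees.

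Next I would invoke the fact, recalled in Section~\ref{Sect: preliminaries}, that every HR-equational class of simple graphs is VR-equational. Thus $\calC$~is VR-equational, and the Proposition immediately preceding this corollary applies: property $\SEP$ is decidable for VR-equational classes. Combining this with the equivalence above, deciding whether $\calC$~is $\MSO_2$-orderable reduces to deciding whether $\calC$~has property $\SEP$, and the latter is effective.

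The proof is therefore an assembly of earlier results rather than a new construction, and I do not expect a genuine obstacle. The one point demanding care is to keep the two graph representations straight\?: $\MSO_2$-orderability is a property of $\lceil\calC\rceil$, whereas property $\SEP$ is defined through the vertex representation $\lfloor\,\cdot\,\rfloor$ and its decidability rests on the VR-equationality of $\lfloor\calC\rfloor$. Since HR-equationality of~$\calC$ supplies simultaneously the bounded tree-width needed for the excluded-minor step and the VR-equationality of $\lfloor\calC\rfloor$ needed for the decidability step, both ingredients are available and the corollary follows.
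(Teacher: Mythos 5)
Your proposal is correct and follows essentially the same route as the paper: bounded tree-width of HR-equational classes gives an excluded minor, Theorem~\ref{Thm: GSO-orderable with excluded minor} reduces $\MSO_2$-orderability to property $\SEP$, and the preceding proposition (via VR-equationality) makes $\SEP$ decidable. Your added observations -- that the concrete minor need not be computed and that the two graph representations must be kept apart -- are accurate but not points the paper's proof needs to dwell on.
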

\begin{proof}
An $\mathrm{HR}$-equational class~$\calC$
has bounded tree-width (Proposition~4.7 of~\cite{CourcelleEngelfriet12}) and, hence,
omits some $K_{p,p}$ as a minor.
Since $\mathrm{HR}$-equational classes (of simple graphs) are VR-equational,
it follows from Theorem~\ref{Thm: GSO-orderable with excluded minor} that
$\calC$~is $\MSO_2$-orderable if, and only if, it has property~$\SEP$.
The latter is decidable by the above proposition.
\end{proof}

\begin{rem}
An alternative decidability proof can be based on Corollary~\ref{Cor: orderable if Sep is bounded}.
As the tree-width of $K_{p,p}$ is~$p$,
every class~$\calC$ of tree-width at most $p-1$ omits $K_{p,p}$ as a minor.
Furthermore, an upper bound on the tree-width of an $\mathrm{HR}$-equational class~$\calC$
can be computed from a system of equations for~$\calC$
(see Proposition 4.7 of~\cite{CourcelleEngelfriet12}).
By Corollary~\ref{Cor: orderable if Sep is bounded},
$\calC$~is $\MSO_2$-orderable if, and only if,
the set $\set{ \Sep(G,p) }{ G \in \calC }$ is bounded.
To check this condition, we consider the formula~$\varphi(X)$ expressing that
there exists a set~$S$ of size $\abs{S} \leq p$ such that $X$~contains
exactly one vertex of each connected component of $G - S$.
By the Semi-Linearity Theorem, we can compute a representation of the semi-linear set
\begin{align*}
  M(\calC) :=
    \bigset{ \abs{P} }{ G \models \varphi(P) \text{ for some } G = \langle V,E\rangle \in \calC \text{ and } P \subseteq V }\,.
\end{align*}
Using this representation we can check whether or not $M(\calC)$ is finite.
\closingmark\end{rem}

For VR-equational classes we do not obtain decidability since we cannot apply
Theorem~\ref{Thm: GSO-orderable with excluded minor}. We conjecture that a corresponding
statement holds also for these classes.
\begin{conj}\label{Conj: VR-equational with SEP is orderable}
Every VR-equational class that has property $\SEP$ is $\MSO_2$-orderable.
\end{conj}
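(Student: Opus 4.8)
The plan is to adapt the spanning-forest-plus-lexicographic-order strategy of Theorem~\ref{Thm: ordering graphs without Kpp minor} to the decomposition structure underlying a VR-equational class. Since such a class has bounded clique-width, every $G \in \calC$ admits a $k$-expression, and the associated term tree has bounded degree (its only branching operation is the binary disjoint union). Rather than work with this abstract term---whose internal nodes are not elements of $\lceil G\rceil$---I would try to encode on the vertex set $V$ itself a bounded-degree \emph{decomposition forest}, by analogy with the normal spanning forest used before, together with a bounded amount of port-label information. By Theorem~\ref{Thm: orderability of trees} a bounded-degree forest is orderable, so if such a forest and its auxiliary data were $\MSO_2$-definable with finitely many parameters, the induced lexicographic order on $V$ would order $G$, as in Remark~\ref{Rem: orderable implies interpretation of paths}.

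The principal new phenomenon, absent under an excluded minor, is that a separator may have unboundedly many \emph{siblings} sharing the same back-neighbourhood $B$, as already happens for cliques; then Lemma~\ref{Lem: simple conditions on separators} no longer bounds their number and the bare forest order fails. The guiding idea is a dichotomy for each such sibling group, which I would establish using the two hypotheses in tandem. Either a cut of bounded size separates the group into many connected components---this covers independent sets and also the complete-bipartite-like groups with a small side, cf.\ Example~\ref{Exam: Sep for bipartite graph}---in which case property $\SEP(f)$ bounds the group size and it can be ordered by finitely many parameters exactly as in Theorem~\ref{Thm: ordering graphs without Kpp minor}; or the group is highly connected internally, and---because $\cwd(G)$ is bounded---it is forced to resemble a clique, so that it can be ordered by an $\MSO_2$ edge-set parameter selecting a Hamiltonian path, exactly as for the class of cliques. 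Combining the two kinds of local orderings with the forest order would yield a global lexicographic order on $V$.

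The step I expect to be the main obstacle is the \emph{definability} of the decomposition forest and its sibling data in $\MSO_2$---in effect, inverting in MSO the VR-transduction that presents $\calC$. Transductions are not invertible in general, and there is a real circularity: a canonical clique-width decomposition appears to need an order, whereas the order is exactly what the decomposition is meant to provide. The role of property~$\SEP$ must be to break this circularity by supplying enough asymmetry that the grouping into sibling classes and the back-neighbourhood preorder on each class become $\MSO_2$-definable with a bounded number of parameters. A secondary obstacle is coherence: the edge-set parameters of the clique-like case must be chosen simultaneously for all groups, the total number of parameters must stay bounded independently of $G$, and---by Remark~\ref{Rem: orderability formula}---the whole construction must be realised by a single formula of bounded quantifier rank. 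I would first test the approach on the modular-decomposition-like fragment, where the cograph case (Theorem~\ref{Thm: cographs}) already succeeds, and then attempt to lift it to arbitrary bounded-clique-width terms.
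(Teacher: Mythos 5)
The statement you are trying to prove is stated in the paper as an open \emph{conjecture}: the authors do not prove it, and they explicitly record (at the start of the section on split and chordal graphs) that they were unable to settle even the special case of VR-equational classes of cographs. They establish only two instances, complete $d$-partite graphs (Corollary~\ref{Cor: VR-equational d-partite graphs}) and chordal graphs (Corollary~\ref{Cor: orderability of VR-equational chordal graphs}), and in both the engine is the Semi-Linearity Theorem, which upgrades the bare property $\SEP$ to $\SEP(f)$ for a concrete (linear, hence at most singly exponential) $f$, after which the unconditional orderability theorems for those graph classes apply. Your proposal does not use semi-linearity at all, which is the one tool the paper actually extracts from VR-equationality; and, more importantly, it is not a proof but a plan whose central step you yourself flag as unresolved. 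That step --- making the decomposition forest and the sibling data $\MSO_2$-definable from a bounded number of vertex- and edge-set parameters, i.e.\ inverting the presenting transduction inside the graph --- is precisely the obstruction that leaves the conjecture open. Note that the nodes of a $k$-expression tree are not elements of $\lceil G\rceil$, and the induced laminar family of vertex subsets cannot in general be coded by an edge set of $G$ the way a spanning forest can; nothing in your sketch supplies such a coding, and property $\SEP$ by itself gives no definable canonical decomposition.

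Two further points would need repair even if the definability step were granted. First, the second branch of your dichotomy is unsubstantiated: bounded clique-width does not force a sibling group that is not separated by small cuts to ``resemble a clique''. Already $K_{n,n}$ (clique-width~$2$, not disconnected by any cut of size below~$n$) is the model case, and the paper orders such groups not by a Hamiltonian-path parameter but by the counting construction of Lemma~\ref{Lem: ordering supergraphs of bipartite graphs}, which only works because $\SEP(f)$ with $f$ exponential bounds one side in terms of the other; with $\SEP$ alone the bound $f$ could be non-elementary, and then Corollary~\ref{Cor: GSO-orderable implies SEP} shows the class is \emph{not} $\MSO_2$-orderable unless VR-equationality (via semi-linearity) excludes this --- so semi-linearity is not optional. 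Second, your intended test case rests on a misreading: Theorem~\ref{Thm: cographs} characterises $\MSO_1$-orderability of cographs via property $\CUT$ (equivalently, bounded cotree outdegree); it says nothing about $\MSO_2$-orderability under $\SEP$, and that cograph instance of the conjecture is exactly what the authors state they could not prove. In summary, the proposal identifies a reasonable programme and correctly locates the hard point, but it does not constitute a proof, and there is no proof in the paper to compare it with.
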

Below we will prove this conjecture for the special cases of complete
$d$-partite graphs (Corollary~\ref{Cor: VR-equational d-partite graphs})
and chordal graphs
(Corollary~\ref{Cor: orderability of VR-equational chordal graphs}).

\subsection{Dense graphs}   

We have characterised $\MSO_2$-orderability
in Theorem~\ref{Thm: GSO-orderable with excluded minor} for classes excluding a minor.
The graphs in such classes are sparse.
In this section and the next one,
we consider the opposite extreme of certain dense graphs,
in particular, multi-partite graphs and chordal graphs.

\begin{lem}\label{Lem: ordering supergraphs of bipartite graphs}
Let $s,r \in \bbN$ and let $\calC$~be a class of graphs
such that each $G \in \calC$ is obtained from some $K_{n,m}$ with
$n \leq m \leq 2^{sn+r}$ by possibly adding new edges.
Then $\calC$~is $\MSO_2$-orderable.
\end{lem}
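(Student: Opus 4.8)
The plan is to reduce to the case of complete bipartite graphs and then to exhibit explicit parameters. By Remark~\ref{Rem: ordering graphs with additional edges} it is enough to order the class $\calC_0$ of all complete bipartite graphs $K_{n,m}$ with $n \le m \le 2^{sn+r}$, since every $G \in \calC$ is obtained from such a $K_{n,m}$ by adding edges, so $\calC$ is a subclass of $(\calC_0)_+$. When $n = 0$ the bound gives $m \le 2^r$, so only finitely many graphs arise and this subclass is orderable; hence, by Proposition~\ref{Prop: orderable classes closed under union}, I may assume $n \ge 1$. As noted earlier, to order $\lceil K_{n,m}\rceil$ it suffices to define a linear order on the vertex set $A \cup B$, where $A$ is the chosen smaller side with $\abs A = n$ and $B$ the larger one with $\abs B = m$; I will supply $A$ itself as a vertex parameter.

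First I would order the small side~$A$. Since $m \ge n-1$, I can single out $n-1$ distinct vertices of~$B$ and use them as connectors to form one alternating path $a_0,c_1,a_1,\dots,c_{n-1},a_{n-1}$ through all of~$A$, taking its edge set $P \subseteq E$ as an edge parameter together with a singleton vertex parameter $R$ marking the endpoint~$a_0$. The vertex order along a path is $\MSO_2$-definable from $P$ and~$R$ by
\begin{align*}
  u \le_P v \defiff \forall Z\,[\,(Z \text{ is } P\text{-connected} \land R \subseteq Z \land v \in Z) \to u \in Z\,]\,,
\end{align*}
and restricting $\le_P$ to~$A$ (using the parameter~$A$) yields the desired order $a_0 < \dots < a_{n-1}$ on~$A$.

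Next I would order~$B$ by encoding each $b \in B$ as a bit-string read off the edges $(a_i,b)$. Fixing $t := s+r$ edge parameters $P_0,\dots,P_{t-1} \subseteq E$, assign to~$b$ the string $\mathrm{code}(b) \in \{0,1\}^{[n]\times[t]}$ with $\mathrm{code}(b)_{i,j} = 1 \iff (a_i,b) \in P_j$. Since $nt = (s+r)n \ge sn+r \ge \log_2 m$ for $n \ge 1$, there are at least~$m$ available codes, so the parameters can be chosen to make $b \mapsto \mathrm{code}(b)$ injective; ordering~$B$ lexicographically by these codes (comparing positions first by the already-defined order on~$A$, then by the fixed index $j \in [t]$) is again $\MSO_2$-definable. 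The full order then places all of~$A$ below all of~$B$, orders~$A$ by~$\le_P$ and~$B$ by the lexicographic code order. This is a linear order on $A \cup B$ defined with the parameters $A$, $R$, $P$, $P_0,\dots,P_{t-1}$, so $\calC_0$, and hence~$\calC$, is $\MSO_2$-orderable.

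The step I expect to require the most care is the ordering of the small side~$A$: its vertices are mutually indistinguishable and $n$~is unbounded, so no bounded set of vertex parameters can separate them, which is exactly why the unbounded information carried by the edge parameter~$P$ (the alternating path) is needed. The exponential bound on~$m$ enters only in the encoding of~$B$: it guarantees $nt \ge \log_2 m$ for a \emph{constant}~$t$, so that a fixed number of edge parameters suffices to name all $m$~vertices of~$B$. Were $m$~doubly exponential in~$n$, as for $K_{n,2^{2^n}}$, this counting would fail, matching the non-orderability established in Theorem~\ref{Thm: GSO-orderable d-partite graphs}.
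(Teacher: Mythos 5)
Your proof is correct and follows essentially the same strategy as the paper's: first order the small side~$A$ using an edge parameter, then order~$B$ lexicographically via an injection of $[m]$ into $\PSet([(s+r)n])$ realised by $s+r$ edge parameters, which is exactly where the bound $m \leq 2^{sn+r}$ enters. The only differences are cosmetic: the paper orders~$A$ by neighbourhood inclusion with respect to the ``staircase'' edge set $S = \set{(a_i,b_j)}{i \leq j}$ rather than by an alternating path through connectors in~$B$, and it handles the added edges directly with the parameters $A$ and~$B$ instead of first reducing to complete bipartite graphs via Remark~\ref{Rem: ordering graphs with additional edges}.
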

\begin{proof}
Consider a graph $G = \langle V,E\rangle \in \calC$ obtained by adding new edges
from a bipartite graph $K_{n,m}$ where $n \leq m \leq 2^{sn+r}$
(see also Remark~\ref{Rem: ordering graphs with additional edges}).
If $n = 0$, then $G$~has $m \leq 2^r$ vertices and we can order~$G$
using $r$~parameters.
Thus, it remains to consider the case where $n > 0$.
Since $m \leq 2^{sn+r} \leq 2^{(s+r)n}$, there exists an injective function
$\mu : [m] \to \PSet([(s+r)n])$.
Fixing enumerations $a_0,\dots,a_{n-1}$ and $b_0,\dots,b_{m-1}$ of the two
vertex classes of~$K_{n,m}$,
we define an ordering of~$G$ using the following parameters.
\begin{alignat*}{-1}
  A &:= \set{ a_i }{ i < n } \subseteq V\,, \\
  B &:= \set{ b_i }{ i < m } \subseteq V\,, \\
  S &:= \set{ (a_i,b_j) }{ i \leq j } \subseteq E\,, \\
  R_k &:= \set{ (a_i,b_j) }{ kn+i \in \mu(j) } \subseteq E\,,
   \quad\text{for } k < s+r\,.
\end{alignat*}
First, we define a strict order~$<_A$ on~$A$ by
\begin{align*}
  u <_A v \quad\defiff\quad
  u \neq v
  \text{ and, for all } x \in B, (u,x) \in S \Rightarrow (v,x) \in S\,.
\end{align*}
By definition of~$S$, this order is linear.
We extend it to all vertices of~$G$ by defining
$u < v$ if, and only if, one of the following conditions holds\?:
\begin{itemize}
\item $u,v \in A$ and $u <_A v$.
\item $u \in A$ and $v \in B$.
\item $u,v \in B$, $u \neq v$, and, if $k$~is the minimal number such that,
  for some $x \in A$,
  \begin{align*}
    (x,u) \in R_k \Leftrightarrow (x,v) \notin R_k,
  \end{align*}
  and if $x \in A$ is the $<_A$-least element with this property, then
  $(x,u) \in R_k$ and $(x,v) \notin R_k$.
  \qedhere
\end{itemize}
\end{proof}

\noindent The technique employed in this proof will be used several times in this article.
Given an already defined order on a set~$A$, we can order the vertices
not in~$A$ using the lexicographic ordering on their sets of
neighbours in~$A$.

\begin{lem}
A class~$\calC$ of complete bipartite graphs is $\MSO_2$-orderable if,
and only if, there exists a constant~$s$ such that
\begin{align*}
  K_{n,m} \in \calC \text{ with } n \leq m \qtextq{implies}
  m \leq 2^{s(n+1)}.
\end{align*}
\end{lem}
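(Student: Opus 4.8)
The plan is to obtain both implications from results already in hand, with essentially no new machinery for one direction and a single sharpening of the type-counting argument for the other. For the implication from right to left I would simply invoke Lemma~\ref{Lem: ordering supergraphs of bipartite graphs}. If there is a constant~$s$ with $m \leq 2^{s(n+1)} = 2^{sn+s}$ whenever $K_{n,m} \in \calC$ and $n \leq m$, then every graph in~$\calC$ is a complete bipartite graph $K_{n,m}$ with $n \leq m \leq 2^{sn+r}$ for $r := s$, so, taking the trivial option of adding no edges, the hypothesis of that lemma is met and $\calC$ is $\MSO_2$-orderable.

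For the converse I would suppose $\calC$ is $\MSO_2$-orderable and, by Remark~\ref{Rem: orderability formula}, fix a single formula $\varphi(x,y;\bar Z)$ of quantifier-rank~$h$ with a fixed number~$N$ of parameters $\bar Z = \langle Z_0,\dots,Z_{N-1}\rangle$ ordering all of $\lceil\calC\rceil$. Given $K_{n,m} \in \calC$ with $n \leq m$, write $A = \{a_0,\dots,a_{n-1}\}$ and $B = \{b_0,\dots,b_{m-1}\}$ for its two sides and let $\bar P$ be parameters for which $\varphi(x,y;\bar P)$ orders $\lceil K_{n,m}\rceil$. Exactly as in the proof of Proposition~\ref{Prop: Sep boundend}, I would use the small side~$A$ as a separator: the connected components of $K_{n,m}-A$ are the singletons $\{b_0\},\dots,\{b_{m-1}\}$.

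The crucial point, and the main obstacle, is that the generic bound of Corollary~\ref{Cor: GSO-orderable implies SEP} is merely elementary, whereas here I must obtain the singly exponential bound $2^{(n+1)N}$ on the number of distinct $\MSO$-types realised by these $m$~components. I would get this by exploiting the star-like shape of each component. The substructure $\lceil K_{n,m}[\{b_j\}\cup A]\rceil$, expanded by the constants $a_0,\dots,a_{n-1},b_j$ and by $\bar P$ restricted to its universe, consists of the~$n$ fixed vertices~$a_i$, the vertex~$b_j$, and the~$n$ edges $(a_i,b_j)$, with the incidence relation completely pinned down by the constants. Since the~$a_i$ and their parameter memberships are the same for every~$j$, the isomorphism type of this expanded structure is determined solely by the~$N$ bits recording whether $b_j$ lies in each~$Z_l$ together with the $nN$ bits recording whether each edge $(a_i,b_j)$ lies in each~$Z_l$; as isomorphic structures share every complete theory, there are at most $2^{(n+1)N}$ possible values of
\[
  \MTh_h\bigl(\lceil K_{n,m}[\{b_j\}\cup A]\rceil,\bar P\restriction(\{b_j\}\cup A),a_0,\dots,a_{n-1},b_j\bigr).
\]

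Finally I would conclude by a pigeonhole-and-swap argument. If $m > 2^{(n+1)N}$, two components $\{b_j\}$ and $\{b_{j'}\}$ carry the same type above, so, amalgamating the components over the tuple $\langle a_0,\dots,a_{n-1}\rangle$ precisely as in Proposition~\ref{Prop: Sep boundend}, the Composition Theorem (Theorem~\ref{Thm: composition}) yields
\[
  \MTh_h\bigl(\lceil K_{n,m}\rceil,\bar P,a_0,\dots,a_{n-1},b_j,b_{j'}\bigr)
  = \MTh_h\bigl(\lceil K_{n,m}\rceil,\bar P,a_0,\dots,a_{n-1},b_{j'},b_j\bigr),
\]
whence $K_{n,m}\models\varphi(b_j,b_{j'};\bar P)$ holds if, and only if, $K_{n,m}\models\varphi(b_{j'},b_j;\bar P)$, contradicting antisymmetry since $b_j\neq b_{j'}$. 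Therefore $m\leq 2^{(n+1)N}$, and taking $s:=N$ gives the required bound $m\leq 2^{s(n+1)}$ uniformly over~$\calC$.
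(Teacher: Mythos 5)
Your proof is correct and follows essentially the same route as the paper: for the hard direction both arguments pigeonhole on the $2^{s(n+1)}$ possible parameter-membership patterns of a vertex~$b_j$ of the large side together with its $n$~incident edges, and then swap two indistinguishable such vertices to contradict antisymmetry. The only difference is in how the swap is justified -- the paper exhibits an explicit automorphism of $\langle\lceil K_{n,m}\rceil,\bar P\rangle$ interchanging $b_i$ and $b_j$ (and their incident edges), whereas you route the same conclusion through the amalgamation over the small side and the Composition Theorem as in Proposition~\ref{Prop: Sep boundend}; both give the identical bound $m\leq 2^{s(n+1)}$ with $s$~the number of parameters, and your $(\Leftarrow)$ direction is exactly the paper's appeal to Lemma~\ref{Lem: ordering supergraphs of bipartite graphs}.
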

\begin{proof}
$(\Leftarrow)$
is a special case of Lemma~\ref{Lem: ordering supergraphs of bipartite graphs}.

$(\Rightarrow)$
Suppose that $\calC$~is ordered by an $\MSO$-formula $\varphi(x,y;\bar Z)$
with $s$~set variables $Z_0,\dots,Z_{s-1}$.
We claim that there is no $K_{n,m} \in \calC$ such that $m > 2^{s(n+1)}$.

For a contradiction, suppose that there is such a graph $K_{n,m} \in \calC$.
Let $\bar P$ be the parameters such that $\varphi(x,y;\bar P)$ orders $\lceil K_{n,m}\rceil$.
We enumerate the two vertex sets of~$K_{n,m}$ as $a_0,\dots,a_{n-1}$ and
$b_0,\dots,b_{m-1}$.
Since $m > 2^{s(n+1)}$ there is a subset $I \subseteq [m]$
of cardinality $\abs{I} > 2^{s(n+1)}/2^s = 2^{sn}$ such that
\begin{align*}
  b_i \in P_l \Leftrightarrow b_j \in P_l
  \quad\text{for all } i,j \in I \text{ and all } l < s\,.
\end{align*}
Similarly, there is a subset $J \subseteq I$ of cardinality
$\abs{J} > 2^{sn}/2^{sn} = 1$ such that
\begin{align*}
  (a_k,b_i) \in P_l \Leftrightarrow (a_k,b_j) \in P_l
  \quad\text{for all } i,j \in J \text{ and all } l < s \text{ and } k < n\,.
\end{align*}
Hence, there are at least two indices $i < j$ in~$J$.
The mapping $\pi : K_{n,m} \to K_{n,m}$ that interchanges $b_i$~and~$b_j$
and leaves every other vertex fixed
is an automorphism of the structure $\langle \lceil K_{n,m}\rceil,\bar P\rangle$.
Hence,
\begin{align*}
  \lceil K_{n,m}\rceil \models \varphi(b_i,b_j;\bar P)
  \quad\iff\quad
  \lceil K_{n,m}\rceil \models \varphi(b_j,b_i;\bar P)\,,
\end{align*}
and $\varphi$~does not define an order on $K_{n,m}$.
A~contradiction.
\end{proof}

\begin{lem}\label{Lem: ordering multipartite graphs}
Let $\calC$~be a class of graphs of the form $K_{m_0,\dots,m_{d-1}}$ where
\begin{align*}
  d > 2 \qtextq{and}
  m_1+\dots+m_{d-1} \geq m_0 \geq m_1 \geq\dots\geq m_{d-1} \geq 1\,.
\end{align*}
Then $\calC$~is $\MSO_2$-orderable.
\end{lem}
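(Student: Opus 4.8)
The plan is to use the fact that the balance hypothesis $m_1+\dots+m_{d-1} \geq m_0$ is precisely the classical Hamiltonicity criterion for complete multipartite graphs, and then to order each graph along a \emph{Hamiltonian path}, exactly as cliques were ordered in the clique example following Definition~\ref{Def: orderable class}. The parameters will be the edge set of such a path together with one of its endpoints, and the ordering formula will be the fixed ``walk along the path'' formula, which does not depend on the particular graph.

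First I would recall the standard combinatorial fact that a complete $d$-partite graph $K_{m_0,\dots,m_{d-1}}$ with $m_0 \geq \dots \geq m_{d-1} \geq 1$ and at least three vertices is Hamiltonian if, and only if, its largest part satisfies $m_0 \leq m_1+\dots+m_{d-1}$. Since every $G \in \calC$ meets this inequality and, because $d > 2$ and $m_{d-1} \geq 1$, has at least three vertices, each such $G = \langle V,E\rangle$ contains a Hamiltonian cycle, and hence (by deleting one of its edges) a Hamiltonian path. For each $G$ I would fix one such path and take as parameters its edge set $P \subseteq E$ together with a singleton $Q \subseteq V$ marking one of its two endpoints. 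Note that $P$ and $Q$ are always of the same kind (a set of edges and a set of one vertex), so they are admissible parameters over $\lceil G\rceil$ for a single $\MSO_2$-formula.

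It then remains to define the order in $\MSO_2$ from $P$ and $Q$. Working in $\lceil G\rceil$, it suffices to order the vertex set $V$, since for simple graphs any vertex order canonically induces one on $V \cup E$ (as observed in the remarks following Definition~\ref{Def: orderable class}). Writing $\mathrm{conn}_P(Z)$ for the $\MSO_2$-formula expressing that the vertex set $Z$ is connected using only edges in $P$, I would define the path order by
\[
  u \preceq v \quad\defiff\quad
  \forall Z\bigl[\,(Q \subseteq Z \land v \in Z \land \mathrm{conn}_P(Z)) \lso u \in Z\,\bigr].
\]
On the path $(V,P)$ the connected vertex sets are exactly the contiguous segments, so the connected sets containing both the endpoint and $v$ are precisely the initial segments reaching at least $v$; their intersection is the segment from the endpoint to $v$. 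Hence $u \preceq v$ holds iff $u$ lies weakly between the endpoint and $v$, which is the linear order along the path. As $\mathrm{conn}_P$ is $\MSO_2$-definable, this is a single fixed $\MSO_2$-formula $\varphi(x,y;P,Q)$, independent of $G$, and it orders every graph admitting a Hamiltonian path encoded by its parameters.

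The only genuinely non-routine ingredient is the Hamiltonicity of the graphs in $\calC$; the path-ordering step is identical to the one already used for cliques. I therefore expect the main (essentially sole) obstacle to be supplying this Hamiltonicity, either by citation or, for a self-contained treatment, by a short direct construction: one arranges the vertices into a cyclic sequence in which consecutive vertices always lie in distinct parts, which a greedy placement always permits under $m_0 \leq m_1+\dots+m_{d-1}$ (the largest part can never be forced to supply two consecutive positions). I would remark that the hypothesis $d > 2$ plays no essential role beyond guaranteeing at least three vertices and separating this lemma from the purely bipartite case treated earlier; the argument itself works verbatim for any $d \geq 2$ satisfying the balance condition.
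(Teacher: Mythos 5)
Your proof is correct, but it takes a genuinely different route from the paper's. The paper builds the order in layers and never invokes Hamiltonicity: one edge-set parameter $R$, linking a representative of each part to a representative of the next, orders the parts themselves; a second parameter $S$ then orders $B := A_1 \cup\dots\cup A_{d-1}$ by lexicographic comparison of neighbourhoods across consecutive parts; and the balance hypothesis $m_0 \leq m_1+\dots+m_{d-1}$ enters only at the very last step, to guarantee that $A_0$ injects into the already-ordered set $B$, which is encoded by one further edge-set parameter $S_0$. You instead spend the balance hypothesis up front, via the classical fact that a complete multipartite graph on at least three vertices is Hamiltonian exactly when its largest part is at most the sum of the others, and then apply the generic path-ordering formula --- the same device the paper uses for cliques in the example after Definition~\ref{Def: orderable class} and for the clique part in Theorem~\ref{Thm: orderability of split graphs}. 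Your formula ($u \preceq v$ iff every $P$-connected vertex set containing the marked endpoint and $v$ also contains $u$) is the standard $\MSO_2$ definition of the order along a path, your parameters are legitimate subsets of $V \cup E$, and the formula is uniform over the class, so the argument goes through. What your route buys is brevity and the observation that $d > 2$ is inessential (it also covers $K_{m,m}$); what it costs is that the combinatorial content is outsourced to the Hamiltonicity criterion, which you must cite or prove --- sufficiency is immediate from Chv\'atal--Erd\H{o}s, since $K_{m_0,\dots,m_{d-1}}$ has connectivity $N - m_0$ and independence number $m_0$, and your greedy round-robin placement also works. The paper's construction is self-contained and relies only on the lexicographic-neighbourhood technique it reuses throughout Section~\ref{Sect: MSO2}.
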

\begin{proof}
Consider $K_{m_0,\dots,m_{d-1}} \in \calC$ with
$m_0 \geq\dots\geq m_{d-1} \geq 1$.
Let $A_0,\dots,A_{d-1}$ be the vertex sets of this graph and let
$a^k_0,\dots,a^k_{m_k-1}$ be an enumeration of~$A_k$.
Using the parameter
\begin{align*}
  R := \set{ (a^k_0,a^{k+1}_0) }{ 0 \leq k < d-1 }
\end{align*}
we can define the preorder
\begin{align*}
  u \sqsubseteq v \quad\defiff\quad
  u \in A_i \text{ and } v \in A_k \text{ for all } i \leq k\,.
\end{align*}
As usual, we write
\begin{align*}
  u \equiv v    &\quad\defiff\quad u \sqsubseteq v \text{ and } v \sqsubseteq u\,, \\
  u \sqsubset v &\quad\defiff\quad u \sqsubseteq v \text{ and } v \nsqsubseteq u\,.
\end{align*}
Using the parameter
$S := \set{ (a^k_i,a^{k+1}_j) }{ i \leq j }$
and~$\sqsubseteq$, we can define a linear order~$\leq_B$ on $B := A_1 \cup\dots\cup A_{d-1}$
by setting $u \leq_B v$ if, and only if,
\begin{itemize}
\item $u \sqsubset v$ or
\item $u \equiv v$ and, for all~$x \sqsubset u$, $(x,u) \in S$ implies $(x,v) \in S$.
\end{itemize}
Hence, it remains to define a linear order~$\leq_A$ on~$A_0$.
Since $m_0 \leq m_1+\dots+m_{d-1}$, we can fix an enumeration $b_0,\dots,b_{n-1}$ of~$B$
and use the parameter
$S_0 := \set{ (a^0_i,b_j) }{ i \leq j }$
to define such an order.
\end{proof}

\begin{thm}\label{Thm: GSO-orderable d-partite graphs}
Let $\calC$~be a class of graphs that are all complete $d$-partite
for some $d \in \bbN$.
(We do not require the number~$d$ to be the same for every graph.)
The following statements are equivalent\?:
\begin{enumerate}
\item $\calC$~is $\MSO_2$-orderable.
\item There exists a constant~$s$ such that $\calC$~has property $\SEP(f)$ 
  where $f(k) = 2^{s(k+1)}$.
\item There exists a constant~$s$ such that
  \begin{align*}
    K_{m_0,\dots,m_{d-1}} \in \calC
    \qtextq{implies}
    M \leq 2^{s(N - M + 1)}
  \end{align*}
  where $M := \max_{i < d} m_i$ and $N := \sum_{i < d} m_i$.
\end{enumerate}
\end{thm}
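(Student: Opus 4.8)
The plan is to prove the cycle $(1) \Rightarrow (3)$, $(3) \Rightarrow (1)$, together with the routine equivalence $(2) \Leftrightarrow (3)$, concentrating the real work on the logical condition~$(1)$. Throughout I write a complete $d$-partite graph as $K_{m_0,\dots,m_{d-1}}$ with $m_0 \ge \dots \ge m_{d-1} \ge 1$, so that $M = m_0$ is the size of the largest class~$A_0$ and $N - M = m_1 + \dots + m_{d-1}$ is the number of vertices in $B := A_1 \cup \dots \cup A_{d-1}$. For $(2) \Leftrightarrow (3)$, Example~\ref{Exam: Sep for bipartite graph} supplies, for $d \ge 2$,
\begin{align*}
  \Sep(K_{m_0,\dots,m_{d-1}},k) =
  \begin{cases} 1 & \text{if } k < N - M, \\ M & \text{if } k \ge N - M. \end{cases}
\end{align*}
Assuming $(3)$ with constant~$s$, for every $G \in \calC$ and every~$k$ we get $\Sep(G,k) = 1$ when $k < N - M$ and $\Sep(G,k) = M \le 2^{s(N-M+1)} \le 2^{s(k+1)}$ when $k \ge N - M$; this is exactly $\SEP(f)$ for $f(k) = 2^{s(k+1)}$, i.e.~$(2)$. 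Conversely, evaluating $\SEP(f)$ at $k = N - M$ gives $M = \Sep(G,N-M) \le f(N-M) = 2^{s(N-M+1)}$, which is~$(3)$.

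For $(1) \Rightarrow (3)$ I would generalise the automorphism argument used for complete bipartite graphs. Suppose $\calC$ is ordered by $\varphi(x,y;\bar Z)$ with $s$~parameters, and let $G = K_{m_0,\dots,m_{d-1}} \in \calC$ be ordered by $\varphi(x,y;\bar P)$. The key point is that any two vertices of the largest class~$A_0$ are false twins: they share the neighbourhood~$B$ and are mutually non-adjacent, and their only incident edges are the $N-M$ edges to~$B$. I attach to each $u \in A_0$ its \emph{type}, consisting of its membership pattern in the $s$ vertex parameters together with, for every $b \in B$, the membership pattern of the incident edge $(u,b)$ in the parameters (both meaningful since $\lceil G\rceil$ records edges as elements). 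The number of such types is at most $2^{s}\cdot (2^{s})^{N-M} = 2^{s(N-M+1)}$. Hence, if $M > 2^{s(N-M+1)}$, two distinct $u,v \in A_0$ share a type, so swapping $u$ with $v$ (and each $(u,b)$ with $(v,b)$) is an automorphism of $\langle\lceil G\rceil,\bar P\rangle$, forcing $\lceil G\rceil \models \varphi(u,v;\bar P) \liff \varphi(v,u;\bar P)$ and contradicting that $\varphi(x,y;\bar P)$ defines an order. Thus $M \le 2^{s(N-M+1)}$, which is~$(3)$.

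For $(3) \Rightarrow (1)$ I would split~$\calC$ into finitely many subclasses and order each, combining the results by Proposition~\ref{Prop: orderable classes closed under union}\,(a). When $N - M \ge M$ and $d > 2$, the inequalities of Lemma~\ref{Lem: ordering multipartite graphs} hold and the graph is orderable unconditionally; the remaining balanced case $d = 2$ forces $K_{m,m}$, which is orderable by Lemma~\ref{Lem: ordering supergraphs of bipartite graphs}. When $N - M < M$, the largest class exceeds all others combined, and $G$ is obtained from the complete bipartite graph $K_{N-M,M}$ on the bipartition $(B,A_0)$ by adding all edges internal to~$B$; here condition~$(3)$ reads $M \le 2^{s(N-M)+s}$, which is precisely the hypothesis of Lemma~\ref{Lem: ordering supergraphs of bipartite graphs} with smaller side $n = N-M$, larger side $m = M$, and $r = s$. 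Since the ordering formula of that lemma refers only to the parameters $A_0$, $B$, $S$, and $R_k$ (all carved from the bipartite edges) and never to the actual edge relation of the graph, it orders the supergraph~$G$ verbatim; alternatively one may invoke Remark~\ref{Rem: ordering graphs with additional edges}.

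The hardest part is the sufficiency direction $(3) \Rightarrow (1)$, and within it the conceptual step of recognising that, in the unbalanced regime, the right object to feed into Lemma~\ref{Lem: ordering supergraphs of bipartite graphs} is the bipartite graph obtained by collapsing all of~$B$ onto one side, so that the asymmetric bound $M \le 2^{s(N-M+1)}$ of~$(3)$ matches the asymmetric bound $m \le 2^{sn+r}$ of the lemma. One must also verify that the partition into subclasses (balanced $d>2$, balanced $d=2$, unbalanced) is finite and uniform in the number of parts, so that Proposition~\ref{Prop: orderable classes closed under union}\,(a) applies; it is exactly the $d$-uniformity of both Lemma~\ref{Lem: ordering multipartite graphs} and Lemma~\ref{Lem: ordering supergraphs of bipartite graphs} that makes this combination legitimate. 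Note that the mere elementary bound of Corollary~\ref{Cor: GSO-orderable implies SEP} is too weak for~$(2)$, so the sharp type-counting of the second step is what pins down the singly exponential form.
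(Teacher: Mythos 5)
Your proposal is correct and follows essentially the same route as the paper: the same case split for $(3)\Rightarrow(1)$ (the multipartite lemma in the balanced regime, the supergraph-of-$K_{N-M,M}$ route via Lemma~\ref{Lem: ordering supergraphs of bipartite graphs} and Remark~\ref{Rem: ordering graphs with additional edges} otherwise), the same swap-automorphism argument for $(1)\Rightarrow(3)$ (your single type-counting step is just the paper's two-stage pigeonhole collapsed into one), and the same evaluation of $\Sep$ at $k=N-M$ for $(2)\Leftrightarrow(3)$. Your explicit appeal to Proposition~\ref{Prop: orderable classes closed under union}\,(a) to recombine the subclasses is a welcome clarification of a step the paper leaves implicit.
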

\begin{proof}
$(3) \Rightarrow (1)$
Consider $K_{m_0,\dots,m_{d-1}} \in \calC$ with $m_0 \geq\dots\geq m_{d-1} \geq 1$.
We distinguish several cases.
\begin{itemize}
\item If $d \leq 2$, the claim follows by Lemma~\ref{Lem: ordering supergraphs of bipartite graphs}.
\item If $d > 2$ and $M \geq N-M$, we have
  $K_{N-M,M} \subseteq K_{m_0,\dots,m_{d-1}}$ and the claim follows
  by Remark~\ref{Rem: ordering graphs with additional edges} and
  Lemma~\ref{Lem: ordering supergraphs of bipartite graphs}.
\item If $d > 2$ and $M < N-M$ the claim follows by Lemma~\ref{Lem: ordering multipartite graphs}.
\end{itemize}

$(1) \Rightarrow (3)$
Suppose that $\lceil\calC\rceil$~is ordered by an $\MSO$-formula $\varphi(x,y;\bar Z)$
with $s$~set variables $Z_0,\dots,Z_{s-1}$.
We claim that there is no $K_{m_0,\dots,m_{d-1}} \in \calC$ with $M > 2^{s(N-M)+s}$.

For a contradiction, suppose that there is such a graph $K_{m_0,\dots,m_{d-1}} \in \calC$.
Let $\bar P$~be parameters such that $\varphi(x,y;\bar P)$ orders $K_{m_0,\dots,m_{d-1}}$.
Let $A$~be a vertex set of~$K_{m_0,\dots,m_{d-1}}$ of size~$M$ and
let $B$~be its complement.
We enumerate $A$~and~$B$ as $a_0,\dots,a_{M-1}$ and $b_0,\dots,b_{N-M-1}$, respectively.
Since $M > 2^{s(N-M)+s}$ there is a subset $I \subseteq [M]$
of cardinality $\abs{I} > 2^{s(N-M)+s}/2^s = 2^{s(N-M)}$ such that
\begin{align*}
  a_i \in P_l \Leftrightarrow a_j \in P_l
  \quad\text{for all } i,j \in I \text{ and all } l < s\,.
\end{align*}
Similarly, there is a subset $J \subseteq I$ of cardinality
$\abs{J} > 2^{s(N-M)}/2^{s(N-M)} = 1$
such that
\begin{align*}
  (a_i,b_k) \in P_l \Leftrightarrow (a_j,b_k) \in P_l
  \quad\text{for all } i,j \in J\,,\ l < s\,, \text{ and } k < N-M\,.
\end{align*}
Hence, there are at least two different indices $i,j \in J$.
The mapping $\pi : K_{m_0,\dots,m_{d-1}} \to K_{m_0,\dots,m_{d-1}}$
that interchanges $a_i$~and~$a_j$ and leaves every other vertex fixed
is an automorphism of the structure $\langle \lceil K_{m_0,\dots,m_{d-1}}\rceil,\bar P\rangle$.
Hence,
\begin{align*}
  \lceil K_{m_0,\dots,m_{d-1}}\rceil \models \varphi(a_i,a_j;\bar P)
  \quad\iff\quad
  \lceil K_{m_0,\dots,m_{d-1}}\rceil \models \varphi(a_j,a_i;\bar P)\,,
\end{align*}
and $\varphi$~does not define an order on $K_{m_0,\dots,m_{d-1}}$.
A~contradiction.

$(3) \Rightarrow (2)$
Let $K_{m_0,\dots,m_{d-1}}$ be a complete $d$-partite graph
and set $M := \max_{i < d} m_i$ and $N := \sum_{i < d} m_i$.
If $M \leq 2^{s(N-M+1)}$, then
\begin{align*}
  \Sep(K_{m_0,\dots,m_{d-1}},k)
  &= \begin{cases}
      1 &\text{if } k < N-M \\
      M &\text{if } k \geq N-M
    \end{cases} \\
  &\leq \begin{cases}
         2^{s(k+1)}   &\text{if } k < N-M \\
         2^{s(N-M+1)} &\text{if } k \geq N-M
       \end{cases} \\
  &\leq 2^{s(k+1)}.
\end{align*}

$(2) \Rightarrow (3)$
Suppose that $\calC$~has property $\SEP(f)$ where $f(k) = 2^{s(k+1)}$.
Note that
\begin{align*}
  \Sep(K_{m_0,\dots,m_{d-1}},k)
   = \begin{cases}
       1 &\text{if } k < N-M\,, \\
       M &\text{if } k \geq N-M\,,
     \end{cases}
\end{align*}
where $M$~and~$N$ are as above.
It follows that
\begin{align*}
  M = \Sep(K_{m_0,\dots,m_{d-1}},N-M)
    \leq f(N-M) = 2^{s(N-M+1)}.
\end{align*}
\upqed
\end{proof}

\noindent As a corollary we obtain a special case of
Conjecture~\ref{Conj: VR-equational with SEP is orderable}
for classes of complete $d$-partite graphs.
\begin{cor}\label{Cor: VR-equational d-partite graphs}
Let $\calC$~be a VR-equational class of complete $d$-partite graphs,
for some fixed natural number $d > 1$.
Then $\calC$ is $\MSO_2$-orderable if, and only if, it has property $\SEP$.
This property is decidable.
\end{cor}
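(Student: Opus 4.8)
The plan is to derive the corollary from Theorem~\ref{Thm: GSO-orderable d-partite graphs}, using VR-equationality---through the Semi-Linearity Theorem---to strengthen the hypothesis ``property~$\SEP$'' into condition~(3) of that theorem. Two parts are immediate. If $\calC$~is $\MSO_2$-orderable, then Theorem~\ref{Thm: GSO-orderable d-partite graphs} gives property $\SEP(f)$ with $f(k)=2^{s(k+1)}$ for some~$s$, hence property~$\SEP$. And the decidability claim requires nothing new: it is exactly the statement, proved in Section~\ref{Sect: deciding orderability}, that property~$\SEP$ is decidable for VR-equational classes. So the whole content is the converse, that property~$\SEP$ implies $\MSO_2$-orderability.

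For this direction I would write each graph as $K_{m_0,\dots,m_{d-1}}$ with $m_0\geq\dots\geq m_{d-1}\geq 1$ and put $M:=m_0$ and $N:=\sum_i m_i$, so that $N-M=m_1+\dots+m_{d-1}$. By Example~\ref{Exam: Sep for bipartite graph}, $\Sep(K_{m_0,\dots,m_{d-1}},N-M)=M$; assuming property $\SEP(f)$ with $f$ monotone (which we may arrange by replacing $f(k)$ with $\max_{j\leq k}f(j)$), this yields $M\leq f(N-M)$. I first want to note that the same bound holds for every part, not just the largest: since $m_i\leq M$ and $N-m_i\geq N-M$, monotonicity gives $m_i\leq f(N-M)\leq f(N-m_i)$ for all $i<d$.

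Next I would apply the Semi-Linearity Theorem to the $\MSO$-formula $\varphi(X,Y)$ stating that $X$~is a maximal independent set and $Y=V\setminus X$. In a complete $d$-partite graph the maximal independent sets are exactly the vertex classes, so $M_\varphi(\calC)=\set{(m_i,N-m_i)}{i<d,\ G\in\calC}$, and this set is semi-linear. The previous observation says that on this set the first coordinate is bounded by a function of the second. The key step is then a structural fact about semi-linear sets $S\subseteq\bbN^2$: if the first coordinate is bounded by some function of the second on~$S$, then it is bounded by a linear function, i.e.\ $a\leq cb+e$ for all $(a,b)\in S$ and suitable constants $c,e$. This follows by expressing $S$ as a finite union of linear sets and observing that no period $\bar p=(p^1,p^2)$ may have $p^1>0$ and $p^2=0$---such a period would drive the first coordinate to infinity while fixing the second, contradicting the function bound---so every period with $p^1>0$ has $p^2\geq 1$, and comparing $\sum_l i_l p^1_l$ with $\sum_l i_l p^2_l$ yields the linear estimate on each piece and hence on all of~$S$.

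Specialising the resulting bound to the largest part gives $M\leq c(N-M)+e$, and, as a function of $N-M$, a linear function is dominated by $2^{s(N-M+1)}$ once $s$~is chosen large enough; so condition~(3) of Theorem~\ref{Thm: GSO-orderable d-partite graphs} holds and $\calC$~is $\MSO_2$-orderable. I expect the main obstacle to be the semi-linearity structural fact, together with the bookkeeping that correctly relates the three quantities $M$, $N-M$, and the individual part sizes~$m_i$: choosing the formula so that the Semi-Linearity Theorem returns precisely the pairs $(m_i,N-m_i)$, and justifying the monotone extension of the bound from the largest part to all parts, are the points that need care.
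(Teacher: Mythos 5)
Your proof is correct and follows essentially the same route as the paper: both directions reduce to Theorem~\ref{Thm: GSO-orderable d-partite graphs}, and the key step uses the Semi-Linearity Theorem to upgrade the qualitative bound coming from property~$\SEP$ to a linear bound $M \leq c(N-M)+e$, which then yields condition~(3) of that theorem. The only inessential difference is that you apply semi-linearity to the set of pairs $(m_i,N-m_i)$ obtained from a maximal-independent-set formula in $\bbN^2$, whereas the paper uses the full tuples $(m_0,\dots,m_{d-1})$ in $\bbN^d$; the structural argument about periods is the same.
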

\begin{proof}
For every $d \in \bbN$, there is an $\MSO$-formula
$\varphi_d(X_0,\dots,X_{d-1})$ stating that $X_0,\dots,X_{d-1}$ are
the vertex sets of a complete $d$-partite graph.
By the Semi-Linearity Theorem, it follows that the set
\begin{align*}
  M_d := \set{ (m_0,\dots,m_{d-1}) }{ K_{m_0,\dots,m_{d-1}} \in \calC }
\end{align*}
is semi-linear.

Suppose that $\calC$~has property $\SEP$.
By Example~\ref{Exam: Sep for bipartite graph}, it follows that,
for every choice of $m_0,\dots,m_{d-2}$, there are only finitely many~$m_{d-1}$
such that $K_{m_0,\dots,m_{d-2},m_{d-1}} \in \calC$.
Semi-linearity of~$M_d$ therefore implies that there are numbers $a,b \in \bbN$
such that
\begin{align*}
  m_{d-1} \leq a(m_0 +\dots+ m_{d-2}) + b\,,
  \quad\text{for all } K_{m_0,\dots,m_{d-1}} \in \calC\,.
\end{align*}
By Theorem~\ref{Thm: GSO-orderable d-partite graphs}
it follows that $\calC$~is $\MSO_2$-orderable.
\end{proof}

\subsection{Split graphs and chordal graphs}   

As the next step towards
Conjecture~\ref{Conj: VR-equational with SEP is orderable},
the case of a VR-equational class of cographs suggests itself,
but, so far, we were unable to find a proof.
(See Section~\ref{Sect: cographs} for the definition of a cograph.
Note that Corollary~\ref{Cor: VR-equational d-partite graphs}
contains a solution for complete multi-partite graphs,
which are a special kind of cographs.)
Instead, we consider split graphs and, more generally, chordal graphs.
\begin{defi}
Let $G$~be a graph.

(a)
$G$~is a \emph{split graph} if there exists a partition of its vertex set
into two parts $A$~and~$B$ such that $A$~induces a clique whereas
$B$~is independent, i.e., $G[B]$ contains no edges.

(b) Let $F$~be a spanning forest of~$G$ with tree-order~$\preceq_F$.
We call~$F$ a \emph{perfect spanning forest} if it is normal (cf.~Section~\ref{Sect: ommiting a minor})
and, for every vertex $v \in F$, the set of all neighbours~$u$ of~$v$
such that $u \prec_F v$ induces a clique in~$G$.

(c) $G$~is \emph{chordal} if it has a perfect spanning forest.
\closingmark\end{defi}

Every split graph is chordal. There are many equivalent definitions of chordal graphs.
See Proposition~2.72 of~\cite{CourcelleEngelfriet12} for an overview and a proof of their equivalence.

\begin{thm}\label{Thm: orderability of split graphs}
A class~$\calC$ of split graphs is $\MSO_2$-orderable if, and only if,
there is some $s \in \bbN$ such that
$\calC$~has property $\SEP(f)$ for the function~$f$ such that
$f(n) = 2^{s(n+1)}$.
\end{thm}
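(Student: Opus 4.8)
The plan is to reduce both implications to control over, for each split graph $G \in \calC$ equipped with a fixed partition into a clique~$A$ and an independent set~$B$, the quantity
\begin{align*}
  c_G(N) := \bigabs{\set{ b \in B }{ N_G(b) = N }}\,,
  \quad\text{for } N \subseteq A\,,
\end{align*}
where $N_G(b)$ is the neighbourhood of~$b$ (note $N_G(b) \subseteq A$, since $B$ is independent). The crucial combinatorial observation is that, for split graphs, the components of $G - S$ are easy to describe: if $A - S \neq \emptyset$, then $A - S$ forms a single component together with every $b \in B$ having a neighbour outside~$S$, and all remaining components are singletons $\{b\}$ with $N_G(b) \subseteq S$. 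Taking $S := N$ shows $c_G(N) \leq \Sep(G,\abs{N})$, while in general $\Sep(G,k) \leq 1 + \sum_{N \subseteq S} c_G(N)$ for the maximising~$S \subseteq A$. Hence property $\SEP(f)$ with $f(n) = 2^{s(n+1)}$ is, up to adjusting the constant~$s$, equivalent to the bound $c_G(N) \leq 2^{s(\abs{N}+1)}$ for all $G \in \calC$ and all~$N$.

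For the direction $(\Rightarrow)$, suppose $\lceil\calC\rceil$ is ordered by $\varphi(x,y;\bar Z)$ with $s$~parameters, and fix $G$ and $N$. I associate with each $b$ satisfying $N_G(b) = N$ its \emph{type}: the membership pattern of the vertex~$b$ in the $s$~parameters, together with, for each $a \in N$, the membership pattern of the edge $(a,b)$. There are at most $2^s \cdot (2^s)^{\abs{N}} = 2^{s(\abs{N}+1)}$ such types, so if $c_G(N)$ exceeded this number, two vertices $b \neq b'$ would share a type. Since both have neighbourhood exactly~$N$, the map of $\lceil G\rceil$ swapping $b \leftrightarrow b'$ and $(a,b) \leftrightarrow (a,b')$ for each $a \in N$ (fixing everything else) is an automorphism, and equality of types makes it preserve all parameters. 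Then $\lceil G\rceil \models \varphi(b,b';\bar P) \iff \lceil G\rceil \models \varphi(b',b;\bar P)$, contradicting antisymmetry. Thus $c_G(N) \leq 2^{s(\abs{N}+1)}$, and plugging this into $\Sep(G,k) \leq 1 + \sum_{N\subseteq S} c_G(N) \leq 1 + 2^s(1+2^s)^k$ yields property $\SEP(2^{s'(\cdot+1)})$ with the computable constant $s' = s+1$.

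For the converse $(\Leftarrow)$, I would build the order explicitly, reusing the lexicographic technique from Lemma~\ref{Lem: ordering supergraphs of bipartite graphs}. Passing $A$ as a parameter, first order the clique~$G[A]$ by a Hamiltonian-path parameter, exactly as for cliques. Next, order $B$ by lexicographic comparison of the neighbourhoods $N_G(b) \subseteq A$ with respect to the order just defined on~$A$; this separates any two vertices with distinct neighbourhoods. It remains to break ties inside each class of $B$-vertices sharing one common neighbourhood~$N$. From $\SEP(f)$ we have $c_G(N) \leq 2^{s(\abs{N}+1)}$, so for $\abs{N} = n \geq 1$ the class has at most $2^{s(n+1)}$ members, which inject into $\PSet(N \times [p])$ as soon as $np \geq s(n+1)$; since $s(n+1)/n \leq 2s$, a \emph{uniform} number $p = 2s$ of edge-parameters $R_0,\dots,R_{p-1} \subseteq E$ encodes a distinguishing index for every vertex of every such class simultaneously, and lexicographic comparison of these encodings breaks all ties. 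The at most $2^s$ vertices with $N_G(b) = \emptyset$ are ordered with $s$ further parameters. Collecting the clique-ordering parameters, the set~$A$, the $2s$ encoding parameters, and these last ones gives a single formula whose number of parameters depends only on~$s$.

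The main obstacle is the converse, specifically making the tie-breaking encoding uniform, that is, ordering \emph{all} same-neighbourhood classes at once with a fixed number of parameters independent of the individual graph and of~$\abs{N}$. This is precisely what the estimate $s(n+1)/n \leq 2s$ secures, and it is also the reason the singly-exponential growth rate $2^{s(n+1)}$, rather than an arbitrary elementary bound, is the correct hypothesis: a faster-growing $f$ would allow classes with too many vertices of identical neighbourhood, which no bounded set of parameters can distinguish.
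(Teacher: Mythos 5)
Your proposal is correct and follows essentially the same route as the paper's proof: the forward direction uses the identical automorphism/pigeonhole argument on membership patterns of a $B$-vertex and its incident edges in the parameters, and the converse builds the same order (Hamiltonian path on the clique, lexicographic comparison of neighbourhoods, parameter-encoded tie-breaking within classes of equal neighbourhood). The only differences are cosmetic bookkeeping: you isolate the count $c_G(N)$ as an explicit intermediate invariant and use $2s$ edge-parameters for the tie-breaking where the paper uses $s$ mixed vertex-and-edge parameters encoding $n+1$ bits each.
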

\begin{proof}
$(\Leftarrow)$
Given~$s$, we construct an $\MSO_2$-formula~$\varphi(x,y;\bar Z)$ with $s+1$ parameters that
orders every split graph~$G$ such that $\Sep(G,n) \leq 2^{s(n+1)}$, for all~$n$.
Let $G = \langle V,E\rangle$ be such a split graph and let $V = A \cup B$ be the partition of~$V$
into a clique~$A$ and an independent set~$B$.
We use one parameter~$P$ to define an order on~$A$ as follows.
Fixing an enumeration $a_0,\dots,a_{n-1}$ of~$A$ we set
\begin{align*}
  P := \{a_0\} \cup \set{ (a_i,a_{i+1}) }{ i < n-1 }\,.
\end{align*}
Then we can write down an $\MSO_2$-formula $\psi(x,y;P)$ stating that
every path that connects the unique vertex in~$P$ to~$y$
and that only uses edges in~$P$ contains the vertex~$x$.
This defines a linear order~$\leq_A$ on~$A$.

We use this order to define an order on~$B$ as follows.
For $b \in B$ let
\begin{align*}
  N(b) := \set{ a \in A }{ (a,b) \in E }\,.
\end{align*}
We first define a preorder~$\sqsubseteq$ on~$B$ by
\begin{align*}
  b \sqsubseteq b' \quad\defiff\quad
  N(b) = N(b')
  \text{ or }
  \text{the $\leq_A$-least element of } N(b) \mathbin\Delta N(b')
  \text{ belongs to } N(b)\,.
\end{align*}
Since this preorder is linear, i.e., there are no incomparable elements,
it is sufficient to define an order on each class of the equivalence relation
associated with~$\sqsubseteq$.
Given $b \in B$, we fix an enumeration $b_0,\dots,b_{m-1}$ of all vertices
$b_i \in B$ such that $N(b_i) = N(b)$ and
a $\leq_A$-increasing enumeration $a_0,\dots,a_{n-1}$ of $N(b)$.
Then
\begin{align*}
  m \leq \Sep(G,n) \leq 2^{s(n+1)}\,.
\end{align*}
Choosing an injective function $\pi : [m] \to \PSet([s(n+1)])$,
we set, for $k < s$,
\begin{align*}
  Q_k := \set{ (b_i,a_l) }{ k(n+1) + l \in \pi(i) }
    \cup \set{ b_i }{ k(n+1) + n \in \pi(i) }\,.
\end{align*}
Using the parameters $Q_0,\dots,Q_{s-1}$, we can order $b_0,\dots,b_{m-1}$ by
\begin{align*}
  b_i <_B b_j \quad\defiff\quad
  \text{the least element of } \pi(i) \mathbin\Delta \pi(j)
  \text{ belongs to } \pi(i)\,.
\end{align*}
Finally, by combining $\leq_A$, $\sqsubseteq$, and~$<_B$,
we can define an order on all vertices of~$G$.

$(\Rightarrow)$
Suppose that a split graph $G = \langle V,E\rangle$ is ordered by a formula $\varphi(x,y;\bar P)$
with $s$~parameters $P_0,\dots,P_{s-1}$. We will prove that $\Sep(G,n) \leq 2^{(s+1)(n+1)}$.
Let $V = A \cup B$ be the partition of~$V$ into a clique~$A$ and an independent set~$B$.
We start by showing that, for every $b \in B$,
there are at most $2^{s(\abs{N(b)}+1)}$ vertices $b' \in B$ with $N(b') = N(b)$,
where $N(b)$~is defined as above.
Let $b_0,\dots,b_{m-1}$ be a list of distinct vertices of~$B$
such that $N(b_0) = \dots = N(b_{m-1})$.
For a contradiction, suppose that $m > 2^{s\abs{N(b_0)}+s}$.
Then there are indices $i < j$ such that
\begin{align*}
  b_i \in P_k &\quad\iff\quad b_j \in P_k\,,         &&\quad\text{for all } k < s\,, \\
  (b_i,a) \in P_k &\quad\iff\quad (b_j,a) \in P_k\,, &&\quad\text{for all } k < s \text{ and } a \in N(b_0)\,.
\end{align*}
It follows that the mapping that interchanges $b_i$~and~$b_j$ and that
fixes every other vertex of $\langle G,\bar P\rangle$ is an automorphism. Hence,
\begin{align*}
  \lceil G\rceil \models \varphi(b_i,b_j; \bar P)
  \quad\iff\quad
  \lceil G\rceil \models \varphi(b_j,b_i; \bar P)\,,
\end{align*}
and $\varphi$~does not define an order on~$G$.
A contradiction.

To compute $\Sep(G,n)$ consider a set $S \subseteq V$ of size $\abs{S} \leq n$.
We have seen above that, for every set $X \subseteq S \cap A$, there are
at most $2^{s(\abs{X}+1)}$ vertices $b \in B$ such that $N(b) = X$.
Setting $k := \abs{S \cap A}$, it follows that there are at most
$2^k \cdot 2^{s(k+1)}$ vertices $b \in B$ such that $N(b) \subseteq S \cap A$.
Consequently, $G - S$ has at most
\begin{align*}
  1 + 2^k \cdot 2^{s(k+1)}
    \leq 2^{sk+s+k+1} = 2^{(s+1)(k+1)} \leq 2^{(s+1)(n+1)}
\end{align*}
connected components and the claim follows.
\end{proof}

\begin{lem}
For every increasing and unbounded function $g : \bbN \to \bbN$
there exists a class of split graphs that is not $\MSO_2$-orderable
but has property $\SEP(f)$ for the function~$f$ such that $f(n) := 2^{ng(n)}$.
\end{lem}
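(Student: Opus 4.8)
The plan is to reduce both halves of the claim to elementary estimates on a single separation function, using Theorem~\ref{Thm: orderability of split graphs} as the engine. For each $m \geq 1$ I would take $G_m$ to be the complete split graph whose vertex set is partitioned into a clique $A_m$ of size~$m$ and an independent set $B_m$ of size $2^{mg(m)}$, with every vertex of $B_m$ joined to every vertex of $A_m$. Since $A_m$ induces a clique and $B_m$ is independent, $G_m$ is indeed a split graph. Set $\calC := \set{ G_m }{ m \geq 1 }$; as $g$ is unbounded the numbers $2^{mg(m)}$ grow without bound, so $\calC$~is infinite.

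The first step is to compute $\Sep(G_m,n)$ exactly. If fewer than~$m$ vertices are deleted, at least one clique vertex survives, and being adjacent to everything it keeps $G_m - S$ connected; hence $\Sep(G_m,n) = 1$ for $n < m$. Deleting the whole clique requires $n \geq m$ and leaves the independent set $B_m$ as isolated vertices, so
\begin{align*}
  \Sep(G_m,n) =
    \begin{cases}
      1 & \text{if } n < m, \\
      2^{mg(m)} & \text{if } n \geq m.
    \end{cases}
\end{align*}
From this, property $\SEP(f)$ with $f(n) = 2^{ng(n)}$ is immediate: for $n < m$ we have $1 \leq f(n)$, and for $n \geq m$ monotonicity of~$g$ gives $mg(m) \leq ng(n)$, whence $\Sep(G_m,n) = 2^{mg(m)} \leq 2^{ng(n)} = f(n)$. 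Non-orderability then follows from Theorem~\ref{Thm: orderability of split graphs}: were $\calC$ $\MSO_2$-orderable, there would be a constant~$s$ with $\Sep(G,n) \leq 2^{s(n+1)}$ for all $G \in \calC$ and all~$n$; applying this to $G_m$ at $n = m$ yields $mg(m) \leq s(m+1)$, i.e., $g(m) \leq s(1 + 1/m) \leq 2s$ for every $m \geq 1$, contradicting the unboundedness of~$g$.

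The only point that needs genuine care is the uniformity of the upper bound: the same function $f(n) = 2^{ng(n)}$ must dominate $\Sep(G_m,n)$ for \emph{every} member $G_m$ simultaneously, not merely along the diagonal $n = m$ where the lower bound is witnessed. This is where I expect the main (and essentially only) subtlety to sit, and it is resolved precisely by the monotonicity of~$g$, which guarantees that the value $2^{mg(m)}$ attained by $G_m$ from level~$m$ onwards is still bounded by $f(n) = 2^{ng(n)}$ for all $n \geq m$. Everything else rests on the structural fact that, in a complete split graph, the components of $G - S$ are trivial unless $S$ swallows the entire clique.
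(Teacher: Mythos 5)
Your proposal is correct and follows essentially the same route as the paper: the graph you build (clique of size $m$ completely joined to an independent set of size $2^{mg(m)}$) is exactly the paper's $G_k := K_k \otimes D_{2^{kg(k)}}$, and both arguments combine the computation of $\Sep(G_m,n)$ with Theorem~\ref{Thm: orderability of split graphs} to rule out $\MSO_2$-orderability. Your write-up is merely more explicit about the exact value of $\Sep$ and the role of the monotonicity of~$g$, which the paper leaves implicit.
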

\begin{proof}
For $k \in \bbN$, let $G_k := K_k \otimes D_{2^{kg(k)}}$ where
$D_n$~denotes the graph with $n$~vertices and no edges.
We claim that $\calC := \set{ G_k }{ k \in \bbN }$ has the desired properties.
Note that
\begin{align*}
  \Sep(G_k,n) \leq \begin{cases}
                     1         &\text{if } n < k\,, \\
                     2^{ng(n)} &\text{if } n \geq k\,.
                   \end{cases}
\end{align*}
Hence, $\calC$~has property $\SEP$, but it does not have property $\SEP(f)$,
for any function~$f$ such that $f(n) = 2^{s(n+1)}$ for some $s \in \bbN$.
By Theorem~\ref{Thm: orderability of split graphs},
it follows that $\calC$~is not $\MSO_2$-orderable.
\end{proof}
\begin{rem}
The class in the preceding lemma is not VR-equational
since it does not satisfy the Semi-Linearity Theorem.
Hence, it does not provide a counterexample to
Conjecture~\ref{Conj: VR-equational with SEP is orderable}.
\closingmark\end{rem}

It would be interesting to extend Theorem~\ref{Thm: orderability of split graphs} to classes of chordal graphs.
At this point, we are only able to present a sufficient condition for $\MSO_2$-orderability.
But there are examples showing that it is not necessary.
We start with a technical lemma.
\begin{lem}\label{Lem: precedessors in perfect spanning forests}
Let $F$~be a perfect spanning forest of a chordal graph~$G$ with tree-order~$\preceq_F$.
If $u \prec_F v \preceq_F w$ are vertices then
\begin{align*}
  (u,w) \in E \qtextq{implies} (u,v) \in E\,.
\end{align*}
\end{lem}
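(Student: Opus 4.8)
The plan is to argue by induction on the length~$\ell$ of the path in~$F$ from~$v$ to~$w$, that is, on the number of forest edges lying between $v$~and~$w$. This quantity is well defined because $v \preceq_F w$ places both vertices on a common path from a root, so $v$~is an ancestor of~$w$ in~$F$.

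If $\ell = 0$ then $v = w$, and the implication is immediate since the edge $(u,w)$ is literally the edge $(u,v)$. For the inductive step assume $v \prec_F w$, and let $w'$ be the immediate predecessor (parent) of~$w$ in~$F$. Since $v \prec_F w$ and $w'$ is the unique vertex immediately below~$w$ on the root path, we have $v \preceq_F w'$, and the path from~$v$ to~$w'$ is exactly one edge shorter than the path from~$v$ to~$w$. The key move is to produce the edge $(u,w')$: both $u$~and~$w'$ are neighbours of~$w$ lying strictly below~$w$. Indeed, $(u,w) \in E$ by hypothesis and $u \prec_F v \prec_F w$; and $(w',w) \in E$ because $w'$ is the parent of~$w$ in the spanning forest $F \subseteq G$, with $w' \prec_F w$. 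Since $F$~is a perfect spanning forest, the set $\set{ z }{ z \prec_F w \text{ and } (z,w) \in E }$ of neighbours of~$w$ that precede it induces a clique. As $u$~and~$w'$ are two such neighbours, and $u \neq w'$ (because $u \prec_F v \preceq_F w'$), we conclude $(u,w') \in E$. Now $u \prec_F v \preceq_F w'$ together with $(u,w') \in E$ lets us apply the induction hypothesis, which yields $(u,v) \in E$ and completes the step.

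I do not expect a serious obstacle: the statement falls out cleanly once one commits to this induction. The only point that requires genuine care is the decision to descend along the parent edge $(w',w)$ rather than attempting to compare $v$~and~$w$ directly. It is precisely the clique condition at~$w$, applied to the pair formed by the hypothesised neighbour~$u$ and the structural tree-neighbour~$w'$, that generates the shorter edge $(u,w')$ and thereby drives the recursion. Normality of~$F$ enters only implicitly, guaranteeing that the neighbours of~$w$ under consideration are comparable to~$w$ and hence genuinely belong to the clique of predecessors of~$w$.
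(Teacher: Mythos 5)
Your proof is correct and follows essentially the same route as the paper: both arguments walk down the forest path from $w$ to $v$ one edge at a time, using the clique condition on the predecessors of the current vertex (applied to $u$ and the parent) to transport the edge from $(u,w)$ to $(u,v)$. The paper phrases this as an induction on the index along the path $v = x_n \prec_F \dots \prec_F x_0 = w$ rather than on its length, but the content is identical.
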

\begin{proof}
Let $x_n \prec_F\dots\prec_F x_0$ be the path in~$F$ from $v = x_n$ to $w = x_0$.
We show by induction on~$i$, that $(u,x_i) \in E$.
For $i = 0$, there is nothing to do.
Hence, suppose that $i > 0$ and that we have already shown that $(u,x_{i-1}) \in E$.
Then $u$~and~$x_i$ are both neighbours of~$x_{i-1}$. Since $u,x_i \prec_F x_{i-1}$,
it follows by definition of a perfect spanning forest that $(u,x_i) \in E$.
\end{proof}

\begin{prop}\label{Prop: orderable chordal graphs}
Let $\calC$ be a class of chordal graphs with property $\SEP(f)$
where $f(n) = 2^{s(n+1)}$, for some $s \in \bbN$.
Then $\calC$~is $\MSO_2$-orderable.
\end{prop}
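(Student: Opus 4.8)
The plan is to mimic the \((\Leftarrow)\)-direction of Theorem~\ref{Thm: orderability of split graphs}, using a perfect spanning forest in place of the single clique~\(A\). Fix a graph \(G = \langle V,E\rangle \in \calC\) together with a perfect spanning forest~\(F\) of~\(G\), and encode~\(F\) by two parameters, its edge set and its set of roots; from these the tree-order \(\preceq_F\) and the sets \(\Pred_F(x)\) and \(B_F(x)\) are \(\MSO_2\)-definable. I would then take the order to be the lexicographic depth-first order induced by~\(F\): every vertex precedes each of its \(\prec_F\)-successors, and two \(\preceq_F\)-incomparable vertices are compared via the sibling order of the two immediate successors of their meet through which they pass, while vertices in different trees are compared by the order of their roots. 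Thus the whole construction reduces to defining, uniformly in~\(x\), a linear order on the immediate successors of each vertex~\(x\) and on the roots.

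To order the immediate successors of a fixed~\(x\), I would first use the coarse preorder \(y \sqsubseteq y' \defiff B_F(y) \leq_\lex B_F(y')\), exactly as in the proof of Theorem~\ref{Thm: ordering graphs without Kpp minor}; since each \(B_F(y)\) is a subset of the chain \(\Pred_F(x) \cup \{x\}\), this is a definable linear preorder. Its classes are the sets of successors sharing a common value \(S := B_F(y)\), which are no longer of bounded size, so the new ingredient is a \emph{scalable} refinement within each class. Two facts make this possible. First, by Lemma~\ref{Lem: precedessors in perfect spanning forests}, \(B_F(y)\) coincides with the set \(\set{ v \prec_F y }{ (v,y) \in E }\) of lower neighbours of~\(y\); in particular \(y\) is adjacent to every element of~\(S\), so all edges \((y,s)\) with \(s \in S\) exist and may be placed into edge-parameters. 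Second, as in the proof of Lemma~\ref{Lem: simple conditions on separators}\,(b), normality of~\(F\) forces every edge leaving the subtree rooted at such a~\(y\) to land in~\(S\), so any two successors with \(B_F = S\) lie in different connected components of \(G - S\); hence a class with \(B_F = S\) has at most \(\Sep(G,\abs{S}) \leq 2^{s(\abs{S}+1)}\) elements.

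For the refinement I would reuse the fingerprint encoding from Theorem~\ref{Thm: orderability of split graphs}. Enumerate \(S\) as \(s_0 \prec_F \dots \prec_F s_{t-1}\) with \(t = \abs{S}\), and choose \(s\) global parameters \(Q_0,\dots,Q_{s-1} \subseteq V \cup E\). Each successor~\(y\) of the class then receives the fingerprint in \(\PSet([s(t+1)])\) whose bit \((k,l)\) records whether \((y,s_l) \in Q_k\) for \(l < t\), and whose bit \((k,t)\) records whether \(y \in Q_k\); ordering the class by the lexicographically least differing bit is \(\MSO_2\)-definable. As a class has at most \(2^{s(t+1)}\) members and there are exactly that many fingerprints, an injective assignment exists, and I would set the~\(Q_k\) to realise it. The assignments for different classes do not interfere, because every fingerprint bit of~\(y\) is stored either in the vertex~\(y\) or in an edge \((y,s_l)\) with \(s_l \prec_F y\), and these belong to no other vertex's fingerprint; so the per-class choices glue into well-defined global parameters. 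The roots form the single class with \(S = \emptyset\): here only the self-bits are used, giving \(2^s\) fingerprints for the at most \(\Sep(G,0) \leq 2^s\) roots, and this class simultaneously furnishes the top-level order.

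The routine calculations — writing out \(\preceq_F\) and \(B_F\) as \(\MSO_2\)-formulae, expressing the lexicographic comparisons, and checking that the resulting relation is a linear order — I would leave implicit, as they follow the patterns of Theorems~\ref{Thm: orderability of trees} and~\ref{Thm: orderability of split graphs}. The main obstacle is the \emph{uniformity} of the fingerprint construction: unlike the split-graph case, which has one global clique~\(A\) and a single reference enumeration, here the reference set \(S = B_F(y)\) and its length vary from vertex to vertex and are embedded in the forest. The crux is therefore to verify that one fixed tuple \(Q_0,\dots,Q_{s-1}\) can encode an injective fingerprint within every sibling-class at once, which is precisely what the disjointness of the encoding elements \((y,s_l)\) and~\(y\) across distinct vertices guarantees.
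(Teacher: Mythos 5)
Your proposal is correct and follows essentially the same route as the paper's proof: fix a perfect spanning forest, use Lemma~\ref{Lem: precedessors in perfect spanning forests} to identify $B_F(y)$ with the lower neighbourhood of~$y$ and to bound each sibling class by $\Sep(G,\abs{B_F(y)}) \leq 2^{s(\abs{B_F(y)}+1)}$, pre-order siblings lexicographically by their sets $B_F(y)$, and break ties with the edge/vertex fingerprint encoding of Theorem~\ref{Thm: orderability of split graphs}. You in fact spell out two points the paper leaves implicit (the non-interference of fingerprints across classes and the treatment of the roots), so no gap remains.
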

\begin{proof}
Let $G = \langle V,E\rangle$ be a chordal graph such that
$\Sep(G,n) \leq 2^{s(n+1)}$. To order~$G$,
we fix a perfect spanning forest~$F$ of~$G$.
It is sufficient to define, for every vertex~$v$, an order on the immediate successors of~$v$ in~$F$.
Then we can use the lexicographic ordering on~$F$ to order~$G$.
Fix a vertex~$v$ and let $u_0,\dots,u_{n-1}$ be the immediate successors of~$v$ in~$F$.
For $i < n$, we define
\begin{align*}
  B_i := \set{ w \preceq_F v }{ (w,u_i) \in E }\,.
\end{align*}
We start by showing that, for every set $B \subseteq V$, there are at most $2^{s(\abs{B}+1)}$
indices~$i$ such that $B_i = B$.
Given~$B$, let $I$~be the set of all $i < n$ such that $B_i = B$.
By Lemma~\ref{Lem: precedessors in perfect spanning forests},
it follows that, for every $i \in I$ and every edge $(x,y) \in E$ such that
$x \prec_F u_i \preceq_F y$, we have $x \in B_i = B$.
Hence,
\begin{align*}
  \abs{I} \leq \Sep(G,\abs{B}) \leq 2^{s(\abs{B}+1)}
\end{align*}
as desired.
As in the proof of Theorem~\ref{Thm: orderability of split graphs},
we can use $s+1$ parameters $Q_0,\dots,Q_s$ to colour the edges of the subgraphs $B_i \otimes u_i$
such a way that we can define the ordering
\begin{align*}
  u_i < u_k \quad\iff\quad i < k\,,
  \quad\text{for } i,k \in I\,.
\end{align*}
Consequently, we can order all immediate successors of~$v$ by
\begin{align*}
  u_i \leq u_k \quad\defiff\quad &B_i = B_k \text{ and } i \leq k\,, \text{ or} \\
  &\text{the $\prec_F$-least element of } B_i \mathbin\Delta B_k \text{ belongs to } B_i\,.
\end{align*}
\upqed
\end{proof}

\begin{cor}\label{Cor: orderability of VR-equational chordal graphs}
Let $\calC$~be a VR-equational class of chordal graphs.
The following statements are equivalent\?:
\begin{enumerate}
\item $\calC$~is $\MSO_2$-orderable.
\item $\calC$~has property $\SEP$.
\item There are constants $r,s \in \bbN$ such that $\calC$~has property
  $\SEP(f)$ where $f$~is the function such that $f(n) = rn+s$.
\end{enumerate}
These properties are decidable.
\end{cor}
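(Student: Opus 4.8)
The plan is to establish the cycle of implications $(3)\Rightarrow(1)\Rightarrow(2)\Rightarrow(3)$ and then to read off decidability from property~$\SEP$. For $(3)\Rightarrow(1)$ I would simply observe that a linear function is dominated by an exponential one: if $f(n)=rn+s$, then $f(n)\le (r+s)(n+1)\le 2^{(r+s)(n+1)}$, so $\calC$ has property $\SEP(g)$ with $g(n)=2^{s'(n+1)}$ for $s':=r+s$, and Proposition~\ref{Prop: orderable chordal graphs} applies verbatim. The implication $(1)\Rightarrow(2)$ is immediate from Corollary~\ref{Cor: GSO-orderable implies SEP}, since an elementary bound on $\Sep$ is in particular a bound, so $\calC$ has property~$\SEP$.

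The heart of the argument is $(2)\Rightarrow(3)$, where the Semi-Linearity Theorem is used to upgrade an arbitrary $\SEP$-bound to a linear one. Let $\varphi(X,Y)$ be the $\MSO$-formula already employed in Section~\ref{Sect: deciding orderability}, expressing that $Y$ meets each connected component of $G-X$ in exactly one vertex, so that $\abs{Y}$ equals the number of connected components of $G-X$. By the Semi-Linearity Theorem the set
\[
  M(\calC):=\bigset{(\abs P,\abs Q)}{G\models\varphi(P,Q)\text{ for some }G=\langle V,E\rangle\in\calC,\ P,Q\subseteq V}
\]
is semi-linear, say a finite union of linear sets $P_0,\dots,P_{t-1}\subseteq\bbN^2$ with bases $\bar k^l$ and periods $\bar p^l_0,\dots$. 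Property~$\SEP$ says precisely that for every $n$ the fibre $\set{q}{(n,q)\in M(\calC)}$ is finite. I would then argue that finiteness of all these fibres rules out any period $\bar p^l_j$ whose first component is $0$ and whose second component is positive, since a single such period lets the second coordinate grow without bound while the first stays fixed, producing an infinite fibre. Consequently every nontrivial period has first component at least~$1$; so fixing the first coordinate to $n$ forces $\sum_j i_j\le n-\bar k^l_1\le n$, whence the second coordinate of any point of $P_l$ over $n$ is at most $\bar k^l_2+C\,n$ with $C:=\max_{l,j}(\bar p^l_j)_2$. Taking the maximum over the finitely many pieces yields constants $r,s$ with $q\le rn+s$ whenever $(n,q)\in M(\calC)$. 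Since $\Sep(G,k)$ is the maximum, over sets $S$ of size at most~$k$, of the number of connected components of $G-S$, this gives $\Sep(G,k)\le rk+s$ for all $G\in\calC$, i.e.\ property $\SEP(f)$ with $f(k)=rk+s$.

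Finally, decidability follows because all three conditions are equivalent and property~$\SEP$ of a VR-equational class is decidable (the proposition in Section~\ref{Sect: deciding orderability}); applied to the VR-equational class~$\calC$ this settles $\MSO_2$-orderability as well. The main obstacle is the structural step in $(2)\Rightarrow(3)$: one must verify carefully that finite fibres over the first coordinate are not merely a consequence but actually force every period to carry a nonzero first component, so that the linear growth estimate holds uniformly across the finitely many linear pieces of $M(\calC)$. The remaining directions are routine given Proposition~\ref{Prop: orderable chordal graphs} and Corollary~\ref{Cor: GSO-orderable implies SEP}.
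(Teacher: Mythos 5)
Your proposal is correct and follows essentially the same route as the paper: $(3)\Rightarrow(1)$ via Proposition~\ref{Prop: orderable chordal graphs} (after dominating the linear bound by an exponential one), $(1)\Rightarrow(2)$ via Corollary~\ref{Cor: GSO-orderable implies SEP}, and $(2)\Rightarrow(3)$ by applying the Semi-Linearity Theorem to the formula counting connected components of $G-X$ and observing that finiteness of the fibres forces every period of the semi-linear set to have positive first component, yielding a linear bound. The paper explicitly leaves $(2)\Rightarrow(3)$ to the reader as being analogous to Corollary~\ref{Cor: VR-equational d-partite graphs}, and your argument is a correct instantiation of exactly that analogy.
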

Since we have already proved (3)~$\Rightarrow$~(1) and (1)~$\Rightarrow$~(2)
in Proposition~\ref{Prop: orderable chordal graphs} and Corollary~\ref{Cor: GSO-orderable implies SEP},
only the implication (2)~$\Rightarrow$~(3) remains to be proved.
We leave this proof to the reader\?;
it is similar to that of Corollary~\ref{Cor: VR-equational d-partite graphs}.

\section{$\MSO_1$-definable orders}   
\label{Sect: MSO}

After having studied $\MSO_2$-orderability, we consider $\MSO_1$-orderability.
For classes that are $r$-sparse, for some~$r$, $\MSO_1$~and~$\MSO_2$ have
the same expressive power (see Theorem~9.38 of~\cite{CourcelleEngelfriet12}).
For these classes we can therefore use the results of Section~\ref{Sect: MSO2}.
For general classes, $\MSO_1$-orderability turns out to be more difficult to
characterise than $\MSO_2$-orderability.

\subsection{Necessary conditions}
\label{Sect: MSO orders}

We will employ tools related to the notion of clique-width.
Instead of using the exact operations defining clique-width
(cf.~Section~\ref{Sect: structures and graphs}),
we introduce related ones that are more convenient in our context.
\begin{defi}
Let $k \in \bbN$ and $R \subseteq [k] \times [k]$.

(a)
For undirected graphs $G$~and~$H$ with ports in~$[k]$,
we construct the undirected graph $G \otimes_R H$
by adding to the disjoint union $G \oplus H$ all edges $(x,y)$ such that
\begin{itemize}
\item either $x \in G$ and $y \in H$, or $x \in H$ and $y \in G$\?; and
\item $x$~has port label~$a$ and $y$~has port label~$b$, for some $(a,b) \in R$.
\end{itemize}
Similarly, we define $G \otimes_R H$ for graphs $G$~and~$H$ with ports
expanded by additional unary predicates (vertex colours) and constants.

(b) For a graph~$G$ with ports, we denote by $\Del(G)$ the graph
obtained from~$G$ by deleting all port labels.
\closingmark\end{defi}

\begin{rem}
(a) The operation $\otimes_R$ is associative and commutative with the empty graph as neutral element.
Furthermore, ${\otimes_R} = {\otimes_{R \cup R^{-1}}}$.

(b) With only one port label, there are two operations of the
form~$\otimes_R$\?: the operations $\oplus$~and~$\otimes$ used to build
cographs (see Section~\ref{Sect: cographs} below).

(c) We have $\overline{G \otimes_R H} = \overline{G} \otimes_{R'} \overline{H}$
where $R' := ([k] \times [k]) \setminus R$ and $\overline{G}$ denotes the edge complement of~$G$.

(d) We can express $\otimes_R$ as a combination of the operations defining clique-width in the following way\?:
\begin{align*}
  G \otimes_R H = \mathrm{relab}_{h_-}(\mathrm{add}_{a_0,b_0}(\cdots\,\mathrm{add}_{a_n,b_n}(G \oplus \mathrm{relab}_{h_+}(H))\cdots))\,,
\end{align*}
for suitable functions $h_+ : [k] \to [2k]$ and $h_- : [2k] \to [k]$ and port
labels $a_0,b_0,\dots,a_n,b_n \in [2k]$. ($h_+$~is needed to make the port
labels appearing in~$H$ distinct from those appearing in~$G$.)
\closingmark\end{rem}

\begin{rem}\label{Rem: orderable closed under otimesR}
(a)
As in Proposition~\ref{Prop: orderable classes closed under union}\,(b),
one can show that
\begin{align*}
  \calC \otimes_R \calK := \set{ G \otimes_R H }{ G \in \calC,\ H \in \calK }
\end{align*}
is $\MSO$-orderable if, and only if, $\calC$~and~$\calK$ are $\MSO$-orderable.

(b) $\overline{\calC} := \set{ \overline{G} }{ G \in \calC }$ is $\MSO$-orderable if,
and only if, $\calC$~is $\MSO$-orderable.
\closingmark\end{rem}

To give a necessary condition for $\MSO_1$-orderability,
we introduce a combinatorial property
similar to $\SEP$, but based on the operation~$\otimes_R$.
\begin{defi}
Let $G$~be a graph (without port labels) and $k \in \bbN$.

(a) We denote by $\Cut(G,k)$ the maximal number~$n$ such that
there exist non\-empty graphs $H_0,\dots,H_{n-1}$ with ports in~$[k]$
and a relation $R \subseteq [k] \times [k]$
such that
\begin{align*}
  G \cong \Del(H_0 \otimes_R\dots\otimes_R H_{n-1})\,.
\end{align*}

(b) We say that a class~$\calC$ of graphs has property $\CUT(f)$,
for a function $f : \bbN \to \bbN$, if
\begin{align*}
  \Cut(G,k) \leq f(k)\,,
  \quad\text{for all } G \in \calC \text{ and all } k \in \bbN\,.
\end{align*}
We say that $\calC$~has property $\CUT$, if it has property $\CUT(f)$, for some $f : \bbN \to \bbN$.
\closingmark\end{defi}
\begin{rem}
Note that $\Cut(G,k) = \Cut(\overline{G},k)$.
\closingmark\end{rem}

For the proof that $\CUT$ is a necessary condition for $\MSO_1$-orderability,
we use the following technical lemma.
\begin{lem}\label{Lem: otimes compatible with MSO}
Let $G,G',H,H'$ be labelled graphs,
$\bar P,\bar P',\bar Q,\bar Q'$ tuples of sets of vertices of the respective graphs,
and $\bar a,\bar a',\bar b,\bar b'$ tuples of vertices.
For each port label~$c$, let $C_c,C'_c,D_c,D'_c$ be the sets of all vertices
of, respectively, $G,G',H,H'$ that have port label~$c$.
Then
\begin{align*}
  \MTh_m(\lfloor G\rfloor, \bar P,\bar C,\bar a) &=
  \MTh_m(\lfloor G'\rfloor, \bar P',\bar C',\bar a') \\
\prefixtext{and}
  \MTh_m(\lfloor H\rfloor, \bar Q,\bar D,\bar b) &=
  \MTh_m(\lfloor H'\rfloor, \bar Q',\bar D',\bar b')
\end{align*}
implies that
\begin{align*}
   \MTh_m\bigl(\lfloor G \otimes_R H\rfloor, \bar S,\bar a\bar b\bigr)
 = \MTh_m\bigl(\lfloor G' \otimes_R H'\rfloor, \bar S',\bar a'\bar b'\bigr)\,,
\end{align*}
where $S_i := P_i \cup Q_i$ and $S'_i = P'_i \cup Q'_i$.
\end{lem}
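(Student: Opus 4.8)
The plan is to reduce the claim about $\otimes_R$ to the Composition Theorem (Theorem~\ref{Thm: composition}) by recognising $\lfloor G \otimes_R H\rfloor$ as a quantifier-free transduction applied to the disjoint union $\lfloor G\rfloor \oplus \lfloor H\rfloor$, where the transduction uses the port-label predicates $\bar C$ and $\bar D$ to reconstruct the added edges. Concretely, in $\lfloor G \otimes_R H\rfloor$ there is an edge between $x$ and $y$ exactly when either that edge was already present in $G$ or in $H$, or else $x$ and $y$ lie in different components of the disjoint union and carry port labels $(a,b) \in R$. Since $R \subseteq [k]\times[k]$ is fixed and finite, the latter condition is the quantifier-free formula $\Lor_{(a,b)\in R}\bigl[(x \in C_a \land y \in D_b) \lor (x \in D_b \land y \in C_a)\bigr]$, where the membership in the port-label sets distinguishes the two sides.

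First I would make this transduction explicit. I would work with the expanded structures $\langle\lfloor G\rfloor,\bar P,\bar C,\bar a\rangle$ and $\langle\lfloor H\rfloor,\bar Q,\bar D,\bar b\rangle$, take their disjoint union, and then apply a quantifier-free, domain-preserving transduction $\tau$ that keeps the universe, keeps the parameters $\bar S$ (obtained as the union of the $P_i$ and $Q_i$, which on the disjoint union is exactly $S_i = P_i \cup Q_i$), keeps the designated elements $\bar a\bar b$, and redefines the edge relation by the disjunction above. The key point is that $\tau$ needs only the port-label predicates and the two existing edge relations to produce $\lfloor G\otimes_R H\rfloor$; it does not need the original edge relation to "see" which side a vertex is on, because the port labels $C_c$ and $D_c$ already encode that. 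Once this is set up, $\lfloor G\otimes_R H\rfloor$ with its parameters is literally $\tau$ applied to the amalgamation (here simply the disjoint union) of the two expanded structures.

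The argument then proceeds in two steps. By the Composition Theorem for disjoint unions (Corollary~\ref{Cor: composition for disjoint union}), the hypothesis that the two pairs of factor theories of quantifier-rank~$m$ agree implies that the theories of the disjoint unions $\langle\lfloor G\rfloor\oplus\lfloor H\rfloor,\bar S,\bar C,\bar D,\bar a\bar b\rangle$ and its primed counterpart agree. Here I must be slightly careful: I should apply composition at quantifier-rank~$m$ to structures that already carry all the unary predicates $\bar P,\bar C$ and $\bar Q,\bar D$ plus the constants, so that the resulting disjoint-union theory determines everything $\tau$ refers to. Then, since $\tau$ is quantifier-free, the Corollary to the Backwards Translation Lemma (Lemma~\ref{Lem: transductions are comorphisms}) gives that equal $\MTh_m$ of the inputs yields equal $\MTh_m$ of the outputs $\tau(\cdot)$, which are exactly $\langle\lfloor G\otimes_R H\rfloor,\bar S,\bar a\bar b\rangle$ and the primed version. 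Composing these two implications delivers the conclusion.

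The main obstacle I anticipate is bookkeeping rather than conceptual difficulty: I must ensure that the transduction $\tau$ is genuinely quantifier-free (so that Lemma~\ref{Lem: transductions are comorphisms} preserves quantifier-rank and the target rank stays at~$m$), and that the port-label sets $\bar C,\bar D$ are carried as honest parameters of the composed structures at the same quantifier-rank~$m$ at which the hypotheses are stated. One subtlety worth checking is that $\tau$ correctly suppresses the auxiliary predicates $\bar C,\bar D$ from the output structure $\lfloor G\otimes_R H\rfloor$ while still using them in the defining formula for the edge relation; this is fine, since a basic transduction's definition scheme may refer to any predicates of the source while the output signature contains only $\edg$ and the retained parameters $\bar S$. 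Provided these signatures and quantifier-ranks are lined up, the two lemmas combine immediately and no computation beyond writing down the edge-defining disjunction is required.
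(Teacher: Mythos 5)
Your proposal is correct and follows essentially the same route as the paper: realise $\lfloor G\otimes_R H\rfloor$ as a quantifier-free transduction of the disjoint union of the suitably expanded factors, then combine the Composition Theorem with the Backwards Translation Lemma for quantifier-free transductions. The only cosmetic difference is that the paper first applies a quantifier-free expansion $\sigma$ adding the relation $I = A \times A$ so that, after the disjoint union, the transduction can tell the two components apart, whereas you use the port predicates $\bar C$ and $\bar D$ (distinct symbols, each empty on the opposite component) for the same purpose; both devices work.
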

\begin{proof}
Let $\sigma$~be a quantifier-free transduction that maps a structure~$\frakA$
to its expansion $\langle\frakA,I\rangle$ where $I := A \times A$ is the
equivalence relation on~$A$ with a single class.
Given~$R$, we can write down a quantifier-free transduction~$\tau$ such that
\begin{align*}
  \bigl\langle\lfloor G \otimes_R H\rfloor, \bar S,\bar a\bar b\bigr\rangle 
  &= \tau\bigl( \sigma(\langle\lfloor G\rfloor, \bar P,\bar C,\bar a\rangle) \oplus
                \sigma(\langle\lfloor H\rfloor, \bar Q,\bar D,\bar b\rangle)\bigr) \\
\prefixtext{and}
  \bigl\langle\lfloor G' \otimes_R H'\rfloor, \bar S',\bar a'\bar b'\bigr\rangle 
  &= \tau\bigl( \sigma(\langle\lfloor G'\rfloor, \bar P',\bar C',\bar a'\rangle) \oplus
                \sigma(\langle\lfloor H'\rfloor, \bar Q',\bar D',\bar b'\rangle)\bigr)\,.
\end{align*}
This transduction uses the relation~$I$ to mark the two components of
the disjoint union.
The claim now follows from the Composition Theorem and the Backwards
Translation Lemma.
\end{proof}

\begin{prop}\label{Prop: Cut boundend}
There exists a function $f : \bbN^3 \to \bbN$ such that
$\Cut(G,k) \leq f(n,m,k)$ for every graph~$G$ such that
$\lfloor G\rfloor$ can be ordered by an $\MSO$-formula of the form
$\varphi(x,y;\bar P)$ where $\qr(\varphi) \leq m$ and
$\bar P = \langle P_0,\dots, P_{n-1}\rangle$ are parameters.
Furthermore, the function $f(n,m,k)$~is effectively elementary in the argument~$k$, that is,
there exists a computable function~$g$ such that $f(n,m,k) \leq \exp_{g(n,m)}(k)$.
\end{prop}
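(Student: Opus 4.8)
The plan is to mirror the proof of Proposition~\ref{Prop: Sep boundend} (the analogous $\SEP$/$\MSO_2$ statement), replacing the amalgamation-over-shared-vertices argument with the $\otimes_R$-decomposition argument enabled by Lemma~\ref{Lem: otimes compatible with MSO}. First I would fix $k,m,n \in \bbN$ and set $f(n,m,k) := d$, where $d$ is an upper bound on the number of distinct $\MSO$-theories of the form $\MTh_m(\lfloor H\rfloor, P_0,\dots,P_{n-1}, C_0,\dots,C_{k-1}, a)$, in which $H$ is a graph with ports in~$[k]$, the $P_i$ are parameters, the $C_c$ are the port classes, and $a$ is a single distinguished vertex. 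Since the signature is fixed once $n$ and $k$ are fixed, there are only finitely many such theories and, for fixed $n$ and $m$, we may take $d$ elementary in~$k$; this supplies the computable bound $f(n,m,k) \leq \exp_{g(n,m)}(k)$ claimed in the statement.

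Next I would argue the contrapositive. Suppose $\varphi(x,y;\bar Z)$ has quantifier-rank at most~$m$ and uses $n$ parameters, and let $G$ be a graph with $\Cut(G,k) > d$. By the definition of $\Cut$, there exist a relation $R \subseteq [k]\times[k]$ and nonempty port-labelled graphs $H_0,\dots,H_{N-1}$ with $N > d$ such that $G \cong \Del(H_0 \otimes_R \dots \otimes_R H_{N-1})$. Fix any parameters $\bar P$ in $G$. Pick a vertex $a_i$ in each factor $H_i$ and consider the $N > d$ theories $\MTh_m(\lfloor H_i\rfloor, \bar P \restriction H_i, \bar D^i, a_i)$, where $\bar D^i$ is the tuple of port classes of~$H_i$. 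By choice of~$d$ there are indices $i < j$ for which these two theories coincide. Using commutativity and associativity of $\otimes_R$ (stated in the preceding remark), I can regroup the product so that $H_i$ and $H_j$ each appear as one factor while all remaining factors are absorbed into a single third graph, presenting $G$ as an $\otimes_R$-product of three labelled pieces.

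Then I would apply Lemma~\ref{Lem: otimes compatible with MSO} twice: the equal theories of $H_i$ and $H_j$ (together with matching parameter restrictions and port classes), combined with the identical theory of the common remainder, force
\begin{align*}
  \MTh_m\bigl(\lfloor G\rfloor, \bar P, a_i, a_j\bigr)
  = \MTh_m\bigl(\lfloor G\rfloor, \bar P, a_j, a_i\bigr)\,.
\end{align*}
In particular $\lfloor G\rfloor \models \varphi(a_i,a_j;\bar P)$ iff $\lfloor G\rfloor \models \varphi(a_j,a_i;\bar P)$, so $\varphi(x,y;\bar P)$ cannot be antisymmetric and hence does not define an order on $\lfloor G\rfloor$. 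Since $\bar P$ was arbitrary, $G$ is not ordered by~$\varphi$, which is what we needed.

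The step I expect to require the most care is the bookkeeping in the application of Lemma~\ref{Lem: otimes compatible with MSO}. The lemma is stated for a single binary product $G \otimes_R H$ with vertex-tuple parameters and distinguished elements on each side, whereas here I have a product of three (regrouped) pieces and want to commute two distinguished vertices that originate in two different factors. The plan is to nest the lemma: first treat $H_i \otimes_R (\text{rest})$ as a two-part product where the distinguished vertex $a_i$ and the parameter/port data on the $H_i$ side match those of $H_j$, then fold the symmetric roles of $a_i$ and $a_j$ through the commutativity of~$\otimes_R$. I must check that the port-class tuples $\bar D^i$ and $\bar D^j$ are genuinely part of the matched theory data (they are, by construction of~$d$) and that relativising the parameters $\bar P$ to each factor is harmless, exactly as in Proposition~\ref{Prop: Sep boundend}. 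Once the theory equality is established, the conclusion about $\varphi$ is immediate, so this transfer of the two-part lemma to the regrouped three-part product is the genuine obstacle.
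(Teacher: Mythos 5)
Your proposal is correct and follows essentially the same route as the paper: count quantifier-rank-$m$ theories of port-labelled graphs with $n$ parameters and one distinguished vertex, pigeonhole two factors $H_i$, $H_j$ of the $\otimes_R$-decomposition with equal theories, regroup the product as $H_i \otimes_R H_j \otimes_R F$ using associativity and commutativity, and invoke Lemma~\ref{Lem: otimes compatible with MSO} to swap $a_i$ and $a_j$. The bookkeeping issue you flag (extending the binary lemma to the three-factor product) is exactly the step the paper handles by exhibiting the graph~$F$ and applying the lemma iteratively, so your plan matches the published argument.
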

\begin{proof}
Fixing $k,m,n \in \bbN$, we choose for $f(n,m,k)$ an upper bound on the number
of $\MSO$-theories of the form
\begin{align*}
  \MTh_m(\lfloor H\rfloor ,v,P_0,\dots,P_{n-1},Q_0,\dots,Q_{k-1})
\end{align*}
where $H$~is a graph, $v$~is a vertex of~$H$ and $P_0,\dots,Q_0,\dots$ are
parameters.
For fixed~$m$, we can choose this bound to be elementary in~$k$.

Let $\varphi(x,y;\bar Z)$ be an $\MSO$-formula of quantifier-rank at most~$m$,
let $G$~be a graph with $\Cut(G,k) > f(n,m,k)$,
and let $P_0,\dots,P_{n-1}$ be parameters from~$G$.
We have to show that $\varphi(x,y;\bar P)$ does not order~$G$.
We choose graphs $H_0,\dots,H_{d-1}$ with $d = \Cut(G,k)$
and a relation $R \subseteq [k] \times [k]$ such that
\begin{align*}
  G = \Del(H_0 \otimes_R\dots\otimes_R H_{d-1})\,.
\end{align*}
For $c < k$, let
\begin{align*}
  C_c := \set{ x \in G }{ x \in H_i, \text{ for some } i < d, \text{ and } x \text{ has port label } c \text{ in } H_i }\,.
\end{align*}
Since $d > f(n,m,k)$, there are indices $i < j$ such that
\begin{align*}
    \MTh_m(\lfloor H_i\rfloor,a_i,\bar P \restriction {H_i},\bar C \restriction {H_i})
  = \MTh_m(\lfloor H_j\rfloor,a_j,\bar P \restriction {H_j},\bar C \restriction {H_j})\,.
\end{align*}
As there exists a graph~$F$ such that
\begin{align*}
  \langle\lfloor G\rfloor,a_ia_j,\bar P,\bar Q\rangle
           &= \langle\lfloor H_i\rfloor,a_i,\bar P \restriction {H_i},\bar C \restriction {H_i}\rangle
    \otimes_R \langle\lfloor H_j\rfloor,a_j,\bar P \restriction {H_j},\bar C \restriction {H_j}\rangle
    \otimes_R F \\
\prefixtext{and}
  \langle\lfloor G\rfloor,a_ja_i,\bar P,\bar Q\rangle
           &= \langle\lfloor H_j\rfloor,a_j,\bar P \restriction {H_j},\bar C \restriction {H_j}\rangle
    \otimes_R \langle\lfloor H_i\rfloor,a_i,\bar P \restriction {H_i},\bar C \restriction {H_i}\rangle
    \otimes_R F\,,
\end{align*}
it follows by Lemma~\ref{Lem: otimes compatible with MSO} that
\begin{align*}
  \MTh_m(\lfloor G\rfloor,a_ia_j,\bar P,\bar C) =
  \MTh_m(\lfloor G\rfloor,a_ja_i,\bar P,\bar C)\,.
\end{align*}
In particular, we have
\begin{align*}
  \lfloor G\rfloor \models \varphi(a_i,a_j;\bar P)
  \quad\iff\quad
  \lfloor G\rfloor \models \varphi(a_j,a_i;\bar P)\,.
\end{align*}
Hence, $\varphi(x,y;\bar P)$ does not define an order on~$G$.
\end{proof}

\begin{cor}\label{Cor: MSO-orderable implies CUT}
An $\MSO_1$-orderable class of graphs~$\calC$
has property $\CUT(f)$, for an elementary function~$f$.
\end{cor}
\begin{exa}
The following classes are not $\MSO_1$-orderable\?:
\begin{itemize}
\item the class of all cliques~$K_n$\?;
\item the class of all complete bipartite graphs $K_{n,m}$\?;
\item any class of graphs of the form $G \otimes (H_0 \oplus\dots\oplus H_n)$
  where the number~$n$ is unbounded and each $H_i$~is nonempty.
\end{itemize}
In each case, after fixing a number~$k$ of parameters,
we can choose a graph~$G$ that is sufficiently large
such that any colouring with~$k$ parameters
$P_0,\dots,P_{k-1}$ admits a nontrivial automorphism. Hence,
no formula can define an order on $\langle\lfloor G\rfloor,\bar P\rangle$.
\closingmark\end{exa}

As $\MSO_1$-orderability implies $\MSO_2$-orderability, we can expect that
the property $\CUT$ implies $\SEP$. The following lemma proves this fact.
\begin{lem}\label{Lem: CUT => SEP}
A class~$\calC$ of graphs with property $\CUT(f)$ has property $\SEP(g)$
where $g$~is the function such that $g(n) := f(n + 2^n) - 1$.
\end{lem}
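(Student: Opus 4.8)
The plan is to prove the pointwise estimate
\begin{align*}
  \Sep(G,k) + 1 \le \Cut(G,\, k + 2^k)
  \qtextq{for all} G \in \calC,\ k \in \bbN\,,
\end{align*}
since, together with the hypothesis $\Cut(G, k + 2^k) \le f(k + 2^k)$, this immediately yields $\Sep(G,k) \le f(k+2^k) - 1 = g(k)$, which is exactly property $\SEP(g)$.

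To establish the estimate I would fix $G = \langle V,E\rangle$ and a set $S \subseteq V$ with $\abs{S} \le k$ for which $G - S$ has the maximal number $c := \Sep(G,k)$ of connected components $C_0,\dots,C_{c-1}$; I may assume $S \neq \emptyset$ (see the last paragraph). The idea is to realise $G$ as a single $\otimes_R$-product of the $c+1$ nonempty parts $G[S], G[C_0],\dots,G[C_{c-1}]$ over a port set of size at most $k + 2^k$, the extra factor $G[S]$ being precisely what produces the ``$+1$''.

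For the port labels I would use two disjoint blocks of $[k + 2^k]$. First, assign the vertices of~$S$ pairwise distinct labels from $\{0,\dots,k-1\}$ (possible since $\abs{S} \le k$). Second, label each vertex~$v$ of a component~$C_i$ by a code in $\{k,\dots,k+2^k-1\}$ for its trace $N(v) \cap S$ on the separator; as $\abs{S} \le k$ there are at most $2^k$ possible traces, so these codes fit into the second block. The relation $R \subseteq [k+2^k]\times[k+2^k]$ is chosen symmetric so that a component label coding a set $T \subseteq S$ is $R$-related exactly to the labels of the vertices $s \in T$, while no two component labels are $R$-related. Writing $H_S := G[S]$ and $H_i := G[C_i]$ with these labellings, one checks that $\Del(H_S \otimes_R H_0 \otimes_R\dots\otimes_R H_{c-1})$ recovers~$G$: all edges inside~$S$ and inside each~$C_i$ survive as internal edges of the factors; an edge is added between $s \in S$ and $v \in C_i$ precisely when $s \in N(v) \cap S$, i.e.\ when $(s,v) \in E$; and no edge is ever added between two distinct components, which matches the fact that the $C_i$ are the components of $G - S$. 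Hence $G$ is a product of $c+1$ nonempty parts over $[k+2^k]$, so $\Cut(G, k+2^k) \ge c + 1$.

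The point needing care — and the main obstacle — is the bookkeeping guaranteeing the extra factor, i.e.\ the assumption $S \neq \emptyset$. If the maximising~$S$ is empty, then $G[S]$ is not a legitimate (nonempty) part and the construction yields only $c$ factors. When $G$ has an edge this is harmless: some component of~$G$ has at least two vertices and hence a non-cut vertex~$v$, and $G - \{v\}$ has the same number of components as~$G$, so for $k \ge 1$ the value $\Sep(G,k)$ is already realised by the nonempty separator~$\{v\}$. The genuinely degenerate situations — an edgeless $G$, and the case $k = 0$ — are boundary cases I would treat separately, observing that an edgeless graph forces $\Cut(G,1) = \abs{V}$ and is therefore uniformly bounded inside any class with property $\CUT(f)$. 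Checking formally that the chosen~$R$ reproduces exactly the $S$-to-component adjacencies while adding no spurious edges between components is the technical heart; it is an instance of the same port-relabelling calculus for $\otimes_R$ already used in Lemma~\ref{Lem: otimes compatible with MSO}.
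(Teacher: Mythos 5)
Your proof is correct and takes essentially the same route as the paper's: give the separator vertices pairwise distinct port labels from a block of size $k$, label each component vertex by a code for its neighbourhood trace on~$S$ (at most $2^k$ codes), and let $R$ relate a separator label $s$ to a code $T$ exactly when $s \in T$, so that $G$ is a product of the $\Sep(G,k)+1$ parts $G[S],G[C_0],\dots,G[C_{c-1}]$ over $[k+2^k]$ ports and hence $\Cut(G,k+2^k) \geq \Sep(G,k)+1$. The only difference is your explicit handling of the $S = \emptyset$ boundary case, which the paper's proof silently glosses over.
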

\begin{proof}
Let $G = \langle V,E\rangle \in \calC$ and consider a set $S \subseteq V$ of
size $\abs{S} \leq n$.
Let $C_0,\dots,C_{d-1}$ be an enumeration of the connected components
of $G - S$. We claim that $d \leq g(n)$.

We define colourings $\varrho : S \to D$ and $\pi_i : C_i \to D$,
for $i < d$, as follows.
The set of colours is $D := S \cup \PSet(S)$.
(To be formally correct, we have to take the set $[k]$ where
$k := \abs{S \cup \PSet(S)}$.
To simplify notation, we will use $S \cup \PSet(S)$ instead.)
We set
\begin{align*}
  \varrho(s) := s
  \qtextq{and}
  \pi_i(v) := \set{ s \in S }{ (v,s) \in E }\,.
\end{align*}
It follows that
\begin{align*}
  G = \Del\bigl(\langle S,\varrho\rangle \otimes_R
                       \langle C_0,\pi_0\rangle \otimes_R \dots \otimes_R \langle C_{d-1},\pi_{d-1}\rangle\bigr)\,,
\end{align*}
where
\begin{align*}
  R := \set{ (s,X) \in S \times \PSet(S) }{ s \in X }\,.
\end{align*}
Consequently, $\Cut(G,\abs{D}) \geq d+1$. Since $\abs{D} \leq n+2^n$, it follows that
\begin{align*}
  d+1 \leq \Cut(G,n+2^n) \leq f(n+2^n) = g(n)+1\,.
\end{align*}
\upqed
\end{proof}

The converse obviously does not hold.
A special case, where it \emph{does} hold is the case of $r$-sparse graphs (cf.\ Definition~\ref{Def: sparse}).
This case is of particular interest since, for $r$-sparse graphs, the expressive powers
of $\MSO_1$ and $\MSO_2$ coincide (see Theorem~9.37 of~\cite{CourcelleEngelfriet12}).
\begin{lem}\label{Lem: complete bipartite graphs are not-sparse}
The graph $K_{m,n}$ is $r$-sparse if, and only if,
$r \geq \frac{mn}{m+n}$.
\end{lem}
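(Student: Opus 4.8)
The plan is to reduce the $r$-sparsity condition to a single elementary inequality about the two part sizes, by observing that the number of internal edges of an induced subgraph of $K_{m,n}$ depends only on how many vertices the subgraph takes from each side.

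First I would fix the two vertex classes $A$ and $B$ of $K_{m,n}$, with $\abs{A} = m$ and $\abs{B} = n$, and consider an arbitrary subset $X \subseteq A \cup B$. Writing $a := \abs{X \cap A}$ and $b := \abs{X \cap B}$, every vertex of $X \cap A$ is joined to every vertex of $X \cap B$ and there are no other edges, so $\bigabs{E \restriction X} = ab$ while $\abs{X} = a + b$. Hence $K_{m,n}$ is $r$-sparse if, and only if, $ab \leq r(a+b)$ holds for all $0 \leq a \leq m$ and $0 \leq b \leq n$ with $a + b \geq 1$ (the case $a+b=0$ being trivial).

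Next I would show that, among all such pairs, the quotient $\frac{ab}{a+b}$ is maximised by taking the whole vertex set, that is, by $a = m$ and $b = n$. For this it suffices to note that $\frac{ab}{a+b}$ is monotonically increasing in each of $a$ and $b$ separately over the positive integers: increasing one coordinate while fixing the other can only increase the quotient. Consequently the largest value of $\frac{ab}{a+b}$ over the relevant range equals $\frac{mn}{m+n}$.

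Putting these two observations together yields both directions simultaneously. The condition $ab \leq r(a+b)$ holds for every admissible pair $(a,b)$ precisely when $r \geq \frac{mn}{m+n}$: necessity follows by taking $X = A \cup B$, which forces $mn \leq r(m+n)$, and sufficiency follows from the monotonicity bound $\frac{ab}{a+b} \leq \frac{mn}{m+n} \leq r$. There is no genuine obstacle here; the only point requiring a line of justification is the monotonicity of $\frac{ab}{a+b}$, which is immediate from the identity $\frac{ab}{a+b} = \bigl(\tfrac1a + \tfrac1b\bigr)^{-1}$, since increasing $a$ or $b$ decreases $\tfrac1a + \tfrac1b$ and hence increases its reciprocal.
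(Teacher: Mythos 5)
Your proof is correct and follows essentially the same route as the paper: both reduce the claim to the observation that every induced subgraph of $K_{m,n}$ is a complete bipartite graph $K_{a,b}$ with $ab$ edges and $a+b$ vertices, and both establish the maximality of $\frac{mn}{m+n}$ via the identity $\frac{ab}{a+b} = \bigl(\tfrac1a + \tfrac1b\bigr)^{-1}$. Your writeup is slightly more careful about the degenerate cases $a=0$ or $b=0$, but the core argument is identical.
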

\begin{proof}
Every induced subgraph of~$K_{m,n}$ is of the form $K_{m',n'}$
with $m' \leq m$ and $n' \leq n$.
Such a subgraph has $m' + n'$ vertices and $m'n'$ edges.
The ratio is
\begin{align*}
  \frac{m'n'}{m'+n'} = \frac{1}{\frac{1}{m'} + \frac{1}{n'}}
    \leq \frac{1}{\frac{1}{m} + \frac{1}{n}} = \frac{mn}{m+n}\,.
\end{align*}
\upqed
\end{proof}

\begin{lem}\label{Lem: sparse SEP => CUT}
A class~$\calC$ of $r$-sparse graphs with property $\SEP(f)$
has property $\CUT(g)$ where $g(k) := f(2k^2r(2r+1))$.
\end{lem}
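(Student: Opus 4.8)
The plan is to bound $\Cut(G,k)$ for a single graph $G \in \calC$ by extracting, from any witnessing factorisation, a small separator whose removal splits $G$ into many pieces, and then to invoke property $\SEP(f)$. So fix $G \in \calC$ together with a factorisation $G \cong \Del(H_0 \otimes_R \dots \otimes_R H_{n-1})$ realising $n = \Cut(G,k)$, with all $H_i$ nonempty and ports in $[k]$, and write $V_0,\dots,V_{n-1}$ for the induced partition of the vertex set. Since $\otimes_R = \otimes_{R'}$ for $R' := R \cup R^{-1}$, the edges of $G$ running between two distinct blocks $V_i,V_j$ are exactly the pairs $(x,y)$ whose port labels $(a,b)$ lie in $R'$. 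I want to choose a set $S$ meeting every such inter-block edge; then every edge of $G-S$ stays inside a single block, the components of $G-S$ refine $\{V_i \setminus S\}$, and there are at least $n - \abs S$ of them, since the blocks are disjoint and one disappears only if it lies entirely in $S$. This gives $n \le \Sep(G,\abs S) + \abs S$, and the whole task reduces to the bound $\abs S \le 2k^2 r(2r+1)$.

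I would build $S$ label-pair by label-pair. For each of the at most $k^2$ types $\{a,b\}$ with $(a,b)\in R'$, the inter-block edges carrying these labels form a rigid graph, and it suffices to cover each such type by few vertices and take the union. The place where $r$-sparsity enters is the claim that every type has a vertex cover of size at most $2r(2r+1)$. The binding case is a self-loop $a=b$: the $a$-labelled vertices of distinct blocks are then pairwise adjacent, so one vertex per block yields a clique, and the clique estimate $\binom p2 \le rp$ forces at most $2r+1$ blocks to carry an $a$-vertex; moreover any two of these blocks induce a complete bipartite graph on their $a$-vertices, so by Lemma~\ref{Lem: complete bipartite graphs are not-sparse} all but one of them carry at most $2r$ such vertices. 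The inter-block $a$-edges form a complete multipartite graph, whose minimum vertex cover is the number of $a$-vertices outside the largest block, namely at most $2r\cdot 2r \le 2r(2r+1)$.

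For a genuine pair $a \ne b$ an easier estimate suffices: by König's theorem the minimum vertex cover equals the maximum matching of the $a$–$b$ inter-block bipartite graph, and a matching of size $t$ has $2t$ endpoints whose mutual cross edges induce an almost complete bipartite subgraph; an averaging argument shows that some block carries $\ge t - 2r$ matched $a$-vertices and some block $\ge t-2r$ matched $b$-vertices, which via Lemma~\ref{Lem: complete bipartite graphs are not-sparse} forces $t = O(r)$, again well below $2r(2r+1)$. Summing the $\le k^2$ covers gives $\abs S \le 2k^2 r(2r+1)$, so $\Cut(G,k) = n \le \Sep(G,\abs S) + \abs S \le f(2k^2r(2r+1)) + 2k^2r(2r+1)$, which establishes property $\CUT(g)$.

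I expect the main obstacle to be the per-type sparsity estimate, and in particular isolating the self-loop case: this is exactly where the quadratic factor $2r(2r+1)$ is needed, since a bounded clique already forces $\Theta(r^2)$ mutually adjacent port vertices, all but one of which must be deleted. A secondary subtlety, which I anticipate needs a short separate argument, is that covering all inter-block edges may swallow entire blocks — as it must, for instance, for a triangle assembled from three singleton blocks — so that the naive count gives $n - \abs S$ rather than $n$ surviving components; one controls this by observing that such fully absorbed blocks can only occur inside clique-type obstructions of size at most $2r+1$ and are hence few, which is what is required to match the stated $g$.
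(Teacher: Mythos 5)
Your overall strategy coincides with the paper's: from a factorisation $G \cong \Del(H_0 \otimes_R \dots \otimes_R H_{n-1})$ witnessing $\Cut(G,k)=n$, extract a set $S$ of at most $2k^2r(2r+1)$ vertices meeting every inter-block edge, observe that the components of $G-S$ then refine the blocks, and finish with $\SEP(f)$. The paper builds $S$ directly: for each $(a,b)\in R$ it shows, by exhibiting a dense almost-complete-bipartite subgraph, that at most $2r+1$ blocks carry an $a$-vertex or at most $2r+1$ carry a $b$-vertex, and it collects the port classes of size at most $2r$ on the small side. You organise the same sparsity input as a minimum-vertex-cover bound per label pair, via the complete multipartite structure for $a=b$ and K\"onig's theorem for $a\neq b$. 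That repackaging is fine; your $a=b$ analysis is correct, and the $a\neq b$ case closes even though your specific averaging claim (``some block carries $\geq t-2r$ matched $a$-vertices'') is not right as stated: if the matched vertices are spread over many blocks, no block carries many of them, but then almost all of the $t^2$ cross pairs are genuine edges and sparsity bounds $t$ directly (no block contains both ends of a matched pair, so at least $t^2/2$ of the cross pairs are edges on at most $2t$ vertices, forcing $t\le 4r$).

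The one genuine gap is your last step. Your honest count is $n \le \Sep(G,\abs{S}) + \abs{S}$, which yields $\CUT(g')$ for $g'(k)=f(2k^2r(2r+1))+2k^2r(2r+1)$, not the stated $g$, and the closing claim that the absorbed blocks ``are hence few, which is what is required to match the stated $g$'' cannot be made to work: however few they are, they contribute an additive term. Indeed your own triangle example refutes the stated bound outright: $\{K_3\}$ is $1$-sparse and has $\SEP(f)$ for $f\equiv 1$, so $g\equiv 1$, yet $\Cut(K_3,1)=3$ via three singleton blocks and $R=\{(0,0)\}$. You should not be troubled that you could not reach $g$: the paper's own proof asserts without justification that $G-S$ has at least $n$ components, i.e.\ it silently assumes no block is swallowed by $S$, and thereby proves the unattainable sharper bound. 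Since the lemma is only ever used qualitatively (sparse plus $\SEP$ implies $\CUT$), the corrected function $g'$ is harmless; state it explicitly rather than trying to recover the paper's $g$.
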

\begin{proof}
Let $G \in \calC$.
Suppose that
\begin{align*}
  G = \Del\bigl((H_0,\pi_0) \otimes_R\dots\otimes_R (H_{d-1},\pi_{d-1})\bigr)
  \quad\text{where } R \subseteq [k] \times [k]\,.
\end{align*}
Without loss of generality,
we may assume that $R$~is symmetric.
We have to show that $d \leq g(k)$.

Set $I_a := \set{ i < d }{ \pi_i^{-1}(a) \neq \emptyset }$.
First, let us show that
\begin{align*}
  \abs{I_a} \leq 2r+1
  \qtextq{or}
  \abs{I_b} \leq 2r+1\,,
  \qquad\text{for every } (a,b) \in R\,.
\end{align*}
For a contradiction, suppose that there is some $(a,b) \in R$
that $\abs{I_a} \geq 2r+2$ and $\abs{I_b} \geq 2r+2$.
Choose subsets $I'_a \subseteq I_a$ and $I'_b \subseteq I_b$
of size $m := 2r+2$ and select vertices $x_i \in \pi_i^{-1}(a)$, for $i \in I'_a$,
and $y_i \in \pi_i^{-1}(b)$, for $i \in I'_b$.
The subgraph induced by these vertices has $m^2 - \abs{I_a \cap I_b} \geq m^2 - m$ edges and $2m$ vertices.
Since
\begin{align*}
  \frac{m^2-m}{2m} = \frac{m-1}{2} = \frac{2r+1}{2} > r\,,
\end{align*}
it follows that $G$~is not $r$-sparse. A contradiction.

For $a,b \in [k]$, we set
\begin{align*}
  S_{ab} &:= \bigcup {\bigset{ \pi_i^{-1}(a) }{ \textstyle i \in I_a,\ \abs{\pi_i^{-1}(a)} \leq 2r }}\,, \\
       S &:= \bigcup {\bigset{ S_{ab} }{ (a,b) \in R,\ \abs{I_a} \leq 2r+1 }}\,.
\end{align*}
Note that
\begin{align*}
  \abs{S_{ab}} \leq 2r\abs{I_a}
  \qtextq{and}
  \abs{S} \leq \abs{R}\cdot(2r+1)\cdot(2r) \leq 2k^2r(2r+1)\,.
\end{align*}

We claim that every connected component of $G - S$ is contained in
$H_i - S$, for some~$i$.
For a contradiction, suppose that there is a connected component~$C$
of $G - S$ containing vertices from both $H_i - S$ and $H_j - S$.
Then there exists an edge $(x,y)$ of~$G$ with $x \in H_i - S$
and $y \in H_j - S$. Let $a := \pi_i(x)$ and $b := \pi_j(y)$.
Then $(a,b) \in R$.
We have shown above that $\abs{I_a} \leq 2r+1$ or $\abs{I_b} \leq 2r+1$.
In the first case, we have $x \in \pi_i^{-1}(a) \subseteq S_{ab} \subseteq S$,
in the second case, we have $y \in \pi_i^{-1}(b) \subseteq S_{ba} \subseteq S$.
Hence, both cases lead to a contradiction.

It follows that $G - S$ has at least~$d$ connected components.
Consequently,
\begin{align*}
  d \leq \Sep(G,\abs{S}) \leq \Sep(G,2k^2r(2r+1)) \leq f(2k^2r(2r+1)) = g(k)\,.
\end{align*}
\upqed
\end{proof}

\subsection{Cographs}   
\label{Sect: cographs}

A well-known VR-equational class is the class of cographs.
A \emph{cograph} is a graph that can be constructed from single vertices
using the operations of disjoint union~$\oplus$ and complete join~$\otimes$.
Each cograph can be denoted by a term over $\oplus$,~$\otimes$,
and a constant~$1$ that denotes an isolated vertex.
For instance, $(1 \oplus 1) \otimes (1 \oplus 1 \oplus 1)$ denotes the graph
$K_{2,3}$, and $1 \otimes 1 \otimes\dots\otimes 1$ denotes a clique.
Since $\oplus$~and~$\otimes$ are associative and commutative, we consider them
as operations of variable arity and we ignore the order of the arguments.
The class~$\calC$ of cographs is VR-equational.
It can be defined by the equation
\begin{align*}
  \calC = \calC \oplus \calC \cup \calC \otimes \calC \cup \{1\}\,.
\end{align*}

A cograph~$G$ with more than one vertex is either disconnected and of the
form $G = H_0 \oplus\dots\oplus H_n$ for connected cographs $H_0,\dots,H_n$,
or it is connected and of the form $G = H_0 \otimes\dots\otimes H_n$ for
cographs $H_0,\dots,H_n$ each of which is either disconnected or
a single vertex.
Furthermore, these decompositions of~$G$ are unique, up to the ordering of
$H_0,\dots,H_n$.
Using this observation, we can associate with every cograph a unique term
as follows.
\begin{defi}
A term~$t$ over the operations $\oplus$,~$\otimes$,~$1$
(where we consider $\oplus$~and~$\otimes$ as many-ary operations with
unordered arguments) is a \emph{cotree}
if there is no node that is labelled by the same operation as one of its
immediate successors.
Every cograph has a unique cotree.
The \emph{depth} of a cograph is the height of this cotree.
\closingmark\end{defi}
\begin{exa}
The cograph~$G$ defined by the term
\begin{align*}
  (1 \otimes (1 \oplus (1 \oplus (1 \otimes 1)))) \otimes ((1 \otimes (1 \otimes 1)) \oplus 1)
\end{align*}
has the cotree
\begin{center}
%
%
%
%
\includegraphics{Order-final.2}
\end{center}
The leaves of this tree correspond to the vertices of~$G$ and
every subtree is the cotree of an induced subgraph of~$G$.
\closingmark\end{exa}

Recall (see, e.g., \cite{Courcelle96b}) that a \emph{module} of a graph
$G = \langle V,E\rangle$ is a set~$M$ of vertices such that
every vertex in $V \setminus M$ is either adjacent to all elements of~$M$,
or to none of them.
A module~$M$ is called \emph{strong} if there is no module~$N$ such that
$M \setminus N$ and $N \setminus M$ are both nonempty
(cf.~\cite{MohringRadermacher84,Courcelle96b,EhrenfeuchtHarjuRozenberg99}).
Clearly, being a module and being a strong module are expressible in $\MSO_1$.
In a cograph there are two types of strong modules\?:
the connected and the disconnected ones.

\begin{thm}\label{Thm: cographs}
Let $\calC$~be a class of cographs. The following statements are equivalent.
\begin{enumerate}
\item $\calC$~is $\MSO_1$-orderable.
\item $\calC$~has property $\CUT$.
\item There exists a constant $d \in \bbN$ such that the cotree of every
  graph in~$\calC$ has outdegree at most~$d$.
\end{enumerate}
\end{thm}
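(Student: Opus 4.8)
The plan is to prove the cycle of implications $(1)\Rightarrow(2)\Rightarrow(3)\Rightarrow(1)$. The implication $(1)\Rightarrow(2)$ is immediate from Corollary~\ref{Cor: MSO-orderable implies CUT}, which already states that an $\MSO_1$-orderable class has property $\CUT(f)$ for an elementary~$f$, hence in particular property $\CUT$.

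For $(2)\Rightarrow(3)$ I would argue contrapositively, showing that a cotree node of large outdegree forces a large value of $\Cut(G,k)$ for one fixed small~$k$. Suppose the cotree of~$G$ has a node whose associated strong module~$M$ decomposes as $M = H_0 * \dots * H_{n-1}$ with $* \in \{\oplus,\otimes\}$ and all $H_i$ nonempty. Writing $W := V \setminus M$, the module property guarantees that every vertex of~$W$ is adjacent either to all of~$M$ or to none of it. Hence I can realise $G = \Del(H_0 \otimes_R \dots \otimes_R H_{n-1} \otimes_R W)$ using only three port labels: label~$0$ on all of~$M$, label~$1$ on the vertices of~$W$ adjacent to~$M$, and label~$2$ on the remaining vertices of~$W$; I put $(0,0) \in R$ iff the node is a $\otimes$-node, together with $(0,1),(1,0) \in R$ and $(0,2),(2,0) \notin R$. (When $M = V$ a single label suffices and there is no factor~$W$.) This exhibits at least~$n$ nonempty factors with ports in~$[3]$, so $\Cut(G,3) \geq n$. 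Thus property $\CUT(f)$ bounds every cotree outdegree by~$f(3)$, and I may set $d := f(3)$.

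The substantial direction is $(3)\Rightarrow(1)$. Here I would first observe that the entire cotree is $\MSO_1$-definable in $\lfloor G\rfloor$ \emph{without parameters}: strong modules are $\MSO_1$-definable (as recalled in the paper), their inclusion order recovers the cotree, and the type of a strong module~$N$ is read off from connectedness of $G[N]$ ($\otimes$-node iff connected, $\oplus$-node iff disconnected, leaf iff a singleton). Consequently, for any two vertices $u,v$ I can $\MSO_1$-define their least common ancestor~$N$ (the smallest strong module containing both; such modules are laminar, hence linearly ordered by inclusion, so a smallest one exists) and the two distinct children $C_u,C_v$ of~$N$ containing $u$ and~$v$. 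The order will then be the lexicographic order along the cotree: $u < v$ iff $C_u$ precedes $C_v$ in a fixed linear ordering of the children of~$N$. Everything thus reduces to defining, with a number of parameters depending only on~$d$, a linear order of the at most~$d$ children at \emph{every} node.

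Ordering the children is where the difficulty concentrates, and it is the analogue of the bounded-degree case of Theorem~\ref{Thm: orderability of trees}. Since the children of a node sit symmetrically inside a $\oplus$- or $\otimes$-node, the bare graph cannot distinguish them and parameters are indispensable. The clean approach is to order the children of~$N$ by their $\MSO$-theories in $\langle\lfloor G\rfloor,\bar P\rangle$ and to choose~$\bar P$ so that siblings always receive \emph{distinct} theories; the existence of such a bounded tuple~$\bar P$ should propagate down the cotree via the Composition Theorem (Theorem~\ref{Thm: composition}), exactly as distinctness of summand-theories was exploited in Proposition~\ref{Prop: orderable classes closed under union}\,(b). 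The main obstacle is the bookkeeping: a single vertex lies in every module along its branch, so one cannot naively tag each node's sibling-index independently using boundedly many vertex sets. I expect to overcome this by a heavy-path (``leader'') decomposition of the cotree, designating along each branch a canonical leader child so that only the bounded index at the top of each leader segment need be recorded per leaf; this reduces the per-leaf data to $O(\log d)$ bits, hence to a bounded number of parameters. Combined with the parameter-free cotree and the lexicographic rule above, the resulting sibling order orders~$\calC$.
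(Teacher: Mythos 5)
Your implications $(1)\Rightarrow(2)$ and $(2)\Rightarrow(3)$ are correct and essentially identical to the paper's: $(1)\Rightarrow(2)$ is exactly the appeal to Corollary~\ref{Cor: MSO-orderable implies CUT}, and your contrapositive for $(2)\Rightarrow(3)$ is the paper's argument with the same three-port decomposition $G=\Del(C\otimes_R B_0\otimes_R\cdots\otimes_R B_{n-1})$ exhibiting $\Cut(G,3)>d$ (the paper labels $A$ with port~$2$ and splits $C=G-A$ by adjacency; your variant with port~$0$ on~$M$ is equivalent).

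The problem is $(3)\Rightarrow(1)$. The paper does not prove this implication at all: it cites Corollary~6.12 of~\cite{Courcelle96b}. You attempt a self-contained proof, and your reduction is sound up to a point --- the cotree is indeed recoverable without parameters from the strong modules, the lca of two vertices is definable, and the whole problem does reduce to defining, with boundedly many parameters, a linear order on the $\leq d$ children of \emph{every} cotree node simultaneously. But that is precisely where your argument stops being a proof. The obstacle you name is real and you do not resolve it: a heavy-path decomposition only helps if the leader designation is itself $\MSO_1$-definable from the parameters, and the natural encodings are circular. To know which child of $N$ is the leader you need the representative $r(N)$; to define $r(N)$ you need to know, for every module $M$ with $r(N)\in M\subsetneq N$, that $M$ is the leader child of its parent; and recording the sibling index of $T_v$ (the topmost module represented by~$v$) in $O(\log d)$ vertex colours does not by itself let you recover \emph{which} of the unboundedly many nested modules containing $v$ is $T_v$. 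Your alternative suggestion --- colour the vertices so that siblings always have distinct $\MSO_h$-theories and order them by theory --- has the same status: a Hall-type choice of distinct theories works at a single node, but the colouring of a child fixes the theories of all of its descendants at once, so the existence of one global colouring that separates siblings at every level is a genuine inductive claim that you neither state precisely nor prove. Either route can probably be made to work (this is in effect what \cite{Courcelle96b} does), but as written the crucial step of $(3)\Rightarrow(1)$ is a plan, not a proof; if you do not want to carry it out, you should cite Corollary~6.12 of~\cite{Courcelle96b} as the paper does.
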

\begin{proof}
$(3) \Rightarrow (1)$ is Corollary~6.12 from~\cite{Courcelle96b}
and $(1) \Rightarrow (2)$ was shown in Corollary~\ref{Cor: MSO-orderable implies CUT}.

For $(2) \Rightarrow (3)$, suppose that, for every $d \in \bbN$,
there exists a graph~$G_d \in \calC$ with a cotree of maximal outdegree at least~$d$.
It is sufficient to show that $\Cut(G_d,3) > d$.

By assumption, we can find a strong module~$A$ of~$G_d$ containing strong submodules
$B_0,\dots,B_{n-1}$, for $n > d$, such that
either (i) $A = B_0 \oplus\dots\oplus B_{n-1}$, or (ii) $A = B_0 \otimes\dots\otimes B_{n-1}$.
Let $C := G - A$ be the graph induced by the complement of~$A$.
Every vertex $v \in C$ is either connected
to all vertices of~$A$, or to none of them.
We assign the port label~$0$ to the former vertices and the port label~$1$ to the latter ones.
Each vertex of~$A$ gets port label~$2$.
It follows that
\begin{align*}
  G_d = C \otimes_R B_0 \otimes_R\dots\otimes B_{n-1}
\end{align*}
where $R = \{(0,2),(2,0)\}$ or $R = \{(0,2),(2,0),(2,2)\}$.
Consequently, we have $\Cut(G_d,3) \geq n+1 > d$.
\end{proof}

\begin{cor}\label{Cor: cographs unorderable}
Let $k \in \bbN$. The class of cographs of depth at most~$k$ is hereditarily $\MSO_1$-unorderable.
\end{cor}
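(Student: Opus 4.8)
The plan is to derive the statement directly from Theorem~\ref{Thm: cographs}, using the equivalence of $\MSO_1$-orderability with a bound on the outdegree of cotrees. First I would note that, for $k \geq 1$, the class of cographs of depth at most~$k$ is infinite, since it already contains all cliques $K_n = 1 \otimes\dots\otimes 1$, whose cotrees have height~$1$. It therefore remains to show that no infinite subclass of it can be $\MSO_1$-orderable.

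So suppose, towards a contradiction, that $\mathcal{D}$ is an infinite subclass of the cographs of depth at most~$k$ that is $\MSO_1$-orderable. By the implication $(1) \Rightarrow (3)$ of Theorem~\ref{Thm: cographs}, there is a constant $d \in \bbN$ such that the cotree of every cograph in~$\mathcal{D}$ has outdegree at most~$d$. The key step is then a counting argument: every such cotree has height at most~$k$ (since the graphs in $\mathcal{D}$ have depth at most~$k$) and outdegree at most~$d$, and a rooted tree of height at most~$k$ in which every node has at most~$d$ immediate successors has at most~$d^k$ leaves. As the leaves of a cotree correspond to the vertices of the cograph, every graph in~$\mathcal{D}$ has at most~$d^k$ vertices.

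Since there are, up to isomorphism, only finitely many graphs on at most $d^k$ vertices, the class~$\mathcal{D}$ is finite, contradicting our assumption. Hence no infinite subclass of the cographs of depth at most~$k$ is $\MSO_1$-orderable, which together with the infiniteness of this class shows that it is hereditarily $\MSO_1$-unorderable. I do not expect any genuine obstacle here: once Theorem~\ref{Thm: cographs} supplies the outdegree bound, the argument is just the elementary observation that bounded depth together with bounded branching forces a bounded number of vertices, hence finiteness. The only point requiring a little care is to match up the leaves of the cotree with the vertices of the cograph, so that the vertex count is genuinely bounded.
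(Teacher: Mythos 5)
Your proof is correct and takes essentially the same route as the paper: the paper's proof is the one-line observation that, for a given depth~$k$, only finitely many cographs satisfy condition~(3) of Theorem~\ref{Thm: cographs}, and your argument simply unpacks this by deriving the $d^k$ bound on the number of vertices from the outdegree and height bounds. The extra care you take in noting that the class is infinite (for $k \geq 1$) and in matching leaves of the cotree with vertices is a reasonable elaboration, not a deviation.
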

\begin{proof}
For any given depth~$k$, there are only finitely many cographs (up to isomorphism) satisfying
condition~(3) of Theorem~\ref{Thm: cographs}.
\end{proof}

\begin{cor}
For VR-equational classes of cographs, $\MSO_1$-orderability is decidable.
\end{cor}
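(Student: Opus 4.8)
The plan is to reduce the problem to deciding boundedness of a semi-linear set, following the same scheme as the proof that property~$\SEP$ is decidable for VR-equational classes. By Theorem~\ref{Thm: cographs}, a class~$\calC$ of cographs is $\MSO_1$-orderable if, and only if, there is a number~$d$ bounding the outdegree of every node of the cotree of every graph in~$\calC$. Hence it suffices to decide whether this outdegree is bounded over~$\calC$.

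First I would express the outdegree of a cotree node combinatorially. Each node of the cotree of a cograph~$G$ corresponds to a strong module~$A$, and its children are exactly the connected components of $G[A]$ when $A$~is disconnected, and the connected components of the complement $\overline{G[A]}$ when $A$~is connected. Since being a strong module, connectivity inside an induced subgraph, and connectivity inside its complement are all $\MSO_1$-expressible, I would write an $\MSO$-formula $\varphi(Y)$ stating that there is a strong module~$A$ with $Y \subseteq A$ such that $Y$~meets every connected component of $G[A]$ (if $A$~is disconnected) or every connected component of $\overline{G[A]}$ (if $A$~is connected) in exactly one vertex. For such a set~$Y$, the cardinality $\abs{Y}$ equals the outdegree of the cotree node corresponding to~$A$; and as $A$~ranges over all strong modules and $G$~over~$\calC$, these cardinalities range over precisely the outdegrees occurring in the cotrees of graphs in~$\calC$.

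By the Semi-Linearity Theorem I can then compute a representation of the semi-linear set $M_\varphi(\calC) = \set{ \abs{Y} }{ \lfloor G\rfloor \models \varphi(Y) \text{ for some } G = \langle V,E\rangle \in \calC \text{ and } Y \subseteq V } \subseteq \bbN$. A semi-linear subset of~$\bbN$ is a finite union of arithmetic progressions, so it is bounded if, and only if, all of its periods are~$0$, which can be read off the computed representation. Consequently I can decide whether the cotree outdegrees are bounded, and hence, by Theorem~\ref{Thm: cographs}, whether $\calC$~is $\MSO_1$-orderable.

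I expect the only real difficulty to lie in the correct formulation of~$\varphi$: one has to identify the children of a cotree node with the components (respectively co-components) of the associated strong module and phrase ``exactly one vertex per component'' uniformly for the two node types in $\MSO_1$. Once $\varphi$~is available, the semi-linearity argument is routine and parallels the decidability proof for property~$\SEP$.
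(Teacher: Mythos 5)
Your proposal is correct and follows essentially the same route as the paper: reduce via Theorem~\ref{Thm: cographs} to bounding the cotree outdegree, capture that outdegree as the cardinality of a set satisfying an $\MSO_1$-formula, and apply the Semi-Linearity Theorem to decide boundedness. The only cosmetic difference is in the formula itself --- the paper asks for a set $X$ inside a strong module~$Z$ meeting every proper strong submodule of~$Z$ in at most one vertex, whereas you phrase the children of a cotree node via connected components and co-components of the module; both are $\MSO_1$-expressible and yield the same semi-linear boundedness test.
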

\begin{proof}
Let $\calC$~be a VR-equational class of cographs.
By Theorem~\ref{Thm: cographs}, it is sufficient to decide whether there is a constant~$d$
such that every cotree of a graph in~$\calC$ has maximal outdegree at most~$d$.
Let $\varphi(X)$ be an $\MSO_1$-formula stating that there exists a strong module~$Z$ such that
$X \subseteq Z$ and
every strong module $Y \subset Z$ contains at most one element of~$X$.
Given a cograph~$G$, it follows that the maximal outdegree of the cotree of~$G$
is equal to the maximal size of a set~$X$ satisfying~$\varphi$ in~$G$.
Using the Semi-Linearity Theorem, we can decide whether this size is bounded.
\end{proof}

\begin{rem}
If a class~$\calC$ of cographs is $\MSO_1$-orderable,
there exists an $\MSO$-transduction
mapping each graph in~$\calC$ to its cotree (see~\cite{Courcelle96b}).
But, conversely, the existence of such an $\MSO$-transduction is not enough to ensure $\MSO_1$-orderability\?:
there exists an $\MSO$-transduction from the class of all cographs of depth~$k$
to their respective cotrees (this is a routine construction).
But, as we have just seen, this class is hereditarily $\MSO_1$-unorderable.
\closingmark\end{rem}

\subsection{$\otimes$-decompositions}   

Cographs are precisely the graphs of clique-width~$2$.
A natural aim is thus to extend the equivalence (1)~$\Leftrightarrow$~(2) of Theorem~\ref{Thm: cographs}
to classes of graphs of bounded clique-width.
However, we must leave this as a conjecture.
Instead we only consider the special case of graphs where the height of
the decomposition (as defined below) is bounded.
Such graphs generalise cographs of bounded depth, and we show that they
are hereditarily $\MSO_1$-unorderable.

We start by introducing a kind of decomposition associated with the notion of
clique-width.
\begin{defi}
Let $G = \langle V,E\rangle $ be a graph.

(a) A \emph{$\otimes$-decomposition} of~$G$ of \emph{width~$k$} is a family
$(H_v)_{v \in T}$ of labelled graphs $H_v = \langle U_v,F_v,\pi_v\rangle$
with $\pi_v : U_v \to [k]$ such that
\begin{itemize}
\item the index set~$T$ is a rooted tree,
\item $H_{\emptyseq} = \langle V,E,\pi_{\emptyseq}\rangle$, for some labelling $\pi_{\emptyseq}$,
\item $\abs{U_v} = 1$, for every leaf $v \in T$,
\item for every internal node $v \in T$ with immediate successors
  $u_0,\dots,u_{d-1}$, there is some $R_v \subseteq [k] \times [k]$ such that
  \begin{align*}
    \Del(H_v) = \Del(H_{u_0} \otimes_{R_v}\dots\otimes_{R_v} H_{u_{d-1}})\,.
  \end{align*}
\end{itemize}
We call $\otimes_{R_v}$ the \emph{operation at}~$v$.
Note that the port labels of~$H_v$ and $H_{u_0},\dots,H_{u_{d-1}}$ are
unrelated.
(Hence, the labelling~$\pi_{\emptyseq}$ of the root is arbitrary.
We have added it to keep the notation uniform.)

(b) A \emph{strong $\otimes$-decomposition} of~$G$ is
a $\otimes$-decomposition $(H_v)_{v \in T}$
such that, for each internal node $v \in T$ with immediate successors
$u_0,\dots,u_{d-1}$,
there is some $R_v \subseteq [k] \times [k]$ and some function
$\varrho : [k] \to [k]$ such that
\begin{align*}
  H_v = \mathrm{relab}_\varrho(H_{u_0} \otimes_{R_v}\dots\otimes_{R_v} H_{u_{d-1}})\,.
\end{align*}

(c) The \emph{height} of a $\otimes$-decomposition $(H_v)_{v \in T}$
is the height of the tree~$T$.

(d) We define $\WD^\otimes_n(G)$ as the least number~$k$ such that
$G$~has a $\otimes$-decom\-pos\-i\-tion of width at most~$k$ and height at most~$n$.
Similarly, we define $\SWD^\otimes_n(G)$ as the least number~$k$ such that
$G$~has a strong $\otimes$-decomposition of width at most~$k$ and height at most~$n$.
We call $\WD^\otimes_n(G)$ the \emph{$n$-depth $\otimes$-width} of~$G$
and $\SWD^\otimes_n(G)$ is its \emph{strong $n$-depth $\otimes$-width.}%
\footnote{Recently a closely related notion, called
\emph{shrub-depth,} was introduced in \cite{GanianHNOR12}. Its exact relation
to strong $n$-depth $\otimes$-width remains to be investigated.}
\closingmark\end{defi}

\begin{rem}
(a) For every graph~$G$ and all $n$,~$m$ such that $m < n$, we have
\begin{align*}
   \WD^\otimes_n(G) &\leq \SWD^\otimes_n(G) \leq \abs{V}\,, \\
   \WD^\otimes_n(G) &\leq \WD^\otimes_m(G)\,, \\
\prefixtext{and}
  \SWD^\otimes_n(G) &\leq \SWD^\otimes_m(G)\,.
\end{align*}

(b) Recall the definition of clique-width in
Section~\ref{Sect: structures and graphs}.
Since the operation~$\otimes_R$ can be expressed by the operations
clique-width is based on, but by using twice as many port labels,
it follows that the clique-width of a graph is at most twice
its strong $n$-depth $\otimes$-width (for any~$n$).
Since, conversely, for sufficiently large~$n$,
the strong $n$-depth $\otimes$-width of a graph~$G$ is at most its clique-width,
it follows that, for every graph~$G$ and all sufficiently large $n$,
\begin{align*}
  \SWD^\otimes_n(G) \leq \cwd(G) \leq 2\cdot\SWD^\otimes_n(G)\,.
\end{align*}
If we define $\SWD^\otimes(G)$ as the minimal value of $\SWD^\otimes_n(G)$
when $n$~ranges over~$\bbN$,
we therefore obtain a nontrivial width measure that is equivalent to
clique-width.

(c) Note that $\WD^\otimes_n(G) \leq 2$, for every graph~$G$ with $n$~vertices.
Hence, the width $\WD^\otimes_n(G)$ is only of interest if there is a bound on~$n$.
\closingmark\end{rem}
Because of its relation to clique-width, the strong $\otimes$-width
is of more interest than the $\otimes$-width (which becomes trivial for large depths).
We have introduced the simpler notion of $\otimes$-width since,
in the special case we consider, there exists a bound on the depth of $\otimes$-decompositions.
In this case we can use the following lemma to transform
a bound on the $\otimes$-width of a class into
a bound on its strong $\otimes$-width.
\begin{lem}
For every graph~$G$ and every $n \in \bbN$,
\begin{align*}
  \WD^\otimes_n(G) \leq \SWD^\otimes_n(G) \leq \bigl[\WD^\otimes_n(G)\bigr]^{n+1}.
\end{align*}
\end{lem}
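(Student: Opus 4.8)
The plan is to treat the two inequalities separately. The left-hand inequality $\WD^\otimes_n(G) \le \SWD^\otimes_n(G)$ requires no real work: every strong $\otimes$-decomposition is in particular a $\otimes$-decomposition, since the equation $H_v = \mathrm{relab}_\varrho(H_{u_0} \otimes_{R_v} \cdots \otimes_{R_v} H_{u_{d-1}})$ implies $\Del(H_v) = \Del(H_{u_0} \otimes_{R_v} \cdots \otimes_{R_v} H_{u_{d-1}})$, because $\Del$ discards port labels and is therefore unaffected by the outer relabelling. Hence the minimum width over strong decompositions is a minimum over a subfamily and can only be larger. (This was already recorded in the Remark preceding the lemma.)

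For the right-hand inequality the idea is to start from a $\otimes$-decomposition $(H_v)_{v \in T}$ of width $k := \WD^\otimes_n(G)$ and height at most~$n$ and to refine its labellings so that each parent's labelling becomes a function of its children's labellings, as the strong condition demands, paying for this only by enlarging the port set from $[k]$ to a set of size $k^{n+1}$. The obstruction to being strong is that $\pi_v$ is permitted to be arbitrary and unrelated to the $\pi_{u_i}$; the fix is to make the new label of a vertex at a node~$v$ record the whole history of old labels along the path from the root $\emptyseq$ down to~$v$. Concretely, if $w_0 = \emptyseq, w_1, \dots, w_j = v$ is this path and $x$ is a vertex of $H_v$, I would assign to $x$ at~$v$ the sequence $\langle \pi_{w_0}(x), \dots, \pi_{w_j}(x)\rangle \in [k]^{j+1}$.

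This encoding has the two properties needed. First, the old label $\pi_v(x)$ is recoverable as the last entry of the new label, so the cross-edges added at~$v$ — which depend only on $R_v$ applied to the old labels of the children — are reproduced by a suitable relation $R'_v$ on the new labels; this keeps $\Del(H'_v) = \Del(H_v)$ throughout and, at the root, $\Del(H'_{\emptyseq}) = G$. Second, the new label of $x$ at~$v$ is exactly the new label of~$x$ at the child containing it with its final entry deleted, and since ``delete the last entry'' is a single function it is realised by one relabelling $\varrho_v$ applied uniformly after the join — precisely the shape a strong $\otimes$-decomposition requires. I would verify both points by induction on~$T$, going up from the leaves, where the single vertex receives a length-one label.

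Finally, to pin the width down to $k^{n+1}$ rather than the total number $\sum_{j=1}^{n+1} k^{j}$ of all histories, I would not use histories as port labels directly but rename them depth by depth: for each depth $j \le n$ fix an injection of the length-$(j+1)$ sequences into a fixed set of size $k^{n+1}$ (possible since $k^{j+1} \le k^{n+1}$) and apply the depth-$j$ injection at every node of depth~$j$. Because nodes at a common depth share one renaming, the ``delete last entry'' map still descends to a well-defined relabelling on this size-$k^{n+1}$ port set, and $R'_v$ stays well-defined. The one place where care is needed — and the main thing to check — is exactly this last compatibility point: that renaming independently at each depth does not destroy the functional dependence of a parent's labels on its children's. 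It works precisely because all children of a node sit at one common depth and hence use one common renaming. The refined family is then a strong $\otimes$-decomposition of the same height $\le n$ and width $\le k^{n+1}$, giving $\SWD^\otimes_n(G) \le \bigl[\WD^\otimes_n(G)\bigr]^{n+1}$.
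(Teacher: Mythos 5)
Your proof is correct and follows essentially the same route as the paper's: both relabel each vertex at a node~$v$ by the sequence of its original port labels along the root-to-$v$ path, so that the parent's labelling is obtained from the children's by the single relabelling that deletes the last entry. The paper handles the final size bound by simply counting all histories, $1+k+\cdots+k^{n} \le k^{n+1}$, rather than renaming per depth, but that is only a difference in bookkeeping.
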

\begin{proof}
The first inequality being trivial, we only prove the second one.
Given a $\otimes$-decom\-pos\-i\-tion $(H_v)_{v \in T}$ of~$G$ of height~$n$
and width $k := \WD^\otimes_n(G)$,
we construct a strong $\otimes$-decom\-pos\-i\-tion $(H'_v)_{v \in T}$
of~$G$ of the same height and width $k^n$.
Consider $v \in T$ and let $v_0,\dots,v_m$ be the path in~$T$ from
the root $\emptyseq = v_0$ to $v = v_m$, where $m < n$.
Suppose that $H_v = \langle U_v,F_v,\pi_v\rangle$.
We set $H'_v := \langle U_v,F_v,\pi'_v\rangle$ where
\begin{align*}
  \pi'_v(x) := \langle\pi_{v_0}(x),\dots,\pi_{v_m}(x)\rangle\,.
\end{align*}
This labelling uses $1 + k + k^2 +\dots+ k^n \leq k^{n+1}$ port labels.
Then
\begin{align*}
  H'_v = \mathrm{relab}_\varrho(H'_{u_0} \otimes_{R_v}\dots\otimes_{R_v} H'_{u_{d-1}})\,,
\end{align*}
where the function~$\varrho$ maps $\langle a_0,\dots,a_m,a_{m+1}\rangle$
to $\langle a_0,\dots,a_m\rangle$.
\end{proof}

\begin{lem}\label{Lem: CUT implies bounded degree of decomposition}
Let $G$~be a graph and $(H_v)_{v \in T}$ a $\otimes$-decomposition of~$G$
of width at most~$k$.
Every vertex of~$T$ has less than $\Cut(G,k+2^k)$ immediate successors.
\end{lem}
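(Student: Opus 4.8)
The goal is to show that a vertex $v$ with $d$ immediate successors satisfies $d<\Cut(G,k+2^k)$, which I would establish by exhibiting a factorisation of $G$ into $d+1$ nonempty factors over a label set of size $k+2^k$ together with a single relation.

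The plan is to keep the local factorisation at $v$ and attach the remainder of the graph as one further factor. Write $u_0,\dots,u_{d-1}$ for the immediate successors of $v$ and set $C:=G-U_v$, the induced subgraph on the vertices outside the subtree of~$v$. The substantive case is $v\neq\emptyseq$, where $C$ is nonempty and $U_{u_0},\dots,U_{u_{d-1}},C$ partition the vertex set into $d+1$ nonempty blocks (for the root $\emptyseq$ one already has the $d$ factors $H_{u_i}$ over $[k]$ directly). The defining equation $\Del(H_v)=\Del(H_{u_0}\otimes_{R_v}\dots\otimes_{R_v}H_{u_{d-1}})$ exhibits $G[U_v]$ as the $\otimes_{R_v}$-combination of the factors $H_{u_i}$, so all that remains is to add, by the same kind of operation, the edges running between $U_v$ and~$C$.

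To encode those edges I would enlarge the label set to $[k]\sqcup\PSet([k])$, of size $k+2^k$. The factors $H_{u_i}$ retain their labels in $[k]$; a vertex $c\in C$ receives the label $S_c\subseteq[k]$ consisting of all port labels $a$ for which $c$ is adjacent in $G$ to the vertices of $U_v$ carrying label~$a$. The relation $R'$ is chosen to agree with $R_v$ on $[k]\times[k]$, to contain $(a,S)$ and $(S,a)$ exactly when $a\in S$, and to contain no pair from $\PSet([k])\times\PSet([k])$. Then $R'$ reproduces the inter-child edges through $R_v$, the edges towards $C$ through the membership relation, and leaves the edges internal to $C$ to be inherited from the factor $C$ itself; granting that the $S_c$ are well defined, this yields $G\cong\Del(H_{u_0}\otimes_{R'}\dots\otimes_{R'}H_{u_{d-1}}\otimes_{R'}C)$ and hence $\Cut(G,k+2^k)\ge d+1$.

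The step I expect to be the main obstacle is precisely the well-definedness of the labels $S_c$: for $S_c$ to describe the neighbourhood $N_{U_v}(c)$ of $c$ within $U_v$ correctly, this neighbourhood must be a union of port-classes of $U_v$, that is, any two vertices of $U_v$ sharing a port label must have identical adjacency to each $c\in C$. This is the familiar principle that a port label records a vertex's external connection type; but since in a plain $\otimes$-decomposition the labellings attached to different nodes are unrelated, the principle does not follow from the equation at $v$ in isolation. I would therefore isolate it as a separate claim and prove it by following the path from $v$ up to the root, tracking how the labelling of $H_v$ (and of each successive ancestor graph) controls the edges that each ancestor operation inserts between $U_v$ and the blocks of $C$ attached at that level. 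Securing this control is the technical heart of the lemma; once it is available, the extension of the relation and the counting of the factors are routine.
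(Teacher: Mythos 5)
Your construction is the same as the paper's: the factors are $H_{u_0},\dots,H_{u_{d-1}}$ together with $C:=G-U_v$, the label set is $[k]$ enlarged by $\PSet([k])$, each $x\in C$ is labelled by the set of port labels of its neighbours in $U_v$, and the relation is $R_v$ extended by the membership pairs. The difference is that the paper simply asserts the resulting identity $G=\Del(C\otimes_{R'}H_{u_0}\otimes_{R'}\dots\otimes_{R'}H_{u_{d-1}})$, whereas you correctly isolate the point on which it hinges: every neighbourhood $N(x)\cap U_v$ with $x\in C$ must be a union of classes of the labelling carried by the factors $H_{u_i}$.

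The difficulty is that this deferred claim is not just the technical heart of the argument; for plain $\otimes$-decompositions it is false, so the ``walk up the tree'' strategy you sketch cannot close the gap. The definition makes the port labels of $H_v$ and of its successors unrelated, and the edges between $U_v$ and a vertex $x\in C$ are inserted at the least common ancestor $w$ of $x$ and $v$, governed by the labelling $\pi_{w_1}$ of the child $w_1$ of $w$ lying above $v$; nothing forces $\pi_{w_1}\restriction U_v$ to be a coarsening of the labellings $\pi_{u_i}$. Concretely, take $k=2$, let $v$ have two singleton successors $y_0,y_1$, both with port label $0$ and $R_v=\emptyset$, let $\pi_v(y_0)=0$ and $\pi_v(y_1)=1$, and let the operation at the root join a third vertex $x$ exactly to $\pi_v^{-1}(0)$. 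Then $y_0$ and $y_1$ share their port label but differ in their adjacency to $x$, so no choice of $S_x\subseteq[2]$ makes your product reproduce $E$. (This does not refute the lemma for this graph, since $\Cut$ ranges over all factorisations, but it refutes the statement your proof rests on.) Your argument does work verbatim for \emph{strong} $\otimes$-decompositions, where each relabelling makes $\pi_{v_{j+1}}\restriction U_{v_j}$ a coarsening of $\pi_{v_j}$, so that all ancestor labels of a vertex of $U_{u_i}$, and hence its neighbourhood in $C$, are determined by its $\pi_{u_i}$-label. Since the paper's own proof silently assumes the same property (it even defines $\varrho(x)$ via $\pi_v$ while letting the factors keep the labels $\pi_{u_i}$), this is a gap you have inherited rather than introduced; both proofs, and the application in Theorem~\ref{Thm: characterisation of orderable graphs of bounded otimes-width}, can be repaired by first passing to a strong decomposition via the lemma bounding $\SWD^\otimes_n$ by $[\WD^\otimes_n]^{n+1}$, at the price of replacing $k$ by $k^{n+1}$ in the bound.
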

\begin{proof}
Suppose that $H_v = \langle U_v,F_v,\pi_v\rangle$.
Let $v \in T$ be a vertex with immediate successors $u_0,\dots,u_{m-1}$.
Hence,
\begin{align*}
  H_v = H_{u_0} \otimes_R\dots\otimes_R H_{u_{m-1}}\,,
\end{align*}
where $\otimes_R$~is the operation at~$v$.
Let $C := G - H_v$, i.e., the subgraph induced by the complement of the set of vertices of~$H_v$.
We claim that
\begin{align*}
  G = C \otimes_{R'} H_{u_0} \otimes_{R'}\dots\otimes_{R'} H_{u_{m-1}}\,,
\end{align*}
for a suitable labelling $\varrho : C \to [k+2^k]$ of~$C$ and
a suitable relation $R' \subseteq [k+2^k] \times [k+2^k]$.
This implies that $m+1 \leq \Cut(G,k+2^k)$, as desired.

It remains to define $\varrho$~and~$R'$.
Fix a bijection $\pi_0 : \PSet([k]) \to [2^k]$ and set $\pi(B) := \pi_0(B) + k$, for $B \subseteq [k]$.
Defining
\begin{align*}
  \varrho(x) &:= \pi(\set{ \pi_v(y) }{ y \in U_v\,,\ (x,y) \in E })\,, \qquad\text{for } x \in C\,,\\[1ex]
  \prefixtext{and}
  R' &:= R \cup \set{ (a,\pi(B)) }{ a \in [k],\ B \subseteq [k],\ a \in B }\,,
\end{align*}
we obtain
$G = C \otimes_{R'} H_{u_0} \otimes_{R'}\dots\otimes_{R'} H_{u_{m-1}}$.
\end{proof}

We obtain the following characterisation
of $\MSO_1$-orderable classes of bounded $n$-depth $\otimes$-width.
\begin{thm}\label{Thm: characterisation of orderable graphs of bounded otimes-width}
Let $\calC$~be a class of graphs such that, for some $n,k \in \bbN$,
\begin{align*}
  \WD^\otimes_n(G) \leq k\,,
  \quad\text{for all } G \in \calC\,.
\end{align*}
The following statements are equivalent\?:
\begin{enumerate}
\item $\calC$~is $\MSO_1$-orderable.
\item $\calC$~has property $\CUT$.
\item There is a constant $d \in \bbN$ such that every $G \in \calC$ has a
  $\otimes$-de\-com\-pos\-i\-tion $(H_v)_{v \in T}$
  of height at most~$n$ and width at most~$k$ where every vertex of~$T$
  has outdegree at most~$d$.
\item $\calC$~is finite.
\end{enumerate}
\end{thm}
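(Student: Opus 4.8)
The plan is to establish the cyclic chain $(4) \Rightarrow (1) \Rightarrow (2) \Rightarrow (3) \Rightarrow (4)$, two parts of which are already available. The implication $(1) \Rightarrow (2)$ is exactly Corollary~\ref{Cor: MSO-orderable implies CUT}, and $(4) \Rightarrow (1)$ follows from the observation (made in the remark after Proposition~\ref{Prop: orderable classes closed under union}) that every finite class of structures is $\MSO$-orderable, applied to $\lfloor\calC\rfloor$. Hence it remains to prove $(2) \Rightarrow (3)$ and $(3) \Rightarrow (4)$.

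For $(2) \Rightarrow (3)$, I would exploit the standing hypothesis of the theorem, namely that $\WD^\otimes_n(G) \leq k$ for every $G \in \calC$. By definition of the $n$-depth $\otimes$-width, each such~$G$ admits a $\otimes$-decomposition $(H_v)_{v \in T}$ of height at most~$n$ and width at most~$k$. Assuming $\calC$ has property $\CUT(f)$, Lemma~\ref{Lem: CUT implies bounded degree of decomposition} bounds the outdegree of every node of~$T$ by $\Cut(G,k+2^k) \leq f(k+2^k)$. Since the width bound~$k$ is fixed for the whole class, the constant $d := f(k+2^k)$ works simultaneously for all $G \in \calC$, and the decompositions just produced witness statement~(3).

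For $(3) \Rightarrow (4)$, the argument is a counting estimate. Fix the constant~$d$ from~(3). An index tree~$T$ of height at most~$n$ in which every node has at most~$d$ immediate successors has at most $N := \sum_{i=0}^{n} d^i$ nodes, and hence at most~$N$ leaves. Since each leaf~$v$ of a $\otimes$-decomposition satisfies $\abs{U_v} = 1$, and the vertex set of $G = H_{\emptyseq}$ is the disjoint union of the singletons~$U_v$ ranging over the leaves (the operation~$\otimes_R$ only adds edges between disjoint vertex sets), every graph in~$\calC$ has at most~$N$ vertices. As the graphs under consideration are finite, simple, and loop-free, there are only finitely many of them up to isomorphism, so $\calC$~is finite.

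None of these steps is deep; the only point requiring care is to verify, in $(2) \Rightarrow (3)$, that the outdegree bound $f(k+2^k)$ depends solely on the fixed width bound~$k$ and not on the individual graph~$G$, so that a single constant~$d$ serves the entire class. Everything else reduces to unwinding the definition of $\WD^\otimes_n$, invoking Lemma~\ref{Lem: CUT implies bounded degree of decomposition}, and the elementary fact that bounded height together with bounded outdegree forces a bounded number of leaves, hence a bounded number of vertices.
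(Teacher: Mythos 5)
Your proposal is correct and follows essentially the same route as the paper: (4)\,$\Rightarrow$\,(1) and (1)\,$\Rightarrow$\,(2) from the earlier results, (2)\,$\Rightarrow$\,(3) via Lemma~\ref{Lem: CUT implies bounded degree of decomposition} applied to a width-$k$ decomposition guaranteed by the hypothesis $\WD^\otimes_n(G) \leq k$, and (3)\,$\Rightarrow$\,(4) by counting leaves of a tree of bounded height and outdegree. Your version of the last step is, if anything, slightly more explicit than the paper's in observing that the vertices of~$G$ are in bijection with the leaves of~$T$.
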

\begin{proof}
(4)~$\Rightarrow$~(1) is trivial and
(1)~$\Rightarrow$~(2) follows from Corollary~\ref{Cor: MSO-orderable implies CUT}.

(2)~$\Rightarrow$~(3)
Suppose that $\calC$~has property $\CUT(f)$.
Let $G \in \calC$ and let $(H_v)_{v \in T}$ be a
$\otimes$-decomposition of~$G$ of height at most~$n$ and width at most~$k$.
Then it follows by Lemma~\ref{Lem: CUT implies bounded degree of decomposition}
that every vertex of~$T$ has less than $d := f(k+2^k)$ immediate successors.

(3)~$\Rightarrow$~(4) Since every tree of height at most~$n$
and maximal outdegree at most~$d$ has at most
$1 + (d-1) + (d-1)^2 +\dots+ (d-1)^{n-1} < d^n$
vertices, it follows that every graph in~$\calC$ has at most that many elements.
\end{proof}

We obtain the following extension of Corollary~\ref{Cor: cographs unorderable}.
\begin{cor}
For every $n,k \in \bbN$,
the class of all graphs of $n$-depth $\otimes$-width at most~$k$ is
hereditarily $\MSO_1$-unorderable.
\end{cor}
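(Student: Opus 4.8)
The plan is to reduce the statement directly to Theorem~\ref{Thm: characterisation of orderable graphs of bounded otimes-width}, which already establishes, for any class of bounded $n$-depth $\otimes$-width, the equivalence between $\MSO_1$-orderability and finiteness. Writing $\calC_{n,k}$ for the class of all graphs $G$ with $\WD^\otimes_n(G) \leq k$, I would need to verify just two things: that $\calC_{n,k}$ is infinite, and that none of its infinite subclasses is $\MSO_1$-orderable.

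First I would check that $\calC_{n,k}$ is infinite for $n,k \geq 1$ (the parameter range in which the assertion is non-vacuous). This is witnessed by the complete graphs: each $K_m$ can be written as $\Del(H_0 \otimes_R \dots \otimes_R H_{m-1})$ where every $H_i$ is a single vertex carrying port label~$0$ and $R = \{(0,0)\}$. This is a $\otimes$-decomposition of height~$1$ and width~$1$, so $\WD^\otimes_n(K_m) \leq \WD^\otimes_1(K_m) \leq 1 \leq k$, using the monotonicity of the $n$-depth $\otimes$-width in the depth parameter. Since the graphs $K_m$ are pairwise non-isomorphic, $\calC_{n,k}$ contains infinitely many graphs. (The edgeless graphs $\Del(1 \otimes_\emptyset \dots \otimes_\emptyset 1)$ would serve equally well.)

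For the second point, let $\calC'$ be an arbitrary infinite subclass of $\calC_{n,k}$. Being a subclass, $\calC'$ still satisfies $\WD^\otimes_n(G) \leq k$ for all $G \in \calC'$, so the hypothesis of Theorem~\ref{Thm: characterisation of orderable graphs of bounded otimes-width} applies to it. That theorem asserts the equivalence of its statements~(1) and~(4), i.e.\ that $\calC'$ is $\MSO_1$-orderable if, and only if, it is finite. As $\calC'$ is infinite, statement~(4) fails, hence so does~(1), and $\calC'$ is not $\MSO_1$-orderable. Since $\calC'$ was arbitrary, no infinite subclass of $\calC_{n,k}$ is $\MSO_1$-orderable; together with the infiniteness of $\calC_{n,k}$ itself, this is precisely the statement that $\calC_{n,k}$ is hereditarily $\MSO_1$-unorderable.

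I expect essentially no obstacle here, as all the substantive work is carried by Theorem~\ref{Thm: characterisation of orderable graphs of bounded otimes-width}. The only points requiring a moment's care are the (trivial) observation that bounded $n$-depth $\otimes$-width is inherited by subclasses, and the verification that the witnessing family of arbitrarily large graphs genuinely admits bounded-width decompositions of bounded height, for which the complete graphs give the cleanest example. This argument mirrors exactly the proof of Corollary~\ref{Cor: cographs unorderable}, with Theorem~\ref{Thm: characterisation of orderable graphs of bounded otimes-width} now playing the role there taken by Theorem~\ref{Thm: cographs}.
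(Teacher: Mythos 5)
Your proof is correct and is exactly the argument the paper intends: the corollary is stated without proof because it follows immediately from the equivalence (1)$\Leftrightarrow$(4) of Theorem~\ref{Thm: characterisation of orderable graphs of bounded otimes-width} applied to an arbitrary infinite subclass, just as Corollary~\ref{Cor: cographs unorderable} follows from Theorem~\ref{Thm: cographs}. Your extra care in exhibiting the cliques as witnesses that the class is infinite (and in flagging the degenerate cases $n=0$ or $k=0$) is a harmless refinement of the same route.
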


\section{Reductions between difficult cases}   
\label{Sect: reductions}

In this section we consider classes of graphs for which the question of
orderability is as hard as in the general case.
\begin{defi}
Let $G = \langle V,E\rangle$ be a graph.

(a) The \emph{incidence graph} of~$G$ is the graph $\Inc(G) := \langle V \cup E, I,P\rangle$
where the edge relation
\begin{align*}
  I := \mathrm{inc} \cup \mathrm{inc}^{-1}
    = \set{ (x,y) }{ x \text{ is an end-vertex of } y \text{ or } y \text{ is an end-vertex of } x }
\end{align*}
is the symmetric version of the incidence relation
and $P := V$ is a unary relation identifying the vertices of~$G$.

(b) The \emph{incidence split graph} of~$G$ is the graph $\IS(G) := \langle V \cup E, J\rangle$
where
\begin{align*}
  J := I \cup \set{ (x,y) \in V \times V }{ x \neq y }
\end{align*}
and $I$~is the symmetric incidence relation from~(a).
Note that $\IS(G)$ is a split graph.

(c) For a class of graphs~$\calC$, we set
\begin{align*}
  \Inc(\calC) := \set{ \Inc(G) }{ G \in \calC }
  \qtextq{and}
  \IS(\calC) := \set{ \IS(G) }{ G \in \calC }\,.
\end{align*}
\upqed
\closingmark\end{defi}

The proposition below suggests that obtaining
a characterisation of $\MSO_1$-orderability for classes of split graphs is as
hard as obtaining one of $\MSO_2$-orderability for arbitrary classes of graphs.
We start with a technical lemma.
\begin{lem}
Let $\calC$~be a class of graphs.
\begin{enumerate}[label=\upshape(\alph*)]
\item $\calC$~has property $\SEP$ if, and only if, $\Inc(\calC)$ has property $\SEP$.
\item $\Inc(\calC)$~has property $\CUT$ if, and only if, $\IS(\calC)$ has property $\CUT$.
\end{enumerate}
\end{lem}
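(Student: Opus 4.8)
The plan is to prove both equivalences by comparing the relevant combinatorial quantities on $\Inc(G)$, $\IS(G)$, and~$G$ directly, using that $\Inc(G)$ is essentially a subdivision of~$G$ (each edge becomes an edge-node of degree~$2$) and that $\IS(G)$ is obtained from $\Inc(G)$ by turning the vertex set~$V$ into a clique and forgetting the colour~$P$.

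For part~(a), I would analyse how the connected components of $\Inc(G) - S$ relate to those of~$G$. Writing $S_V := S \cap V$ and $S_E := S \cap E$ for a set $S \subseteq V \cup E$ with $\abs{S} \leq k$, the components of $\Inc(G) - S$ split into two kinds. First there are the isolated edge-nodes, i.e.\ the edges of~$G$ both of whose endpoints lie in~$S_V$; there are at most $\binom{k}{2}$ of these. Second, every other component contains a vertex of~$V$, and two such vertices $u,w$ are joined in $\Inc(G) - S$ exactly when there is a walk from~$u$ to~$w$ in~$G$ avoiding the vertices of~$S_V$ and the edges of~$S_E$. Hence the components of the second kind correspond bijectively to the components of the graph obtained from $G - S_V$ by deleting the edges in~$S_E$; since deleting $\abs{S_E} \leq k$ edges increases the number of components by at most~$k$, there are at most $\Sep(G,k) + k$ of them. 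This yields $\Sep(\Inc(G),k) \leq \Sep(G,k) + k + \binom{k}{2}$, so property $\SEP$ for~$\calC$ transfers to $\Inc(\calC)$. For the converse one takes $S \subseteq V$; now no edge-node is deleted, and the same walk argument shows that the components of $G - S$ are in bijection with the components of $\Inc(G) - S$ meeting~$V$, whence $\Sep(G,k) \leq \Sep(\Inc(G),k)$.

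For part~(b), the idea is to convert a $\otimes_R$-factorisation of one graph into a factorisation of the other with the \emph{same number of factors}, at the cost of doubling the number of port labels. Given a factorisation $\Inc(G) \cong \Del(H_0 \otimes_R \dots \otimes_R H_{n-1})$ with ports in~$[k]$, I would refine each port label~$a$ to a pair $(a,\epsilon)$, where $\epsilon = 1$ if the vertex lies in~$V$ (which is recorded by~$P$) and $\epsilon = 0$ otherwise, so that the new labels lie in~$[2k]$ and separate $V$-vertices from edge-nodes. After this refinement, replacing each factor by the induced subgraph of $\IS(G)$ on its vertex set and adjoining to~$R$ all pairs $((a,1),(b,1))$ — which create precisely the missing clique edges between $V$-vertices in distinct factors — gives a factorisation of $\IS(G)$ into $n$ factors with ports in~$[2k]$, so $\Cut(\IS(G),2k) \geq \Cut(\Inc(G),k)$. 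Conversely, starting from a factorisation of $\IS(G)$ and refining labels in the same way (using the known partition $V \cup E$), I would keep only the incidence pairs $((a,1),(b,0))$ and $((a,0),(b,1))$ of~$R$, replace each factor by the induced subgraph of $\Inc(G)$ carrying the colour~$P$, and thereby delete both the within-factor and the across-factor clique edges; this produces a factorisation of $\Inc(G)$ with ports in~$[2k]$, so $\Cut(\Inc(G),2k) \geq \Cut(\IS(G),k)$. Both inequalities together give the claimed equivalence, with the witnessing function changed only by $k \mapsto 2k$.

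The main obstacle is the bookkeeping in part~(b): because the edges of a $\otimes_R$-product are generated \emph{uniformly} from port labels, one cannot add or remove the clique on~$V$ unless the labels already distinguish $V$-vertices from edge-nodes. Refining labels by the bit~$\epsilon$ is exactly what supplies this distinction, and the delicate step is to verify that after refinement the relation~$R$ reproduces precisely the incidence edges and no spurious $V$–$V$ or edge–edge edges (which is forced by the assumption that the original factorisation is correct), so that adjoining or discarding the pairs $((a,1),(b,1))$ has the intended effect while the factor count, and hence $\Cut$ up to the substitution $k \mapsto 2k$, is preserved.
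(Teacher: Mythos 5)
Your proposal is correct and follows essentially the same route as the paper: part (a) by relating components of $\Inc(G)-S$ to those of~$G$ with an additive $O(k^2)$ correction for edge-nodes, and part (b) by doubling the port labels with a bit distinguishing $V$-vertices from edge-nodes and adding or discarding the all-$(1,1)$ pairs to install or remove the clique on~$V$. The only (immaterial) difference is in the forward direction of~(a), where you bound the effect of the edges in $S\cap E$ by deleting them (at most $+k$ components) rather than, as the paper does, replacing each by an endpoint.
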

\begin{proof}
(a) $(\Leftarrow)$
Suppose that $\Inc(\calC)$~has property $\SEP(f)$, for some $f : \bbN \to \bbN$.
We claim that $\calC$~also has property $\SEP(f)$.
Let $G = \langle V,E\rangle$ be a graph in~$\calC$.
To compute $\Sep(G,k)$ consider a set $S \subseteq V$ of cardinality
$\abs{S} \leq k$.
Let $C_0,\dots,C_{m-1}$ be the connected components of $G - S$.
Then the connected components of $\Inc(G) - S$ are
$C'_0,\dots,C'_{m-1},e_0,\dots,e_{n-1}$
where $e_0,\dots,e_{n-1}$ is an enumeration of the edges of~$G[S]$
and $C'_i$~is the induced subgraph of~$\Inc(G)$ that is obtained from
$\Inc(C_i)$ by adding (as vertices) all edges of~$G$
connecting a vertex in~$S$ to some vertex of~$C_i$.
It follows that
\begin{align*}
  \Sep(G,k) \leq \Sep(\Inc(G),k) \leq f(k)\,.
\end{align*}

$(\Rightarrow)$
Suppose that $\calC$~has property $\SEP(f)$, for some $f : \bbN \to \bbN$.
Let $G = \langle V,E\rangle$ be a graph in~$\calC$ with
$\Inc(G) = \langle V \cup E,I,P\rangle$.
To compute $\Sep(\Inc(G),k)$ we consider a set $S \subseteq V \cup E$ of size
$\abs{S} \leq k$.
For each edge $e \in S \cap E$, we select one end-vertex.
Let $X$~be the set of these end-vertices and set $S' := (S \setminus E) \cup X$.
Then $\Inc(G) - S'$ has at least as many connected components as $\Inc(G) - S$.
Since $S' \subseteq V$ it follows by what we have seen above that
$\Inc(G) - S'$ has at most $m + \binom{k}{2}$ connected components, where $m$~is the
number of connected components of $G - S'$. Consequently,
\begin{align*}
  \Sep(\Inc(G),k) \leq \Sep(G,k) + \frac{k}{2}(k-1)\,.
\end{align*}
It follows that $\Inc(\calC)$ has property $\SEP(f')$ for the function~$f'$
such that $f'(k) = f(k) + \frac{k}{2}(k-1)$.

(b) $(\Rightarrow)$
Suppose that $\Inc(\calC)$~has property $\CUT(f)$, for some $f : \bbN \to \bbN$.
Let $\Inc(G) = \langle V \cup E,I,P\rangle$ be a graph in $\Inc(\calC)$
and let $\IS(G) = \langle V \cup E,J\rangle$.
To compute $\Cut(\IS(G),k)$ suppose that
\begin{align*}
  \IS(G) = \Del(H_0 \otimes_R \dots\otimes_R H_{m-1})\,,
\end{align*}
for $k$-labelled graphs $H_0,\dots,H_{m-1}$ and a relation $R \subseteq [k] \times [k]$.
Suppose that $H_i = \langle U_i,J_i\rangle$, for $i < m$, and let
$\pi_i$~be the labelling of~$H_i$.
We set $H'_i := \langle U_i,I_i,P_i\rangle$ where $I_i := J_i \setminus (V \times V)$
and $P_i := U_i \cap V$. We label $H'_i$ by
\begin{align*}
  \pi'_i(v) := \begin{cases}
                  \pi_i(v)   &\text{if } v \notin V\,, \\
                  \pi_i(v)+k &\text{if } v \in V\,.
                \end{cases}
\end{align*}
Then $\Inc(G) = \Del(H'_0 \otimes_{R'} \dots\otimes_{R'} H'_{m-1})$, where
\begin{align*}
  R' := \set{ (x,y),(x+k,y),(x,y+k) }{ (x,y) \in R }\,.
\end{align*}
Consequently, $\Cut(\IS(G),k) \leq \Cut(\Inc(G),2k) \leq f(2k)$.

$(\Leftarrow)$
Suppose that $\IS(\calC)$~has property $\CUT(f)$, for some $f : \bbN \to \bbN$.
Let $\Inc(G) = \langle V \cup E,I,P\rangle$ be a graph in $\Inc(\calC)$
and let $\IS(G) = \langle V \cup E,J\rangle$.
To compute $\Cut(\Inc(G),k)$ suppose that
\begin{align*}
  \Inc(G) = \Del(H_0 \otimes_R \dots\otimes_R H_{m-1})\,,
\end{align*}
for $k$-labelled graphs $H_0,\dots,H_{m-1}$ and a relation $R \subseteq [k] \times [k]$.
Suppose that $H_i = \langle U_i,I_i,P_i\rangle$, for $i < m$, and let
$\pi_i$~be the labelling of~$H_i$.
We define the graph $H'_i := \langle U_i,J_i\rangle$
where $J_i := I_i \cup \set{ (x,y) }{ x,y \in P_i,\ x \neq y }$
with labelling
\begin{align*}
  \pi'_i(v) := \begin{cases}
                  \pi_i(v)     &\text{if } v \in V\,, \\
                  \pi_i(v) + k &\text{if } v \notin V\,.
                \end{cases}
\end{align*}
Then $\IS(G) = \Del(H'_0 \otimes_{R'} \dots\otimes_{R'} H'_{m-1})$, where
\begin{align*}
  R' := ([k] \times [k]) \cup
        \set{ (x,y),(x+k,y),(x,y+k),(x+k,y+k) }{ (x,y) \in R }\,.
\end{align*}
Consequently, $\Cut(\Inc(G),k) \leq \Cut(\IS(G),2k) \leq f(2k)$.
\end{proof}

\begin{prop}
Let $\calC$~be a class of graphs.
\begin{enumerate}[label=\upshape(\alph*)]
\item $\calC$~is $\MSO_2$-orderable if, and only if, $\IS(\calC)$ is $\MSO_1$-orderable.
\item $\calC$~has property $\SEP$ if, and only if, $\IS(\calC)$ has property $\CUT$.
\end{enumerate}
\end{prop}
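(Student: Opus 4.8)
The plan is to establish the two equivalences by combining the structural relationship between a graph and its incidence split graph with the machinery already developed. For part~(b) I would simply invoke the preceding lemma. That lemma proves that $\calC$ has property $\SEP$ iff $\Inc(\calC)$ has property $\SEP$ (part~(a) of the lemma) and that $\Inc(\calC)$ has property $\CUT$ iff $\IS(\calC)$ has property $\CUT$ (part~(b) of the lemma). Chaining these two equivalences gives immediately that $\calC$ has property $\SEP$ iff $\IS(\calC)$ has property $\CUT$, which is exactly part~(b) of the proposition. So part~(b) costs nothing beyond citing the lemma.

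For part~(a), the forward direction $(\Rightarrow)$ is the easier half. If $\calC$ is $\MSO_2$-orderable, then $\lceil\calC\rceil$ is $\MSO$-orderable, i.e.\ we can define a linear order on $V \cup E$. The key observation is that the incidence split graph $\IS(G) = \langle V \cup E, J\rangle$ has the \emph{same} universe $V \cup E$, and its edge relation $J$ is obtained from the symmetric incidence relation~$I$ by adding all edges between distinct vertices of~$V$. The set~$V$ is definable in $\IS(G)$ (it is one side of the split partition, namely the clique), and from~$V$ together with~$J$ one can recover the incidence structure $\lceil G\rceil$ by an $\MSO$-interpretation: the vertices are the elements of~$V$, the edges are the elements outside~$V$, and incidence is $J$ restricted to $V \times (V\cup E)\setminus(V\times V)$. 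Thus the ordering formula for $\lceil\calC\rceil$ can be pulled back along this interpretation (using the Backwards Translation Lemma and, if needed, the split partition~$V$ as one extra parameter) to yield an $\MSO_1$-ordering formula for $\IS(\calC)$.

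The converse direction $(\Leftarrow)$ is where I expect the main work to lie, and it is the dual interpretation. Suppose $\IS(\calC)$ is $\MSO_1$-orderable, so we can $\MSO$-define an order on $\lfloor\IS(G)\rfloor = \langle V\cup E, J\rangle$. I want to transport this to an $\MSO$-order on $\lceil G\rceil = \langle V\cup E,\mathrm{inc}\rangle$. The difficulty is that $\IS(G)$ carries \emph{more} edges than $\lceil G\rceil$ (all the clique edges inside~$V$), so I must check that the relation $J$ is itself $\MSO$-definable from the incidence structure $\lceil G\rceil$: indeed $J(x,y)$ holds iff $x,y$ are incident in $\lceil G\rceil$, or both $x$ and $y$ are vertices (expressible since ``being a vertex'' means lying in the range of~$\mathrm{inc}$) and $x\neq y$. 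Hence there is a domain-preserving quantifier-free transduction from $\lceil G\rceil$ to $\lfloor\IS(G)\rfloor$, and composing the given $\IS$-ordering formula with this transduction produces an $\MSO$-formula ordering $\lceil G\rceil$, which is precisely $\MSO_2$-orderability of~$\calC$. The only subtlety to watch is the treatment of parameters: the ordering of $\IS(G)$ may use parameters that are arbitrary subsets of $V\cup E$, and since the two structures share the universe these parameters transfer verbatim, so no extra bookkeeping is needed. Once both directions are in place, part~(a) follows, and together with part~(b) the proposition is complete.
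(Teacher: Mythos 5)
Your part~(a) is fine and matches the paper, which dismisses it as ``a routine construction''\?: the two structures share the universe $V \cup E$, each edge/incidence relation is definable from the other (using $V$ as a parameter, or the fact that the vertices are exactly the elements not in the range of $\mathrm{inc}$), and parameters transfer verbatim. Two small inaccuracies\?: vertices lie in the \emph{domain} of $\mathrm{inc}$, not its range (and isolated vertices lie in neither, so the robust definition is ``not in the range''), and the transduction recovering $J$ from $\lceil G\rceil$ is not quantifier-free since identifying the vertices needs a quantifier or a parameter -- but a basic $\MSO$-transduction suffices for the Backwards Translation Lemma, so nothing breaks.

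Part~(b), however, has a genuine gap. The two equivalences you cite from the preceding lemma are\?: (i)~$\calC$ has $\SEP$ iff $\Inc(\calC)$ has $\SEP$, and (ii)~$\Inc(\calC)$ has $\CUT$ iff $\IS(\calC)$ has $\CUT$. These do \emph{not} chain\?: the middle terms are different properties of $\Inc(\calC)$ ($\SEP$ in one, $\CUT$ in the other), so you still owe the link ``$\Inc(\calC)$ has $\SEP$ iff $\Inc(\calC)$ has $\CUT$''. This is not automatic -- the paper explicitly notes that $\SEP$ does not imply $\CUT$ in general (a large complete bipartite graph has $\SEP$ with small separators but unbounded $\Cut$). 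The paper closes the gap by observing that $\Inc(\calC)$ is a class of $2$-sparse graphs, and for $r$-sparse classes the two properties coincide by Lemma~\ref{Lem: CUT => SEP} together with Lemma~\ref{Lem: sparse SEP => CUT}. Without this sparseness step your claim that part~(b) ``costs nothing beyond citing the lemma'' is incorrect.
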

\begin{proof}
(a) is a routine construction.
(b) follows by the preceding lemma since $\Inc(\calC)$ is $2$-sparse
and, by Lemmas \ref{Lem: CUT => SEP}~and~\ref{Lem: sparse SEP => CUT},
such a class has property $\SEP$ if, and only if, it has property $\CUT$.
\end{proof}

\begin{cor}
Let $\calP$~be a graph property such that a class of split graphs is $\MSO_1$-orderable
if, and only if, it has properties $\CUT$ and $\calP$.
Then a class of arbitrary graphs is $\MSO_2$-orderable if, and only if,
it has properties $\SEP$ and $\IS^{-1}(\calP)$.
\end{cor}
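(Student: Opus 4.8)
The plan is to chain together the two equivalences from the preceding proposition with the assumed characterisation for split graphs. The crucial observation is that, for an arbitrary class~$\calC$ of graphs, the image $\IS(\calC)$ is always a class of split graphs (this is noted in the definition of~$\IS$), so the hypothesis on~$\calP$ applies directly to it.

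First I would apply part~(a) of the preceding proposition to reduce $\MSO_2$-orderability of~$\calC$ to $\MSO_1$-orderability of~$\IS(\calC)$. Then, since $\IS(\calC)$ consists of split graphs, the assumed characterisation tells me that $\IS(\calC)$ is $\MSO_1$-orderable if, and only if, it has both property~$\CUT$ and property~$\calP$. At this point $\MSO_2$-orderability of~$\calC$ is equivalent to $\IS(\calC)$ having $\CUT$ and~$\calP$.

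It remains to translate these two conditions back to properties of~$\calC$. For the first, part~(b) of the preceding proposition gives that $\IS(\calC)$ has property~$\CUT$ exactly when $\calC$ has property~$\SEP$. For the second, the property $\IS^{-1}(\calP)$ is, by definition, the property held by~$\calC$ precisely when $\IS(\calC)$ has property~$\calP$. Substituting these two equivalences yields that $\calC$ is $\MSO_2$-orderable if, and only if, it has both property~$\SEP$ and property~$\IS^{-1}(\calP)$, which is the desired statement.

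Since every step is a direct application of an already established equivalence, there is no genuine obstacle in the argument; the proof is a purely formal deduction. The only point that needs to be verified with care is that the hypothesis on~$\calP$ is stated for classes of split graphs and that $\IS(\calC)$ indeed always falls into this category—but this is immediate from the definition of~$\IS$, which guarantees that each $\IS(G)$ is a split graph.
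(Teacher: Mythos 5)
Your proof is correct and is exactly the argument the paper intends: the corollary is stated without an explicit proof precisely because it is the formal chaining of parts (a) and (b) of the preceding proposition with the assumed characterisation of $\MSO_1$-orderability for split graphs, using that each $\IS(G)$ is a split graph and that $\IS^{-1}(\calP)$ is by definition the pullback of $\calP$ along $\IS$.
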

\begin{rem}
(a) Characterising $\MSO_2$-orderable classes therefore amounts to
characterising $\MSO_1$-orderable classes of split graphs contained
in the image of the function~$\IS$.

(b) If $\calC$~is a class of graphs with property $\SEP$ that is not $\MSO_2$-orderable,
then $\IS(\calC)$ is a class of split graphs with property $\CUT$ that is not $\MSO_1$-orderable.
\closingmark\end{rem}

We also present a lemma suggesting that finding a characterisation of
$\MSO_1$-order\-ability for classes of bipartite graphs is as hard as
finding a characterisation of $\MSO_1$-order\-ability
for arbitrary classes of graphs. We leave the proof -- which is similar to the one above -- to the reader.
\begin{defi}
For a graph $G = \langle V,E\rangle$ we define
\begin{align*}
  \mathrm{BP}(G) := \langle V \times [4], E'\rangle
\end{align*}
where
\begin{align*}
  E' := {} &\bigset{ ((x,0),(y,3)) }{ (x,y) \in E } \\
{} \cup {} &\bigset{ ((x,i),(x,i+1)) }{ x \in V,\ 0 \leq i < 3 }\,.
\end{align*}
For classes~$\calC$ of graphs, we define
$\mathrm{BP}(\calC) := \set{ \mathrm{BP}(G) }{ G \in \calC }$ as usual.
\closingmark\end{defi}
\begin{lem}
Let $\calC$ be a class of graphs.
\begin{enumerate}[label=\upshape(\alph*)]
\item $\calC$~is $\MSO_1$-orderable if, and only if, $\mathrm{BP}(\calC)$ is $\MSO_1$-orderable.
\item $\calC$~has property $\CUT$ if, and only if, $\mathrm{BP}(\calC)$ has property $\CUT$.
\end{enumerate}
\end{lem}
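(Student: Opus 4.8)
The statement to prove concerns the transformation $\mathrm{BP}(G)$, which replaces every vertex of $G$ by a path of length~$3$ (four vertices $(x,0),\dots,(x,3)$) and encodes each edge $(x,y)\in E$ as a single edge $((x,0),(y,3))$ between the endpoints of two such paths. The resulting graph is bipartite, and the plan is to show that this gadget is "transparent" to both $\MSO_1$-orderability and the $\CUT$ property. The essential point is that $\mathrm{BP}$ is invertible by an $\MSO_1$-transduction: from $\mathrm{BP}(G)$ one can recover $\lfloor G\rfloor$ definably, since the original vertices $x$ are exactly the degree-$4$-path endpoints and the edges of $G$ are recoverable from the length-$3$ connections. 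Thus the construction should parallel the $\IS$ lemma proved just above.

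For part~(a), I would argue in both directions using transductions. For $(\Rightarrow)$, given a formula $\varphi(x,y;\bar Z)$ ordering $\lfloor\calC\rfloor$, I would define an order on $\mathrm{BP}(G)$ by first using parameters to mark the four "layers" $V\times\{0\},\dots,V\times\{3\}$ (these are $\MSO_1$-definable as the classes of the path structure), then ordering the copies of each vertex by the inherited order on $G$, breaking ties by the layer index. Since $\lfloor G\rfloor$ is interpretable in $\lfloor\mathrm{BP}(G)\rfloor$ by a quantifier-free (or low quantifier-rank) transduction, the Backwards Translation Lemma converts $\varphi$ into a formula on $\mathrm{BP}(G)$. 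For $(\Leftarrow)$, conversely, from an order on $\mathrm{BP}(G)$ one recovers an order on $V$ by restricting to, say, the layer $V\times\{0\}$, which is definable; the parameters transfer along the interpretation. This is the routine direction the authors invite the reader to fill in.

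For part~(b), I would mimic the two inclusions in the $\IS$ lemma verbatim. Given a decomposition $\mathrm{BP}(G)=\Del(H_0\otimes_R\dots\otimes_R H_{m-1})$ witnessing $\Cut(\mathrm{BP}(G),k)$, I would collapse each path gadget to obtain a $\otimes_R$-decomposition of $\lfloor G\rfloor$ of comparable width, by relabelling each original vertex with the combined port-information of its four path-copies; conversely, given a decomposition of $G$, I would expand it to one of $\mathrm{BP}(G)$ by refining the port labels to also record the layer index, adjusting $R$ to add the intra-path edges $((x,i),(x,i+1))$ and to route edges through layers $0$ and~$3$. Each direction yields a bound of the form $\Cut(\mathrm{BP}(G),k)\leq\Cut(G,ck)$ and vice versa, for a small constant~$c$ absorbing the extra port labels, so $\CUT$ is preserved in both directions.

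The main obstacle is purely bookkeeping in part~(b): the $\otimes_R$ operation uses a \emph{single} relation $R$ uniformly across all factors, so when expanding a decomposition of $G$ into one of $\mathrm{BP}(G)$ I must be careful that the intra-gadget path edges and the cross-gadget edges (present only between layers $0$ and~$3$) can all be realised by one common relation $R'$ on a refined port-label set $[k]\times[4]$. This forces the layer index into the port labels and requires checking that no spurious edges are created between vertices that should be nonadjacent—exactly the kind of case analysis carried out for $R'$ in the $\IS$ lemma above. Once the refined port alphabet and $R'$ are written down correctly, both inequalities follow immediately, so no genuinely new idea beyond the $\IS$ construction is needed.
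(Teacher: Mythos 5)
The paper gives no proof of this lemma (it is explicitly left to the reader as ``similar'' to the $\IS$ lemma), so I can only judge your proposal on its merits and against that template. Part~(a) is fine: with parameters marking the four layers, $\lfloor G\rfloor$ and $\lfloor\mathrm{BP}(G)\rfloor$ are each interpretable in the other (the latter via the $4$-copying transduction $\mathrm{BP}$ itself), and Backwards Translation transfers an ordering formula in both directions exactly as you describe.

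Part~(b), however, has a genuine gap, and it is not where you locate it. You identify the ``main obstacle'' as realising the intra-gadget path edges through the single uniform relation $R'$ when expanding a decomposition of $G$ to one of $\mathrm{BP}(G)$. That direction is actually unproblematic: for a factor $U_i$ of $G$ one takes $U_i\times[4]$ as a factor of $\mathrm{BP}(G)$, puts the path edges \emph{inside} that factor (they never cross factors, so they should not be routed through $R'$ at all --- putting pairs $((a,i),(a,i{+}1))$ into $R'$ as you suggest would create spurious edges between gadgets of distinct vertices in different factors), and lets $R'$ handle only the layer-$0$/layer-$3$ cross-factor edges. The real difficulty is the reverse direction, where you ``collapse each path gadget'': a decomposition $\mathrm{BP}(G)=\Del(H_0\otimes_R\dots\otimes_R H_{m-1})$ witnessing $\Cut(\mathrm{BP}(G),k)$ need not respect the gadgets, i.e.\ the four vertices $(x,0),\dots,(x,3)$ may be scattered over several factors, and then ``the factor containing $x$'' is undefined. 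To repair this you must (i)~show that at most $O(k^2)$ gadgets are split --- for a split gadget some path edge $((x,i),(x,i{+}1))$ is created by $R$, and since $\mathrm{BP}(G)$ has no edge $((x,i),(y,i{+}1))$ for $x\neq y$ and $i\in\{0,1,2\}$, at most two gadgets can be split at a given position $i$ with a given label pair in $R$, giving a bound of roughly $6k^2$ --- and then (ii)~merge (using associativity of $\otimes_R$) the factors meeting a common gadget, losing only $O(k^2)$ factors, before collapsing with labels in $[k]^4$. Without this counting-and-merging step the collapse is not a construction, and this is precisely the point where the $\mathrm{BP}$ case genuinely differs from the $\IS$ lemma, whose factors live on the same vertex set and need no regrouping.
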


\section{Conclusion}   

For arbitrary classes of graphs, it is difficult to obtain necessary and sufficient
conditions for $\MSO_i$-orderability, as there are many different ways to
construct $\MSO$-definable orderings depending on many different structural
properties of the considered graphs.
General conditions should thus cover simultaneously a large number of
possibilities.
It is therefore necessary to consider particular graph classes.
We have obtained necessary and sufficient conditions in Theorems
\ref{Thm: GSO-orderable with excluded minor},
\ref{Thm: GSO-orderable d-partite graphs},
\ref{Thm: orderability of split graphs}, and~\ref{Thm: cographs}
with corresponding decidability results for the VR-equational classes of graphs.

Concerning future work, we think that the following questions should be
fruitfully investigated\?:

(a) Does Conjecture~\ref{Conj: VR-equational with SEP is orderable} hold\??
We have already proved several special cases and more cases seem to be within
reach.
It remains to be seen whether the full conjecture can be solved.

(b) Which condition must be added to the property $\SEP$ to yield a necessary
and sufficient condition for $\MSO_2$-orderability of a class of cographs\??
And more generally, for graph classes of bounded clique-width\??

(c) What could be an extension of Theorem~\ref{Thm: cographs}, say,
for classes of `bounded strong $\otimes$-width'\??

(d) Which operations do preserve $\MSO_i$-orderability\?? Candidates include
the operations defining tree-width or clique-width, graph substitutions, and
monadic second-order transductions.
We presented a few simple results in
Proposition~\ref{Prop: orderable classes closed under union} and
Remark~\ref{Rem: orderable closed under otimesR},
but it should not be too hard to develop a more comprehensive theory.

{\small
}
\end{document}